\pgfplotsset{compat = newest}
\def\th@plain{%
	\thm@notefont{}
	\itshape 
}
\def\th@definition{%
	\thm@notefont{\normalfont\itshape}
	\normalfont 
}
\newtheoremstyle{mydefinition}
{5pt}
{2ex}
{\setlength{\parskip}{0.7ex}}
{}
{\bfseries}
{.}
{\newline}
{\thmname{#1}\thmnumber{ #2}\thmnote{ \normalfont({\itshape{#3}})}}
\newtheoremstyle{mynotes}
{5pt}
{2ex}
{\setlength{\parskip}{0.7ex}}
{}
{\bfseries}
{.}
{0.5em}
{}
\newtheoremstyle{mytheorem}
{10pt}
{0pt}
{\itshape}
{}
{\bfseries}
{.}
{0.5em}
{}
\newenvironment{claimproof}[1]{\par\noindent\underline{Proof:}\space#1}{\hfill $\blacksquare$}
\theoremstyle{mytheorem}
\newtheorem{theorem}{Theorem}
\newtheorem{lemma}[theorem]{Lemma}
\newtheorem{corollary}[theorem]{Corollary}
\theoremstyle{mydefinition}
\newtheorem{definition}{Definition}
\theoremstyle{mynotes}
\newtheorem{remark}[theorem]{Remark}
\newtheorem{claim}{Claim}
\definecolor{cost}{rgb}{0.95,0.45,0}
\definecolor{capa}{rgb}{0,0,0.7}
\definecolor{transit}{RGB}{74,127,203}
\definecolor{combi-cyan}{RGB}{0,170,170}
\definecolor{combi-darkcyan}{RGB}{0,100,100}
\definecolor{combi-orange}{RGB}{250,170,30}
\definecolor{combi-green}{RGB}{180,210,50}
\newcommand{\tauuc}[5]{
    edge[#5] node[yshift=-0.5mm, #4] {{%
        \textcolor{transit}{#1}%
        \if\relax\detokenize{#2}\relax%
            \else%
            \if\relax\detokenize{#1}\relax\else{\,/\,}\fi%
        \fi%
        \textcolor{capa}{#2}%
        \if\relax\detokenize{#3}\relax%
            \else%
            \if\relax\detokenize{#2}\relax%
                \if\relax\detokenize{#1}\relax%
                    \else{\,/\,}\fi%
                \else{\,/\,}\fi%
        \fi%
        \textcolor{cost}{#3}}}
}
\renewcommand{\geq}{\geqslant}
\renewcommand{\leq}{\leqslant}
\newif\ifhideproofs
\newcommand{\tp}{\ensuremath{\theta}\xspace} 
\newcommand{\T}{\ensuremath{T}\xspace} 
\newcommand{\auxtp}{\xi} 
\newcommand{\extp}{\hat\tp} 
\newcommand{\ua}[1][a]{u_{#1}} 
\newcommand{\ca}[1][a]{c_{#1}} 
\newcommand{\xa}[1][a]{x(#1)} 
\newcommand{\trt}[1][a]{\tau_{#1}} 
\newcommand{\trtp}[1][p]{\trt[](#1)} 
\newcommand{\val}[1][f]{\abs{\,#1}} 
\newcommand{\auxcost}{\gamma} 
\newcommand{\cmax}[1][f]{c^{\text{max}}(#1)} 
\newcommand{\suma}{\sum_{a \in A}} 
\newcommand{\auxG}{\overline{G}} 
\newcommand{\net}{\ensuremath{\mathcal{N}}\xspace} 
\newcommand{\netwDef}{(G, \ua[], \trt[], \ca[])}
\newcommand{\dem}{\ensuremath{D}\xspace}
\newcommand{\paths}{\ensuremath{\mathcal{P}}}
\newcommand{\pospaths}[1][\pflowvar]{\paths^{+}_{#1}}
\newcommand{\cycles}{\ensuremath{\mathcal{C}}}
\newcommand{\pflowvar}{y}
\newcommand{\pflow}[1][p]{\pflowvar(#1)}
\newcommand{\flowdecomp}{\pflowvar \colon\paths \cup \cycles \to  \Q_{\geq 0}}
\newcommand{\pathdecomp}{\ensuremath{\pflowvar \colon\paths \to \Q_{\geq 0}}\xspace}
\newcommand{\actpaths}[2][a]{\paths_{#1}(#2)}
\newcommand{\restr}[2][p]{#1\vert_{[#2]}}
\newcommand{\instantcosts}{cost at a time point\xspace}
\newcommand{\peakcosts}{peak cost\xspace}
\newcommand{\minpeakcosts}{minimum-peak-cost\xspace}
\newcommand{\shortdec}{\T-bounded\xspace} 
\newcommand{\TR}{temporally repeated\xspace}
\newcommand{\genProb}{\textsc{MPC-TRF}\xspace}
\newcommand{\SubSum}{\textsc{SubsetSum}\xspace}
\newcommand{\abs}[1]{\left|#1\right|}
\newcommand{\bigabs}[1]{\big|#1\big|}
\newcommand{\N}{\ensuremath{\mathbb{N}}}
\newcommand{\Z}{\ensuremath{\mathbb{Z}}}
\newcommand{\Q}{\ensuremath{\mathbb{Q}}}
\newcommand{\R}{\ensuremath{\mathbb{R}}}
\newcommand{\np}{\ensuremath{\mathcal{NP}}\xspace}
\newcommand{\fromto}[2][1]{\{#1,\ldots,#2\}}
\newcommand{\oneto}[1]{[#1]}
\newcommand{\zeroto}[1]{\fromto[0]{#1}}
\newcommand{\inst}{\ensuremath{\mathcal{I}}\xspace}
\newcommand{\iinst}{\ensuremath{\widetilde{\inst}}\xspace} 
\newcommand{\indic}[1]{\mathbbm{1}_{#1}} 
\newcommand{\flo}[1]{\ensuremath{\left\lfloor{#1}\right\rfloor}}
\renewcommand{\phi}{\varphi}
\newcommand{\True}{\textnormal{True}\xspace}
\newcommand{\False}{\textnormal{False}\xspace}
	\title{Minimum-Peak-Cost Flows Over Time}
	\author[1,*]{Mariia Anapolska}
	\author[2]{Emma Ahrens}
	\author[1]{Christina B\"{u}sing}
	\author[1]{Felix Engelhardt}
	\author[1]{Timo Gersing}
	\author[1]{Corinna Mathwieser}
	\author[1]{Sabrina Schmitz}
	\author[1]{Sophia Wrede}
		\affil[1]{\small Combinatorial Optimization, RWTH Aachen University, 
			Germany}
		\affil[2]{Software Modeling and Verification, RWTH Aachen University, Germany}
		\affil[*]{Corresponding author: \texttt{anapolska@combi.rwth-aachen.de}}
	\date{}
\begin{document}
\maketitle

	\begin{abstract}
		When planning transportation whose operation requires non-consumable resources, the peak demand for allocated resources is often of higher interest than the duration of resource usage.
		For instance, it is more cost-effective to deliver parcels with a single truck over eight hours than to use two trucks for four hours, as long as the time suffices.
		To model such scenarios, we introduce the novel minimum peak cost flow over time problem, whose objective is to minimise the maximum cost at all points in time rather than minimising the integral of costs. 
		We focus on minimising peak costs of temporally repeated flows. 
		These are desirable for practical applications due to their simple structure. 
		This yields the minimum-peak-cost temporally repeated flow problem (\genProb).
		
		We show that the simple structure of temporally repeated flows comes with the drawback of arbitrarily bad approximation ratios compared to general flows over time.
		Furthermore, our complexity analysis shows the integral version of \genProb is strongly $\np$-hard, even under strong restrictions.
		On the positive side, we identify two benign special cases: unit-cost series-parallel networks and networks with time horizon at least twice as long as the longest path in the network (with respect to the transit time). 
		In both cases, we show that integral optimal flows if the desired flow value equals the maximum flow value and fractional optimal flows for arbitrary flow values can be found in polynomial time.
		For each of these cases, we provide an explicit algorithm that constructs an optimal solution.
	\end{abstract}

	\section{Introduction}
Network flows are one of the fundamental models in operations research~\cite{AMO1993,FFbook,Williamson_2019}. 
In most of the models the flows are considered to be static; however, in many applications, time plays a crucial role. 
To integrate the temporal aspect, traditional (capacitated) flow networks can be extended by transit times, which describe the time that flow particles need to traverse an arc~\cite{FF1958, FFbook}.
The resulting models are called \emph{flows over time}, or \textit{dynamic flows} in some early literature.
Similarly to the traditional case, we can also extend flows over time by arccosts, leading to the \emph{min-cost flow over time} problem. 
In general, for min-cost flow over time, cost is measured as the sum over the costs at each time step~\cite{Skutella2009}.

In this work, we propose an alternative objective for min-cost flows over time, namely \emph{min-peak-cost flows over time}.
The motivation for this work originated from bed transports in a hospital, which can be modelled as flows over time.
The effective cost of bed transportation does not depend on the number of patients transported, but on the peak number of staff needed in each shift to perform bed transports. 
That means that the maximum amount of staff needed for transports simultaneously, the \emph{\peakcosts}, is the objective to be minimised. 
For this, we model the required workforce per unit of flow along an arc of a graph by arc costs. The total amount of workforce required at a certain moment during the transportation process is then described by the notion of \emph{\instantcosts}. 
A flow over time that minimises the maximum \instantcosts over the time horizon of the flow, i.e.~the \peakcosts, thus reduces the workforce needed to be reserved for a system.
A similar setting arises whenever some type of transportation, for instance, public transport, is modelled as a flow over time: the \peakcosts determine the minimum amount of resources, e.g.~busses, that need to be available to solve a given transportation problem.
This setting is especially relevant in the scope of disaster relief: naturally, a quick reaction to disasters is desirable; however, scarce resources such as staff and means of transport are usually the bottleneck for transportation planning in disaster areas.

In this contribution, we first formally introduce the problem and then derive complexity results for both the general setting and two special instance classes, as outlined in \Cref{sec:contribution}.

Solutions for instances of flow over time problems might lack helpful structures. 
However, especially if the transportation plan has to be memorised and executed by humans or primitive machines, simple and comprehensive solutions become more relevant. 
One class of flows over time with an intuitive structure and a compact description of solutions are \emph{temporally repeated flows}. 
Here, we are allowed to choose a set of paths connecting the source and sink and have to stick with this choice for the entire time horizon.
Due to their favourable structure, we focus on temporally repeated solutions in this work.

\subsection{Related work} \label{sec:related_work}
{Flows over time, or dynamic flows, were first introduced by Ford and Fulkerson~\cite{FF1958, FFbook}, who established} the maximum dynamic flow problem.  
{The computational complexity of dynamic flow problems depends on the choice of objectives and the existence of arc costs, as we see next.}

\paragraph*{Maximum and quickest flows} 
Ford and Fulkerson show in their seminal work that a flow over time of maximum value is computed in polynomial time \cite{FF1958}. 
In the quickest flow over time problem, the objective is to minimise the arrival time, i.e.~the makespan, for a given flow value; the problem is also solvable in strongly polynomial time~\cite{BurkardDK93,FleischerTardos98}. 
Well-studied extensions of this problem are 
the quickest transshipment problem~\cite{HT2000}, lexicographic flows \cite{Skutella2024,HT1987} and earliest arrival flows~\cite{Gale1959, Minieka73}.
The first two problems admit exact polynomial algorithms, the earliest arrival flow problem an FPTAS;
Skutella gives a more detailed overview~\cite{Skutella2024}.

These algorithms were originally obtained for the discrete time model introduced by Ford and Fulkerson, in which the time is measured in discrete steps of length one.
Fleischer and Tardos introduce a continuous counterpart to the time model and transfer several exact algorithms and approximation schemes to work in the continuous model as well \cite{FleischerTardos98}.

\paragraph*{Flows over time with costs} When arc costs are added to the network, already the minimum-cost maximum flow over time problem is \np-hard, as is finding a minimum-cost maximum \TR flow~\cite{KW04}.
{However}, the minimum-cost flow problem admits an FPTAS~\cite{FS2003}.
Somewhat surprisingly, flipping the objective leads to the polynomial-time solvable quickest minimum cost transshipment problem~\cite{SKUTELLA2023_quickMinTP}.

For bi-objective optimisation of cost and travelling time, Parpalea and Ciurea propose a pseudo-polynomial algorithm~\cite{ParpaleaCiurea11}.
The maximum energy-constrained flow problem, where each node has a bound on the total amount of flow passing through it, is a special case from the complexity theory point of view: not only is the integral decision problem strongly \np-complete, but the optimisation problem is also APX-hard~\cite{BodlaenderTDL08}.
However, for the general, fractional case an FTPAS exists \cite{FeketeHKK08}, and the problem can be solved in (pseudo)-polynomial time for graphs with bounded tree width \cite{BodlaenderTDL08} or uniform transit times \cite{FeketeHKK08}.  
Still, finding an exact solution is generally \np-hard, and solutions using a path representation may require an exponential number of paths~\cite{FeketeHKK08}.

\paragraph*{Temporally repeated flows} A reoccurring challenge in dynamic flow problems is  that solutions may consist of an exponential number of paths, and within each of these paths, flow may take an arbitrary number of different values. 
In this context, Ford and Fulkerson introduce \TR flows -- a special class of flows over time distinguished by a compact representation~\cite{FF1958}. 

Temporally repeated flows can be used to realise maximum and quickest flows~\cite{FF1958, BurkardDK93}. They present "structurally easier solution[s]" to the quickest transshipment problem~\cite{SchloeterSk17}. 
Fleischer and Skutella also use \TR flows to construct a $2$-approximate solution
for quickest transshipment with costs~\cite{FleischerS07-quickest}.
However, for the min-cost maximum flow problem, \TR flows are sub-optimal, and finding them is strongly \np-hard~\cite{KW04}. 

Furthermore, finding robust maximum flows for networks with uncertain transit times is \np-hard in general. An optimal robust \TR flow, in contrast, can be found in polynomial time if the time horizon is sufficiently long~\cite{GottschalkKLPSW18}.
Finally, \TR-flows are a 2-approximation for maximum flow with load-dependent transit times~\cite{KoehlerS05}.

A broader class of problems admits varying, time- or flow-dependent capacities and transit times, as well as flows with infinite time horizons. These research areas are beyond the the scope of this paper. 
For a more detailed overview, see the surveys  \cite{aronson89, POWELL1995}.

\subsection{Our contribution}\label{sec:contribution}
As mentioned in the introduction, minimising peak costs is a reoccurring theme in transportation. 
Nevertheless, to the best of our knowledge, there has been no research on this type of objective in the context of flows over time, and of temporally repeated flows in particular.
We initiate the study of this field by introducing a first formal definition of the Minimum-Peak-Cost Flow problem (MPCF) in~\Cref{sec:preliminaries}.

This work focuses on finding \TR flows with minimum \peakcosts, which we call the \genProb problem.
We show in \Cref{sec:np-hard} that the integral problem is \np-hard already for the maximum flow value on series-parallel networks with simple arc parameters: unit transit times and capacities, and costs with values either zero or one. 
This result is tight in the sense that fixing arc costs to one for all arcs leads to a polynomial-time algorithm for maximum flows on series-parallel graphs, as presented in Section~\ref{sec:unitCosts}. 
This algorithm emerges from a relation between \genProb for maximum demand and the earliest-arrival flow problem.  
In addition, we are able to show that the same strongly-polynomial-time algorithm finds optimal fractional solutions for arbitrary demand on series-parallel graphs.
Analogously, increasing the time horizon leads to a second polynomial-time solvable case discussed in Section~\ref{sec:longT}.
Here we modify the method of Ford and Fulkerson for maximum flows and adjust it to our objective of minimum peak cost.
In both mentioned special cases, we show that the integral version of the problem for general, not maximum, target flow value remains \np-hard, while fractional solutions can be constructed in polynomial time via column generation. 
\Cref{sec:conclusion} gives a summary and
an outlook on further research.

This work is an extended version of the paper appeared in the INOC2024 special issue of Networks~\cite{Anapolska25}, which is itself partially based on Section~5.2 of the master's thesis of Ahrens~\cite{AhrensMA}.
	\section{Preliminaries and definitions}\label{sec:preliminaries}

In this section, we first discuss important notation and preliminaries for flows over time. 
Afterwards, we give a formal definition for \genProb as well as some of its immediate properties.

\subsection{Notation and preliminaries}
For an integer $n\in \N$, we denote by $\oneto{n}$ the set $\{1,\ldots,n\}\subseteq \N$.
Throughout this work, let $G = (V,A)$ be a digraph with node set $V$ and arc set $A \subseteq V^2$. For a node $v \in V$, we denote by $\delta^+(v)$ the set of outgoing arcs and by $\delta^-(v)$ the set of ingoing arcs of $v$. 
A (simple) \emph{path} is a sequence $p=(v_1,\ldots,v_k)$ of pairwise distinct nodes $v_1,\ldots,v_k \in V$ such that two subsequent nodes are adjacent, i.e.~$(v_i,v_{i+1}) \in A$ for $i=1,\ldots,k-1$. 
We use the notation $p\vert_{(v_i,v_j)}$ for $i<j$ to denote the sub-path $(v_i\ldots,v_j)$ of~$p$ between~$v_i$ and~$v_j$. 

We assume that every graph has a distinguished source $s\in V$ and sink~$t\in V$. 
Then, we denote the set of all $s$-$t$~paths in~$G$ with~$\mathcal{P}$ and the set of all cycles with~$\mathcal{C}$.
Moreover, each arc~$a\in A$ is equipped with a capacity~$\ua\in \N$ and a cost~$\ca \in \N_0$.
Before we continue to define networks over time and flows over time, note that we say \emph{static} flow in order to refer to a classical $(s,t)$-flow~$x$ (without a time component) in a network~$(G,u,c)$. We write $\val[x]$ to denote the value of~$x$ and we use the notation~$\flowdecomp$ ($\pathdecomp$) to describe a flow decomposition (path decomposition) of $x$.

In networks over time, we have an additional arc property $\trt \in \N$ called \emph{transit time}.
For a path $p$ in graph~$G$, we define its transit time $\trtp \in \N$, also referred to as \emph{length}, as the sum of the transit times of all arcs of the path, i.e.~$\trtp \coloneqq \sum_{a\in p} \trt$.
We call a graph~$G$ together with the three arc functions $\ua[]$, $\trt[]$ and~$\ca[]$ a \emph{network (over time)} and write $\net= \netwDef$.

There are two common time models used to define flows over time: the discrete and the continuous model. 
In the former model, a flow unit is compactly transported, i.e.~the unit departs as a whole at the one point in time and arrives as a unit too.
In the latter model, the flow is viewed as a collection of infinitesimal particles that are injected into the network at some rate and follow each its own trajectory. 
A unit of flow is the set of particles injected into the network during one unit of time.
The two models are to a great extent equivalent for combinatorial problems \cite{FleischerTardos98}.
We follow the more recent contributions and use the continuous time model in this work.

Given a network $\net= \netwDef$, we define a \emph{flow over time} as follows: 

\begin{definition}[Flow over time \cite{Skutella2009}]\label{def:fot}
	Let $\net= \netwDef$ be a network over time with distinguished terminal vertices $s,t \in V$. 
    An $(s,t)$-flow over time~$f$, from now on called \emph{flow over time} for short, in~$\net$ with time horizon~$\T \geq 0$ consists of Lebesgue-measurable functions~$f_a:\mathbb{R}_0^+ \rightarrow \mathbb{R}_0^+$ for each~$a \in A$, where $f_a(\theta)=0$ for all $\theta > \T- \trt$. 
    The function $f_a$ represents the inflow rate into the arc~$a$ at its head. 
    Furthermore, the flow rates satisfy the following constraints.
    \begin{itemize}
    \item Capacity constraint
		\[
		0 \leq f_a(\theta) \leq \ua \quad\text{ for all } a\in A,\, \theta \in [0,\T);
		\]
	\item Weak flow conservation 
  
		\[ \sum_{a\in \delta^-(v)}\int_{0}^{\tp - \trt} f_a(\auxtp) \,d\auxtp - \sum_{a\in \delta^+(v)}\int_{0}^{\tp} f_a(\auxtp) \,d\auxtp \geq 0 \quad\text{ for all } v \in V\setminus\{s,t\},\theta \in [0,T).
		\]
	\end{itemize}
	\end{definition}
The value of a flow over time is defined as follows. 
\begin{definition}[Value of a flow over time \cite{Skutella2009}]\label{def:flowVal}
		Let $\net= \netwDef$ be a network over time and let~$f=(f_a)_{a \in A}$ be a flow over time in~$\net$ with time horizon~$T \geq 0$.  The \emph{value} of~$f$ is given by the expression
		\[
		\val\coloneqq\sum_{a\in \delta^+(s)}\int_{0}^{T} f_a(\auxtp)\,d\auxtp - \sum_{a\in \delta^-(s)}\int_{0}^{T-\trt} f_a(\auxtp)\,d\auxtp. \]
\end{definition}

A flow over time in a network is called \emph{maximum} for a given time horizon $\T$ if it has the maximum value among all flows over time with time horizon $\T$. 

Temporally repeated flows are a special type of flows over time, in which a static flow is sent repeatedly along the components of its flow decomposition as long as the time horizon allows. More precisely, \emph{temporally repeated flows} are defined as follows.
	\begin{definition}[Temporally repeated flow; \cite{Skutella2009}]\label{def:trf}
		Let $x$ be a static flow and $\flowdecomp$ its flow decomposition.
		The corresponding \TR flow with time horizon \T is defined by the flow rates
		\[
		f_a(\tp) \coloneqq \sum_{p\in \actpaths{\tp}} \pflow \qquad \text{for } a\in A,\ \tp \in [0,\T),
		\]
		where 
		\[
		\actpaths{\tp} \coloneqq \{p\in \paths \mid a = (v,w)\in p \text{ and } 
		\trt[](p\vert_{s,v}) \leq \tp \text{ and }
		\trt[](p\vert_{v,t}) < \T- \tp
		\}
        \]
        is the set of paths of the decomposition that contain arc $a$ and can transport flow over $a$ at time $\tp$ without violating the time horizon.
		For $\tp \notin [0,\T)$ we set $f_a(\tp) = 0$ for all $a \in A$.
	\end{definition}
	Note that cycles and paths of length greater than \T from a flow decomposition are ignored when constructing a corresponding \TR flow.
	
   The intuition behind \TR flows is better captured by the following alternative, path-based representation. 
   \begin{remark}\label{def:chainFlow}
   A \TR flow $f$ corresponding to a path decomposition $\pathdecomp$  and for a time horizon $\T$ is a sum of \emph{chain flows}
   $f = \sum_{p\in \paths,\,\trtp \leq \T} f^{\T}_p$,
   where a chain flow~$f^\T_p$ sends the flow at rate~$\pflow$ into a path~$p$ during the time interval $\big[0, \T-\trtp\big)$.
   \end{remark}
The value of a temporally repeated flow can be expressed via the value of its underlying static flow, as the following lemma shows.
	\begin{lemma}[\cite{Skutella2009}]\label{lem:TRflowValue}
		Let $x$ be a feasible static flow in a network $\net$ with flow decomposition $\flowdecomp$ such that $\pflow = 0$ for all $p \in \paths$ with $\trtp > \T$ and for all~$p\in \mathcal{C}$. 
		Then the value of the corresponding temporally repeated flow $f$ is
		\[\val = \T \cdot\val[x] - \suma \trt \cdot \xa.\] 
		In particular, the value of the flow over time $f$ does not depend on the chosen path decomposition of the static flow $x$.
	\end{lemma}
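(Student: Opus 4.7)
The plan is to work directly from the chain flow representation in \Cref{def:chainFlow}. Each chain flow $f^\T_p$ injects flow at rate $\pflow$ into path $p$ during the interval $[0,\T-\trtp)$ and then the flow travels along $p$, arriving at $t$ during $[\trtp,\T)$. Hence the total amount of flow that $f^\T_p$ delivers to the sink equals $\pflow\cdot(\T-\trtp)$. Summing over all paths (the cycles and paths longer than $\T$ contribute nothing by the assumption on $\pflowvar$) yields
\[
\val = \sum_{p\in\paths,\ \trtp\leq \T} \pflow\cdot(\T-\trtp) = \T\sum_{p\in\paths}\pflow - \sum_{p\in\paths}\pflow\cdot\trtp.
\]

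Next I would rewrite the two resulting sums in terms of $x$. The first sum is, by definition of a flow decomposition, just $\val[x]$. For the second sum I would expand $\trtp=\sum_{a\in p}\trt$ and swap the order of summation, which gives
\[
\sum_{p\in\paths}\pflow\sum_{a\in p}\trt \;=\; \sum_{a\in A}\trt\sum_{p\in\paths:\,a\in p}\pflow \;=\; \sum_{a\in A}\trt\cdot \xa,
\]
where the last equality again uses that $\pflowvar$ is a decomposition of $x$, so that the path flows through $a$ reconstruct $\xa$ (cycles can be ignored because $\pflow=0$ for $p\in\cycles$). Combining the two identities gives the claimed formula $\val = \T\cdot\val[x] - \suma \trt\cdot\xa$.

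The independence of the decomposition then follows for free: the right-hand side depends only on $x$ (through $\val[x]$ and the arc values $\xa$), not on how $x$ is split into paths. The only slightly delicate point—if one wants to be careful—is to justify that the chain-flow picture really computes $\val$ as given in \Cref{def:flowVal}; this can be done by a direct calculation using Fubini's theorem on $\int_0^\T f_a(\auxtp)\,d\auxtp$ for $a\in\delta^+(s)\cup\delta^-(s)$, plugging in $f_a=\sum_{p\ni a}\pflow\cdot\indic{\auxtp\in[\trt[](\restr[p]{s,\text{tail}(a)}),\,\T-\trt[](\restr[p]{\text{tail}(a),t}))}$. I expect no real obstacle here; the main thing to watch is the bookkeeping of active time windows, which is exactly what the chain flow formulation already packages for us.
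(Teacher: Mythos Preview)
Your argument is correct and is exactly the standard computation (as in Skutella's survey). Note that the present paper does not actually prove \Cref{lem:TRflowValue}; it is quoted from \cite{Skutella2009} without proof, so there is nothing to compare against beyond observing that your chain-flow calculation is the canonical one.
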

The condition that only paths with transit time of at most $\T$ can have positive flow values is crucial for Lemma~\ref{lem:TRflowValue}, since flow on longer paths would not reach the sink but would contribute negatively to the formula~\cite{Skutella2009}.
We refer to path decompositions that respect the time restriction as \emph{\T-bounded}.
\begin{definition}[\shortdec path decomposition]
	We call a path decomposition~$\pathdecomp$ of a static flow in a flow network \emph{\shortdec} for a time horizon~$\T \in \N$ if all flow-carrying paths, i.e.~paths $p\in \paths$ with $\pflow>0$, have length at most $\T$.
\end{definition}

Ford and Fulkerson show that maximum temporally repeated flows are maximum flows, and that they can be computed in polynomial time by the following algorithm~\cite{FF1958}. 
\begin{theorem}[\cite{FF1958}]\label{thm:FoFu}
    The following algorithm computes a maximum flow over time for a network $\net=\netwDef$ and a time horizon $\T$.
    \begin{enumerate}
        \item Construct an \emph{extended network} $\net'$ from $\net$ by adding an arc $(t,s)$ with $\ua[(t,s)]=\infty$ and $\trt[(t,s)]= -\T$.
        \item Compute a minimum cost circulation in $\net'$ with respect to arc costs $\trt$;
        extract the corresponding static ($s,t$)-flow $x$ in $\net$.
        \item Compute a flow decomposition $\flowdecomp$ of $x$.
        \item Return the \TR flow induced by the decomposition~$\pflowvar$.
    \end{enumerate}
\end{theorem}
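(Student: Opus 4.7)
The plan is to split the analysis into three parts: checking that the algorithm output is well-defined as a TR flow, computing its value in terms of the circulation cost, and lifting the resulting TR-flow optimality to optimality among all flows over time.

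First I will verify that the static flow $x$ extracted in step 2 admits a $\T$-bounded path decomposition, so step 4 is justified by \Cref{def:chainFlow} and \Cref{lem:TRflowValue} is applicable. To this end, I decompose the min-cost circulation in $\net'$ into cycles; each cycle either (a)~lies entirely in $\net$, or (b)~uses the back arc $(t,s)$ and hence corresponds to an $s$-$t$-path $p$ in $\net$. Since transit times are positive integers, any cycle of type (a) has strictly positive cost and thus carries zero flow in a min-cost circulation. For type (b), the cycle cost equals $\trt[](p)-\T$, which is positive whenever $\trt[](p)>\T$, so such paths carry no flow either. Consequently the extracted decomposition uses only paths $p$ with $\trt[](p)\le \T$ and no cycles of $\net$, i.e.~it is $\T$-bounded.

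Second, I relate the TR flow value to the circulation cost. Since the back arc $(t,s)$ carries exactly $\val[x]$ units of flow, the cost of the computed circulation equals $\suma \trt \cdot \xa - \T\cdot \val[x]$. By \Cref{lem:TRflowValue}, the value of the induced TR flow is precisely the negative of this quantity. Hence minimising circulation cost is equivalent to maximising the TR flow value, and the algorithm outputs a TR flow of maximum value among all TR flows compatible with some $\T$-bounded static flow in $\net$.

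Finally I would lift this to optimality over all flows over time, and this is the main obstacle: it must be shown that no flow over time with time horizon $\T$ exceeds the value achieved by our TR flow. The natural route is via the time-expanded network $G^{\T}$ obtained by making $\T$ copies of $V$ and adding holdover and transit arcs. Every flow over time can be encoded losslessly as a static flow in $G^{\T}$ without changing its value, so the maximum flow-over-time value is bounded by the max static flow in $G^{\T}$; by max-flow--min-cut applied to $G^{\T}$ this equals a certain minimum cut. A \emph{temporally repeated cut} construction (or, equivalently, an LP-duality argument between the max-flow-over-time LP and the min-cost-circulation LP on $\net'$) then shows that this min cut coincides with the value produced by our algorithm, closing the gap. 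The accounting in the first two steps is routine once $\T$-boundedness is established; essentially all the technical weight of the proof sits in this final cut/duality matching.
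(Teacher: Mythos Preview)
The paper does not give its own proof of this theorem; it is stated as a preliminary with the citation to Ford and Fulkerson and then used as a black box. The only comment the paper adds is the sentence after the statement noting that the decomposition in Step~3 is in fact $\T$-bounded---precisely the point you establish in your first step.

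Your sketch follows the standard argument and the first two parts are correct: positivity of transit times forces every flow-carrying cycle of the min-cost circulation to use the back arc $(t,s)$ and to correspond to an $s$-$t$ path of length at most $\T$, and the identification of the TR-flow value with the negative of the circulation cost via \Cref{lem:TRflowValue} is exactly right. You are also right that the real work lies in your third step, and the two routes you name (a min-cut in the time-expanded network, or LP duality) are the canonical ones; Ford and Fulkerson's original proof goes through the time-expanded cut. Since the paper does not reprove the result, there is nothing further to compare your approach against.
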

The flow decomposition attained in Step $3$ is in fact a $\T$-bounded path decomposition.
Theorem~\ref{thm:FoFu} implies that the maximum flow value is attained by \TR flows, and that maximum flows are computed in polynomial time. 

\subsection{Problem statement and properties}
We seek to find a flow over time of a required value while keeping the cost caused by the flow small for each point in time. 
The cost of a flow at a time point $\tp$ is the accumulated amount of flow present in the network at time $\tp$, weighted for each arc $a$ by its cost coefficient $\ca$. 
More precisely, the \emph{cost at a time point} is defined as follows.  
\begin{definition}[Cost at a time point]
	
	Let $\net = \netwDef$ be a network and $f$ a flow over time with time horizon~\T.
		For a time point $\tp\in [0,\T)$, the \instantcosts $\tp$ is
		\[
		c(f, \tp) \coloneqq \sum_{a\in A}\ca \cdot \left(\int_{\tp - \trt}^{\tp} f_a(\auxtp)\,d\auxtp \right).
		\] 
\end{definition}

We seek to minimise the \emph{\peakcosts} of a flow $f$, which is the maximum cost of the flow over the time horizon~$[0,\T)$, i.e.
\[
	\cmax[f] \coloneqq \max_{\tp\in[0,\T)} c(f, \tp).
\]

Given a network $\net$, a time horizon $\T$ and a demand \dem, we refer to the problem of finding a flow over time with time horizon \T of value at least \dem with minimum peak cost as \emph{Minimum-Peak-Cost Flow} (MPCF). 

We are particularly interested in \emph{\TR} flows as solutions because of their sparse structure and compact representation. 
In the remainder of this work, we consider a variant of MPCF that seeks to minimise the \peakcosts exclusively on \TR flows. 
Next, we give a precise problem definition of \genProb and three observations on the properties thereof.
\begin{definition}[\genProb]
	An instance of \emph{Minimum-Peak-Cost Temporally Repeated Flow} (\genProb) problem consists of a network $\net = \netwDef$ with a distinguished source $s$ and sink $t$, of a time horizon $\T\in \N$ and of a demand $\dem\in \N$.
	\genProb asks for a \TR flow~$f$ in $\net$ with horizon $\T$ and value $\val \geq \dem$ that minimises the \peakcosts.
\end{definition}

Recall that we can compute in polynomial time the maximum possible amount of flow $D^{\text{max}}$ that can be routed through a given network within a given time horizon.
Therefore, we assume in the following that~$\dem\leq D^{\text{max}}$, as otherwise the instance is infeasible.

\begin{remark}\label{rem:pathDecomp}
    The \peakcosts of a \TR flow depends not only on the underlying static flow, but also on the chosen path decomposition.
\end{remark}
\begin{proof}
Temporally repeated flows resulting from different path decompositions of the same static flow and with the same time horizon can have different \peakcosts{}s, as an example in Figure~\ref{fig:pathDecompDiff} demonstrates.

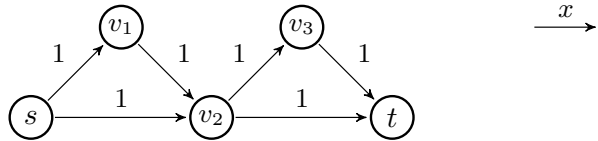
\begin{figure}[hbt]
    \centering
    \resizebox{0.5\textwidth}{!}{


\begin{tikzpicture}[>=stealth',shorten >=1pt, shorten <=0.5pt, auto, node distance = 1.5cm]
    \tikzstyle{every state}=[thick, inner sep=0mm, minimum size=5mm]

  \node[state] (s)   {$s$};
  \node[state] (A) [above right of=s] {\small$v_1$};
  \node[state] (B) [below right of=A] {\small$v_2$};
  \node[state] (C) [above right of=B] {\small$v_3$};
  \node[state] (t) [below right of=C] {$t$};

  \path[->] (s) edge node[font=\small]{$1$} (B)
                edge node[font=\small]{$1$} (A)
            (A) edge node[font=\small]{$1$} (B)
            (B) edge node[font=\small]{$1$} (C)
                edge node[font=\small]{$1$}    (t)
            (C) edge node[font=\small]{$1$} (t);

    \node (x) [right=2.2 of C] {};
    \path[->, draw] (x) --node{\small$x$}  ++(0.9,0) {};
    
\end{tikzpicture}
    }
    \caption{An instance of \genProb with unit capacities, transit times and costs.
    The displayed static flow $x$ admits two different path decompositions.
    }
    \label{fig:pathDecompDiff}
\end{figure}
Consider the network shown in the figure, with time horizon
    $T=6$ and demand $\dem = \dem^{\text{max}} = 6$.
    As we will prove in \Cref{sec:unitCosts}, the peak cost for any \TR flow in this network is attained at time $\tp = \frac{\T}{2} = 3$, and every chain flow along a path $p$ with flow rate $\pflowvar(p)$ contributes the cost $\pflowvar(p)\cdot\max\{ \trtp,\, \T-\trtp\}$ to the total \peakcosts.

Consider the first path decomposition \[
    \pflowvar \colon \paths(G)\to \R,\qquad \begin{array}{ll}
         p_1 \coloneqq (\begin{tikzpicture}[scale=0.2]
             \path[draw, ->] (0,0)--(1,1) -- (2,0) -- (3,1) -- (4,0);
         \end{tikzpicture})
             =  (s,v_1,v_2,v_3,t)&\mapsto 1,  \\
         p_2 \coloneqq 
         (\begin{tikzpicture}[scale=0.2]
             \path[draw] (0,0)--(1.75,0) node[]{};
             \path[draw, ->] (2.25,0) -- (4,0);
         \end{tikzpicture})
         = (s,v_2,t)&\mapsto 1, \\
         p &\mapsto 0 \ \text{ otherwise}.
    \end{array}
\]
The corresponding \TR flow~$f$ consists of two nontrivial chain flows with total \peakcosts
    \[
    \cmax = c(f, 3) = \pflow[p_1] \cdot (\T - \trtp[p_1]) + \pflow[p_2] \cdot \trtp[p_2] = (6-4) + 2 = 4.
    \]
For the second path decomposition 
\[
    \pflowvar' \colon \paths(G)\to \R,\qquad \begin{array}{ll}
         p_3 \coloneqq
         (\begin{tikzpicture}[scale=0.2]
             \path[draw, ->] (0,0)--(1,1) -- (2,0) -- (4,0);
         \end{tikzpicture})
         = (s,v_1,v_2,t)&\mapsto 1,  \\
         p_4 \coloneqq
        (\begin{tikzpicture}[scale=0.2]
             \path[draw, ->] (0,0)-- (2,0) --(3,1) -- (4,0);
         \end{tikzpicture})
         = (s,v_2,v_3,t)&\mapsto 1, \\
         p &\mapsto 0 \ \text{ otherwise},
    \end{array}
\]
the corresponding \TR flow $f'$ has \peakcosts
    \[
        \cmax[f'] = c(f',3) = y'(p_3) \cdot \trtp[p_3] + y'(p_4)\cdot \trtp[p_4] = 3+3=6 > \cmax.
    \]
It is easy to see that both flows~$f$ and~$f'$ are maximum \TR flows of value~$6$. Thus, our claim holds.
\end{proof}

In contrast to the maximum flow over time problem, \TR solutions have arbitrarily bad peak costs compared to general flows over time on the same instance, i.e. they yield no constant approximation ratio.

\begin{lemma}\label{rem:TRFsubopt}
	Temporally repeated flows do not provide a constant factor approximation for the Minimum Peak Cost Flow problem.
\end{lemma}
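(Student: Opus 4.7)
The plan is to demonstrate a family of instances $\{\inst_k\}_{k \in \N}$ of \genProb for which the ratio between the optimal TRF peak cost and the optimal general-flow peak cost tends to infinity as $k \to \infty$, thereby ruling out any constant-factor approximation.

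The construction I would pursue uses a network $\net_k$ consisting of $k$ pairwise arc-disjoint $(s,t)$-paths, each containing a designated unit-cost ``bottleneck'' arc, while all remaining arcs have cost zero. The transit times are chosen so that each path's length lies close to $T_k/2$, so that the stream of a TRF chain flow fully covers the path at the midpoint of the time horizon. The time horizon $T_k$ and the demand $D_k$ are then calibrated so that any feasible static flow of value $D_k$ must route a positive amount of flow along every one of the $k$ paths.

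For the TRF lower bound, I would invoke the identity (derived in the discussion following \Cref{rem:pathDecomp} and formalized in \Cref{sec:unitCosts}) that, in unit-cost networks, the peak cost of a TRF attained at $\theta = T/2$ equals $\sum_p y(p)\cdot \min\{\trtp,\, \T-\trtp\}$. Combined with the forcing argument that no feasible static flow can bypass any of the $k$ paths, this yields a lower bound of $\Omega(k)$ on the peak cost of every TRF meeting the demand.

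For the matching upper bound, I would construct an explicit general flow over time that serves the same demand by scheduling a single pulse of high-rate inflow per path in pairwise disjoint time intervals. Since pulses on different paths never coexist in time, the only bottleneck arc carrying flow at any given instant is the one belonging to the currently active pulse, so the peak cost stays bounded by a constant. Taking the ratio gives an arbitrarily large gap.

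The main obstacle is the tension between forcing the TRF to spread flow across all $k$ paths and giving the general flow enough temporal slack to execute $k$ sequential pulses within $T_k$: a large demand prevents the TRF from evading the peak by routing all flow on one path, but a correspondingly large time horizon is required so that the staggered general-flow construction remains feasible. Calibrating the transit times, the demand $D_k$, and the horizon $T_k$ so that both conditions hold simultaneously -- while keeping the instance within the \genProb framework -- is the crux of the argument.
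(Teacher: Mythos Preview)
Your proposal has a genuine structural gap that, as far as I can see, cannot be patched within the arc-disjoint-paths framework you describe.

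First a minor point: the identity you invoke for the TRF peak cost, $\sum_p y(p)\cdot\min\{\trtp,\,T-\trtp\}$, is proved in \Cref{sec:unitCosts} only for \emph{unit-cost} networks, i.e.\ $c\equiv 1$ on every arc. Your construction has one cost-$1$ arc per path and all other arcs at cost~$0$, so that formula does not apply. This is fixable, but it is a symptom of the deeper issue below.

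The real problem is the obstacle you yourself flag at the end. With $k$ pairwise arc-disjoint paths, the TRF and the general flow have \emph{exactly the same} freedom to distribute rate across paths; the only extra freedom the general flow has is temporal. To force every feasible TRF to put $\Omega(1)$ rate on each of the $k$ paths you need the demand $D$ to be within an additive $O(T-L)$ of the maximum value $k(T-L)$, where $L$ is the common path length. But then a general flow must also push $\Omega(T-L)$ units through every path, and each such pulse occupies its bottleneck arc for a time interval of length $\Omega(T-L)$. Fitting $k$ such intervals disjointly into $[0,T)$ requires $k\cdot(T-L)\le O(T)$; combined with $L\approx T/2$ this forces $k=O(1)$. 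Conversely, if you shrink $D$ enough to create the temporal slack for $k$ disjoint pulses, the TRF can simply concentrate its rate on $O(1)$ paths and achieve constant peak cost too. In short, on arc-disjoint paths the ratio $\cmax[f^{\mathrm{TR}}]/\cmax[f^{*}]$ stays bounded, and your ``crux'' cannot be resolved by calibration alone.

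The paper sidesteps this entirely with a four-node network and a \emph{shared} capacity-$1$ arc $(s,v)$ through which both candidate paths pass. Because a TRF arises from a static path decomposition, it can use only one of the two paths at rate~$1$; the Ford--Fulkerson characterisation of maximum TRFs then forces it onto the \emph{shorter} path $p_1=(s,v,t)$, whose single arc $(v,t)$ of transit time~$k$ carries the only nonzero cost, giving peak cost~$k$. A general flow, in contrast, routes almost all demand along the longer zero-cost path $p_2=(s,v,w,t)$ and switches to $p_1$ only for the final unit, achieving peak cost~$1$. The shared arc is the mechanism you are missing: it breaks the symmetry between TRF and general flow by rigidly binding the TRF's path choice while leaving the general flow free to switch over time.
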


\begin{proof}
Let $k\in \N$ be arbitrary but fixed.
	Consider a network on a graph $G = (V,A)$ with nodes~$V = \{s, v, w, t\}$ and arcs $A = \{(s,v),(v,t), (v,w), \allowbreak (w,t)\}$, shown in Fig.~\ref{fig:gapTR}, and a time horizon $\T \coloneqq 2k+2$.
All arcs have unit capacity; transit times and costs are as follows:
\[
\begin{array}{llll}
	\trt[(s,v)]=1, &\qquad \trt[(v,t)] = k, &\qquad \trt[(v,w)] = 1, &\qquad \trt[(w,t)] = k,\\
	\ca[(s,v)] = 0, &\qquad \ca[(v,t)] = 1, &\qquad \ca[(v,w)] = 0, &\qquad \ca[(w,t)] = 0.
\end{array}
\]
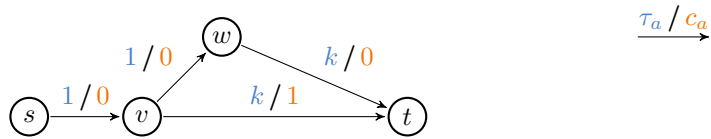
\begin{figure}[hb]
    \centering

\begin{tikzpicture}[>=stealth',shorten >=1pt, shorten <=0.5pt, auto, node distance = 1.5cm]
    \tikzstyle{every state}=[thick, inner sep=0mm, minimum size=5mm]

  \node[state] (s)   {$s$};
  \node[state] (v) [right of=s] {$v$};
  \node[state] (w) [above right of=v] {$w$};
  \node[state] (t) [right=3 of v] {$t$};

  \path[->] (s) \tauuc{$1$}{}{$0$}{}{} (v)
            (v) \tauuc{$1$}{}{$0$}{}{} (w)
                \tauuc{$k$}{}{$1$}{}{} (t)
            (w) \tauuc{$k$}{}{$0$}{}{} (t);

    \node (x) [right=5 of w] {};
   \node (xx) [right=1.0 of x] {};
    \path[->] (x) \tauuc{$\trt$}{}{$\ca$}{}{} (xx);

\end{tikzpicture}
    \caption{An instance of MPC-MF  with unit capacities, for which the gap between the optimal peak cost and the peak cost of an optimal \TR flow is equal to $k \in \N$.}
    \label{fig:gapTR}
\end{figure}
We set the demand to the maximum flow value $\dem \coloneqq \dem^{\text{max}} = \T-k-1$.

Note that $G$ contains exactly two $s$-$t$ paths $p_1=(s,v,t)$ and $p_2=(s,v,w,t)$. 
Since $\trtp\leq T$ holds for both paths, 
\Cref{thm:FoFu} implies that a maximum \TR flow is induced by the unique static minimum-cost circulation, which has value $\val[x] = 1$ and uses only the shorter path $p_1$.
The unique maximum \TR flow $f^{\text{TR}}$ thus uses only the path~$p_1$ and sends flow at rate $\pflow[p_1]=1$ along path~$p_1$ in the time period $\big[0,\, \T - \trtp[p_1]\big) = [0,\, \T-k-1)$.
The maximum flow value is $\val[f^{\text{TR}}] = \T-k-1$, and the resulting flow rates on arcs are as follows:
\begin{align*}
    f^{\text{\textsc{TR}}}_{(s,v)}(\theta)=1 &\text{ for }\theta \in [0,\,\T-k-1),\\
    f^{\text{TR}}_{(v,t)}(\theta)=1 &\text{ for }\theta \in [1,\,\T-k),\\
\end{align*}
and zero otherwise.

Since only arc $(v,t)$ has a positive cost coefficient, the cost of the temporally repeated flow~$f^{\text{TR}}$ at a time point~$\tp$ is 
\[
    c(f^{\text{TR}}, \tp) = \ca[(v,t)] \cdot \int_{\tp - \trt[(v,t)]}^{\tp} f_{(v,t)}(\auxtp)\,d\auxtp = 
    1\cdot\int_{\tp - k}^{\tp} f_{(v,t)}(\auxtp)\,d\auxtp.
\]
{For instance, at time point $\tp=1$,  
no flow particles have reached arc $(v,t)$ yet, so the cost at this time point is zero.}
The peak cost of flow $f^{\text{TR}}$ is attained when the arc $(v,t)$ carries flow on its entire length, i.e.~at each time point between $k+1$ and $\T-k$; we calculate the cost at time point~$\tp = k+1$ and obtain
\[
\cmax[f^{\text{TR}}] =  \int_{1}^{k+1} f^{\text{TR}}_{(v,t)}(\auxtp)\,d\auxtp = k.
\]

Now consider a non \TR flow $f^*$, which sends the flow at rate $1$ over the longer but cheaper path $p_2$ in the time period $[0, \trtp[p_2]) = [0,\ \T-k-2)$. The last missing unit of flow is sent over the path $p_1$, departing in period $[\T-k-2,\ \T-k-1)$.
Formally, the flow $f^*$ is defined by the following flow rates on the arcs:
\begin{align*}
    &f^*_{(s,v)}(\theta)=1 \text{ for }\theta \in [0,\ \T-k-1),\\
    &f^*_{(v,w)}(\theta)=1 \text{ for }\theta \in [1,\ \T-k-1),\\
    &f^*_{(w,t)}(\theta) =1 \text{ for }\theta \in [2,\ \T-k),\\
    &f^*_{(v,t)}(\theta)=1 \text{ for }\theta \in [\T-k-1,\ \T-k).
\end{align*}
The flow rates outside of the given intervals are zero.
It is easy to see that $f^*$ is a feasible flow with $\val[f^*] = \T-k-1$; hence, flow~$f^*$ is also a maximum flow.
Flow $f^*$ also attains its \peakcosts when the amount of flow on arc $(v,t)$ is maximised, i.e.~at each time point $\tp \in [\T-k,\ \T-1)$. 
We compute the \peakcosts of flow~$f^\ast$ as cost at time point~$\tp=T-k$ and obtain
\[
\cmax[f^*] = \int_{\tp-k}^{\tp} f^*_{(v,t)}(\auxtp)\,d\auxtp = \int_{\T-k-1}^{\T-k}1\, d\auxtp = 1. 
\]
Hence, the ratio between the optimal \peakcosts of a \TR flow and of an unrestricted  optimal flow is at least~$\frac{\cmax[f^{\text{TR}}]}{\cmax[f^*]} = k$.
\end{proof}

Numerous applications of flows over time involve units of flow that are discrete by nature, for instance vehicles in traffic management or beds in a hospital. 
In these cases, we seek \emph{integral} flows over time, i.e.~flows with integral flow rates. 
For the maximum flow over time problem, the integrality constraint can be imposed without loss of generality: if arc capacities are integers, then there always exists an integral minimum cost static circulation, which then always yields an integral path decomposition and induces an integral maximum temporally repeated flow.
We lose this property when we consider the \minpeakcosts objective.

\begin{lemma}\label{rem:integer-gap}  
For \genProb, the \peakcosts of an optimal integral solution is, in the worst case,~$\Omega(\sqrt{n})$ times higher than the optimal \peakcosts, where $n$ is the number of nodes in the network.
\end{lemma}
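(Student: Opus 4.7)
The plan is to construct, for every $k \geq 2$, an instance $\Inst_k$ of \genProb on $n = k(k-1)+2$ nodes whose fractional optimum is $1$ while every feasible integer solution has peak cost at least $k$; since $k \geq \sqrt{n}$, this yields the claimed $\Omega(\sqrt{n})$ gap. The graph $G_k$ consists of a source $s$, a sink $t$, and $k$ internally vertex-disjoint $s$-$t$ paths $p_1,\ldots,p_k$, each of length $k$; every arc receives $\ua = \trt = \ca = 1$. We set $\T \coloneqq 2k$ and $\dem \coloneqq 1$.

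For the fractional upper bound I would use the \shortdec path decomposition $\pflow[p_i] \coloneqq 1/k^2$ for all $i \in \oneto{k}$. The associated static flow $x$ has value $\val[x] = 1/k$, so by \Cref{lem:TRflowValue} the induced TR flow $f$ satisfies $\val = \T\cdot(1/k) - k^2\cdot(1/k^2) = 1 = \dem$ and is feasible. Labelling the arcs of $p_i$ in order as $a_{i,1},\ldots,a_{i,k}$, the inflow rate $f_{a_{i,j}}(\xi) = 1/k^2$ is supported on $\xi \in [j, j+k)$, so the amount on $a_{i,j}$ at time $\theta$ attains its maximum $1/k^2$ exactly when $\theta \in [j+1, j+k]$. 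Since $\bigcap_{j=1}^{k}[j+1, j+k] = \{k+1\}$, all $k^2$ arcs reach their individual peaks simultaneously at $\theta = k+1$ and $c(f, k+1) = k^2 \cdot 1/k^2 = 1$; at every other $\theta \in [0, \T)$ at least one arc carries strictly less than $1/k^2$ while none carries more, so $\cmax[f] = 1$.

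For the integer lower bound I would observe that the unit capacities together with the arc-disjointness of $p_1, \ldots, p_k$ force the underlying static flow $x'$ of any integer TR flow $f'$ to route one unit of flow on every arc of the paths in some subset $S \subseteq \{p_1,\ldots,p_k\}$ and zero elsewhere. By \Cref{lem:TRflowValue}, $\val[f'] = \T\cdot|S| - k\cdot|S| = k|S|$, so $\val[f'] \geq \dem$ forces $|S| \geq 1$. Picking any $p_i \in S$ and repeating the peak computation above with rate $1$ in place of $1/k^2$ shows that the chain flow along $p_i$ alone contributes $\sum_{a \in p_i} \ca = k$ to $c(f', k+1)$, so $\cmax[f'] \geq k$. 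Combining the two bounds gives the ratio $k \geq \sqrt{n} = \Omega(\sqrt{n})$. The main obstacle is pinning down the exact peak time of the fractional flow; this hinges on the single-point intersection $\bigcap_{j=1}^{k}[j+1, j+k] = \{k+1\}$, which makes the $k^2$ arc-peaks coincide and produces the crucial cost of exactly $1$ instead of something smaller.
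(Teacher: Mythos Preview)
Your construction is correct and proves the stated $\Omega(\sqrt{n})$ gap, but it differs from the paper's in an interesting way. The paper uses a \emph{sparse} cost function---only the $i$-th arc of the $i$-th path carries cost~$1$, all others cost~$0$---together with a bottleneck arc $(s,v)$ of capacity~$1$ that forces the maximum flow value to be exactly~$1$. The mechanism there is that the expensive arcs are staggered in time, so an integral flow on a single path always hits one of them fully (cost~$1$), while splitting fractionally into rates~$1/k$ reduces the peak to~$1/k$. Your construction instead takes \emph{all} arcs with unit cost and no bottleneck; the gap then comes from the fact that with $\trtp[p_i]=k=\T/2$ one has $\w[p_i]=\T-\trtp[p_i]=k$, so the \peakcosts equals the flow value for every \TR flow (this is the unit-cost Lemma~\ref{lem:unitCostNet}). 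Hence the fractional optimum is exactly~$\dem=1$, whereas any integral decomposition must set some $y(p_i)\in\{0,1\}$ and thus ships flow value at least~$k$, giving peak cost at least~$k$. Your route is arguably more elementary and avoids the non-uniform cost pattern; the paper's route has the feature that the demand equals the \emph{maximum} flow value, so its gap example applies even to the max-flow variant of \genProb.

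Two minor remarks. First, your inflow support is off by one: with the standard definition, $f_{a_{i,j}}$ is supported on $[j-1,\,j+k-1)$, the arc is ``full'' on $[j,\,j+k-1]$, and the common peak time is $\theta=k$, not $k+1$; this is harmless. Second, your closing comment about the ``main obstacle'' overstates the role of the single-point intersection: for the upper bound you only need $\cmax[f]\leq 1$, which follows immediately from the fact that each of the $k^2$ arcs carries at most $1/k^2$ units at any time; the simultaneous peak merely shows equality, which is not needed for the $\Omega(\sqrt{n})$ claim.
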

\begin{proof}
    Consider the following network $\net = \netwDef$. 
    Source $s$ has one outgoing edge to node~$v$. 
    Node $v$ and target $t$ are connected by $k$ internally disjoint paths, each of length~$k$, for some integer~$k\in \mathbb{N}$; see 
    Fig.~\ref{fig:bsp-intGap}.
    \begin{figure}[hbt]
        \centering
        \resizebox{0.6\textwidth}{!}{

\begin{tikzpicture}[>=stealth',shorten >=1pt, shorten <=0.5pt, auto, node distance = 1.5cm]
    \tikzstyle{every state}=[thick, inner sep=0.5mm, minimum size=8mm]
    \tikzstyle{every edge} = [draw, ->]

    \def\xstep{18mm}
    \def\ystep{13mm}
    \def\stepfrac{0.2}

  \node[state] (s) at (-1.2*\xstep, 0)  {\large$s$};
  \node[state] (v) at (-0.2*\xstep, 0) {\large$v$};
  \node[state] (t) at (4.2*\xstep, 0) {\large$t$};

    \foreach \y[count=\i] in {1.5, 0.5, -0.5, -1.5}{
        \foreach \j in {1,2,3}{
            \node[state] (w\i\j) at (\j*\xstep, \y*\ystep) {\large$w^{\i}_\j$};
        }
        \draw (v) edge (w\i1);
        \draw (w\i1) edge (w\i2);
        \draw (w\i2) edge (w\i3);
        \draw (w\i3) edge (t);
    }
    \draw[->] (s) -- (v);

    \foreach \a/\b in {v/w11, w21/w22, w32/w33, w43/t}{
        \path[->] (\a) \tauuc{}{}{{\large{1}}}{}{} (\b);
    }

    \node (x) [right=5 of w13] {};
    \path[->, draw] (x) --node{\Large\textcolor{cost}{$c$}}  ++(1.2,0) {};

\end{tikzpicture}
        }
        \caption{A network for $k=4$ with unit transit times and capacities, for which any optimal integral solution for \genProb with~$\T=k+2$ and $\dem=1$ is by a factor ${k}$ more expensive than the optimal fractional solution. Arc costs that are not indicated are zero.}
        \label{fig:bsp-intGap}
    \end{figure}
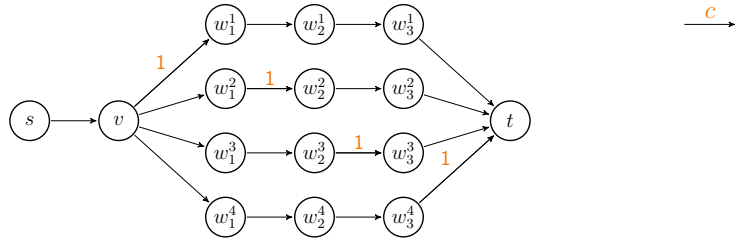
    Formally, we have 
    \newpage
    \begin{align*}
        V = &\{s,v,t\} \cup \{w^{i}_{j} \mid i\in \oneto{k},\ j \in \oneto{k-1}\},\\
        A = &\{(s,v) \} \cup \big\{ (w^{i}_{j-1},\ w^{i}_{j})\mid i\in \oneto{k},\ j\in\{2,\ldots,k-1\} \big\}\\
        &\phantom{\{(s,v) \}} \cup \{(v, w^{i}_{1}),\  (w^{i}_{k-1},\ t) \mid i \in \oneto{k}\}.
    \end{align*}
    Hence, the network contains $n = k(k-1) + 3$ nodes.
    
    All capacities and transit times are equal to one. 
    The cost of the $i$-th arc on the $i$-th $v$-$t$ path,~$i\in \oneto{k}$, equals one, and other arc costs are zero, i.e.
    \[
        \ca = \begin{cases}
            1, \text{ if } a= (w^{i}_{i-1},\ w^{i}_{i}) \text{ for }i\in\oneto{k},\\
            0, \text{ otherwise,}
        \end{cases}
    \]
    where we overwrite the notation via $w^{i}_{0} \coloneqq v$ and $w^{i}_{k} \coloneqq t$ to simplify the presentation. 
    Finally, we set the time horizon~$\T = k+2 $. 
    Since each $s$-$t$ path has length $k+1 = \T-1$, the flow can depart at node $s$ only within the time interval $[0,1)$. 
    The arc $(s,v)$ presents a capacity bottleneck and ensures that at most one unit of flow traverses the graph in the period $[0,1)$; hence, the maximum possible value of a flow over time is $1$, and we set $\dem=1$.
    
    Under the integrality constraint, the entire flow unit flows over one $v$-$t$~path $(v,\ w^{i}_{1},\ldots,w^{i}_{k-1},\ t)$ for some choice of $i\in \oneto{k}$. 
    Any integral maximum \TR flow $f^\text{int}$, which sends flow through the~$i$-th path, is given by the following flow rates:
    \[
    \begin{array}{ll}
        f^\text{int}_{(s,v)}(\theta)=1 &\text{ for }\theta \in [0, 1),\\[1ex]
        f^\text{int}_{(w^{i}_{j-1},\ w^{i}_{j})}(\theta) =1 &\text{ for }\theta \in [j, j+1) 
        \text{ for all }j\in \{1,\ldots,k\}
        ,\\[1ex]
        f^\text{int}_{a}(\tp) = 0 
        &\text{ otherwise.}
    \end{array}
    \]
    For any choice of index $i$, the flow incurs a \peakcosts of $\cmax[f^\text{int}] = 1$ when passing the unique arc with cost $1$ on the flow-carrying path:

    \[
        \cmax[f^\text{int}] = c(f^\text{int},\ i+1) = \int_{i}^{i+1} 1\cdot f_{(w^{i}_{i-1}, w_{i}^{i})}(\auxtp)\,d\auxtp = 1.
    \]
        
    Without the integrality constraint, we can distribute the load and send a flow $f^*$ at rate $\frac{1}{k}$ over each of the $k$ parallel paths. 
    Formally, 
     we define flow $f^*$ by the underlying flow decomposition 
    \[
        \pflowvar^* \colon \paths \to \R, \quad p\mapsto \frac{1}{k},
    \]
    which yields the following arc flow rates:
    \[
    \begin{array}{ll}
        f^*_{(s,v)}(\theta)=1 &\text{ for }\theta \in [0, 1),\\[1ex]
        f^*_{(w^{i}_{j-1},\ w^{i}_{j})} (\theta) =\frac{1}{k}&\text{ for }\theta \in [j, j+1) \text{ for all }j\in \{1,\ldots,k\},\ i\in \oneto{k}, \\[1ex]
        f^*_{a}(\tp) = 0 
        &\text{ otherwise.}
    \end{array}
    \]
    Observe that flow $f^*$ is still \TR.
    At any point in time, only a $\frac{1}{k}$-fraction of the flow traverses the arcs with nonzero costs; therefore, the peak cost of this fractional flow $f^*$ is 
    \[
        \cmax[f^*] = \max_{\tp} \sum_{i=1}^{k} \left(1 \cdot \int_{\tp-1}^{\tp} f_{(w^{i}_{i-1},\ w^{i}_{i})}(\auxtp) d\auxtp \right)
         = \max_{\tp} \sum_{i=1}^{k} \left(\frac{1}{k} \cdot \bigabs{[i,i+1) \cap [\tp-1, \tp)} \right) = \frac{1}{k}.
    \]
    Hence, since $n\sim k^2$, the ratio between the best objective value of an integral and a fractional solution is at least
    \[
        \frac{\cmax[f^{\text{int}}]}{\cmax[f^*]} = k \in \Omega(\sqrt{n})
    \] 
    in worst case.
\end{proof}

The above example shows that restricting the solution space to integral temporally repeated flows may lead to an arbitrarily large increase in peak costs; that is, general optimal solution are better.
However, the complexity and structure of general optimal solutions remain unknown.
	\section{Complexity of the integral \genProb}\label{sec:np-hard}
Similarly to the min-cost \TR flow problem, the integral \genProb and its decision counterpart are \np-hard already in a very restricted case.

\begin{theorem}\label{lem:int-strongNPhard}
    Given an instance of \genProb and  a number $z\in \Q_{+}$, it is \np-hard to decide whether there exists an integral \TR flow for the given instance with \peakcosts at most $z$, even for two-terminal series-parallel graphs with unit transit times, unit capacity and two possible arc cost values.
\end{theorem}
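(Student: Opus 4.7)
The plan is to reduce from \SubSum: given positive integers $a_1, \ldots, a_n$ and a target $B$, decide whether some subset $S \subseteq \{1, \ldots, n\}$ satisfies $\sum_{i \in S} a_i = B$. From any such instance I will construct, in polynomial time, a two-terminal series-parallel network $\net$ with unit capacities, unit transit times and arc costs in $\{0, 1\}$, together with a time horizon $T$, a demand $D$, and a threshold $z$, such that $\net$ admits an integral \TR flow of value at least $D$ with peak cost at most $z$ if and only if the \SubSum instance is a yes-instance.

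The backbone of the construction will be a series composition of $n$ item gadgets separated by unit-capacity bottleneck arcs. Item gadget $G_i$ is a parallel composition of two internally disjoint paths of the same length $\ell \geq \max_i a_i$: a \emph{select} path whose first $a_i$ arcs have cost $1$ and whose remaining $\ell - a_i$ arcs have cost $0$, and a \emph{skip} path consisting of $\ell$ cost-$0$ arcs. The bottlenecks force any integral \TR flow to route a single unit along exactly one of the two parallel paths in each gadget, so the flow uniquely encodes the subset $S$ of gadgets in which the select path is chosen. Choosing $T \geq 2 n \ell$ ensures that every arc on the flow-carrying $s$-$t$ path is fully active at $\theta^\star := T/2$, so the chain-flow activation-interval analysis (as in the proof of \Cref{rem:pathDecomp}) yields $c(f, \theta^\star) = \sum_{i \in S} a_i$; a case distinction on other time points shows that this is indeed the peak cost of the induced \TR flow.

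To enforce the exact \SubSum identity $\sum_{i \in S} a_i = B$ rather than just the inequality $\sum_{i \in S} a_i \leq B$ implied by $z := B$, I plan to additionally calibrate the two alternative paths of $G_i$ to have slightly different lengths, so that the flow-value formula $\val[f] = T|x| - \sum_a \tau_a x_a$ of \Cref{lem:TRflowValue} turns the demand constraint $\val[f] \geq D := D^{\text{max}}$ into a linear equation in the selection $S$ that pins down $\sum_{i \in S} a_i$. The main obstacle will be carrying out this calibration while staying inside the restrictive regime of the theorem---unit transit times, unit capacities, costs in $\{0, 1\}$, and the SPG structure---so that all required lengths are realisable by chains of unit-transit arcs and $D^{\text{max}}$ is attained precisely by flows encoding a valid subset. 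Once this is arranged, correctness in both directions of the reduction follows from the structural lemmas of \Cref{sec:preliminaries}, and the constructed instance has size polynomial in the \SubSum input, establishing \np-hardness under the stated restrictions.
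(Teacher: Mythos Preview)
Your reduction has a fatal size blow-up. You build the select path of gadget $G_i$ from $a_i$ cost-$1$ arcs followed by $\ell-a_i$ cost-$0$ arcs, with $\ell\geq\max_i a_i$. The integers $a_i$ in a \SubSum instance are encoded in binary, so $\max_i a_i$ can be exponential in the input size; your network then has exponentially many arcs and the construction is not polynomial-time. Switching to unary-encoded \SubSum does make the network polynomial in the input, but unary \SubSum is in $\mathcal P$, so nothing is proved. In short, any gadget that encodes the \emph{magnitude} $a_i$ by a chain of $a_i$ unit-transit arcs cannot yield a polynomial reduction under the unit-parameter restrictions of the theorem.

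The paper avoids this by reducing from \textsc{3-SAT} and exploiting the \emph{time axis} rather than path length to separate constraints. For an instance with $n$ variables and $m$ clauses, the network has $2n$ parallel literal paths, each of length $m+2$; the $(j{+}1)$-th arc on a literal path gets cost $1$ exactly when the negated literal occurs in clause $C_j$. With $T=m+4$ and $D=n$, every integral maximum \TR flow selects one path per variable, and the cost at the integer time point $j{+}1$ equals the number of false literals in clause $C_j$. Asking for peak cost at most $2$ is then equivalent to satisfiability. The network has $O(nm)$ arcs --- genuinely polynomial --- and the argument simultaneously gives \emph{strong} \np-hardness, which your \SubSum route could never deliver. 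If you want to salvage a gadget-style reduction, you need a source problem whose numeric data are polynomially bounded (e.g.\ \textsc{3-Partition}, or \textsc{3-SAT} as the paper does), so that ``length $=$ value'' stays polynomial.
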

\begin{proof}
    We prove the statement by a reduction from \textsc{3-SAT} with a restriction that each clause contains exactly three pairwise different literals. 
    This restriction can be ensured by padding shorter clauses with dummy literals.
    
     Let $\inst$ be an instance of \textsc{3-SAT} with $n$ variables $X_i$, $i\in \oneto{n}$, and $m$ clauses $C_j$, $j\in \oneto{m}$.     
     We construct an instance $\iinst = (\net, \T, \dem)$
     of the decision version of \genProb similarly to the construction in \Cref{rem:integer-gap}.
     The network $\net=\netwDef$ is based on a graph~$G = (V,A)$ that contains a source $s$, a sink $t$, nodes $v_i$ connected to the sink for each variable $X_i$, $i\in \oneto{n}$, and a simple $s$-$v_i$ path 
     of length $m+2$ for each literal $X_i$ or $\overline{X_i}$; see also Fig.~\ref{fig:3sat}.
     We say that the $(j+1)$-th arc of every path \emph{corresponds} to clause $C_j$.
     
     Formally, we have     
     \begin{align*}
     V  = &\{s,t\} \cup \big\{v_i \mid i\in \oneto{n}\big\} \cup \big\{w_i^j, \overline{w}_i^j \mid i\in \oneto{n},\ j\in \zeroto{m} \big\}\text{ and}\\
     A = &\big\{(s, w_i^0), (s, \overline{w}_i^0), (w_i^m, v_i), (\overline{w}_i^m, v_i), (v_i,t) \mid i\in \oneto{n}\big\} \\
     \cup &\big\{(w_i^{j-1}, w_i^{j}), (\overline{w}_i^{j-1}, \overline{w}_i^{j}) \mid i \in \oneto{n},\ j\in \oneto{m}\big\}.
     \end{align*}

    All capacities and transit times are equal to one. 
    The arc costs are defined as follows:
    \[
        \ca[] \colon A\to \R_+,\quad a \mapsto \begin{cases}
            1, \text{ if } a=(w_i^{j-1}, w_i^{j}) \text{ and } \overline{X_i} \in C_j,\\
            1, \text{ if } a=(\overline{w}_i^{j-1}, \overline{w}_i^{j}) \text{ and } X_i \in C_j,\\
            0, \text{ otherwise};
        \end{cases}
    \]
    that is, for each literal $\ell$ in a clause $C$, the arc corresponding to clause $C$ in the path of the negated literal $\overline{\ell}$ has cost of one.
    This choice of costs later allows us to encode the number of negative literals in each clause by the cost at a corresponding time point.
    We set the time horizon \T to 
        $m+4$
    and ask for a 
    maximum \TR flow over time for this horizon, i.e.~for a flow of value $\dem=n$, with \peakcosts at most $z=2$.

    \begin{figure}[htb]
        \centering
        \resizebox{0.9\textwidth}{!}{

\pgfdeclarelayer{bg}
\pgfsetlayers{bg, main}	
\begin{tikzpicture}[>=stealth',shorten >=1pt, shorten <=0.5pt, auto, node distance = 1.5cm]
    \tikzstyle{every state}=[thick, inner sep=0.5mm, minimum size=8mm]

    \def\xstep{22mm}
    \def\ystep{25mm}
    \def\stepfrac{0.2}

  \node[state] (s) at (-0.5*\xstep, 0) {$s$};
  \node[state] (t) at (5.*\xstep, 0) {$t$};

    \node[state] (w10) at (\xstep, 1.2*\ystep) {$w_1^0$};
    \node[state] (u10) at (\xstep, 0.8*\ystep) {$\overline{w}_1^0$};
    \node[state] (w20) at (\xstep, 0.2*\ystep) {$w_2^0$};
    \node[state] (u20) at (\xstep, -0.2*\ystep) {$\overline{w}_2^0$};
    \node[state] (w30) at (\xstep, -0.8*\ystep) {$w_3^0$};
    \node[state] (u30) at (\xstep,  -1.2*\ystep) {$\overline{w}_3^0$};

    \foreach \i/\y in {1/1, 2/0, 3/-1}{
        \foreach \j in {1,2}{
            \node[state] (w\i\j) at (\j*\xstep + \xstep, 0.2*\ystep +\y*\ystep) {$w_{\i}^\j$};
            \node[state] (u\i\j) at (\j*\xstep + \xstep, \y*\ystep - 0.2*\ystep) {$\overline{w}_{\i}^\j$};
        }
        \node[state] (v\i) at (4*\xstep, \y*\ystep) {$v_{\i}$};

        \draw[->] (s) -- (w\i0);
        \draw[->] (w\i0) -- (w\i1); 
        \draw[->] (w\i1) -- (w\i2);
        \draw[->] (w\i2) -- (v\i);
        \draw[->] (v\i) -- (t);
        
        \draw[->] (s) -- (u\i0);
        \draw[->] (u\i0) -- (u\i1);
        \draw[->] (u\i1) -- (u\i2);
        \draw[->] (u\i2) -- (v\i);
    }

    \foreach \a/\b in {w11/w12, u10/u11, u20/u21, u21/u22, w30/w31, u31/u32}{
        \path[->] (\a) \tauuc{}{}{{\large{1}}}{}{} (\b);
    }

  \node[] at ($0.5*(w10) + 0.5*(w11) + (0, 10mm)$) {\Large$C_1$};
  \node[] at ($0.5*(w11) + 0.5*(w12) + (0, 10mm)$) {\Large$C_2$};

    \node (x) [right=4.5 of w12] {};
    \path[->, draw] (x) --node{\Large\textcolor{cost}{$c$}}  ++(1.2,0) {};


\newcommand{\arcframecolor}{cyan!65!black}
\draw[dashed, \arcframecolor, rounded corners] ($(w10.north east) + (0.18,0.2)$) rectangle ($(u31.south west) + (-0.18, -0.2)$);
\node[text=\arcframecolor] at ($(u30.south east) + (0.18, -0.5)$) {\large$A_1$};
\draw[dashed, \arcframecolor, rounded corners]
($(w11.north east) + (0.18,0.2)$) rectangle ($(u32.south west) + (-0.18, -0.2)$);
\node[text=\arcframecolor] at ($(u31.south east) + (0.18, -0.5)$) {\large$A_2$};

    \node (sat) [draw, rounded corners=0.5cm, fill=gray!10!white, left=5 of w10, inner sep=3mm, text width=3.3cm
    ]    {\large{$C_1 = X_1 \lor X_2 \lor \overline{X_3}$\\[1.4ex]
    $C_2 = \overline{X_1} \lor X_2 \lor {X}_3$}};   
\end{tikzpicture}
        }
        \caption{Left: an instance of \textsc{3-SAT} with $m=2$ clauses and $n=3$ variables.
        Right: the corresponding flow over time network.
        Transit times and capacitites are all equal to one;
        arc costs that are not explicitly indicated are equal to zero.}
        \label{fig:3sat}
    \end{figure}
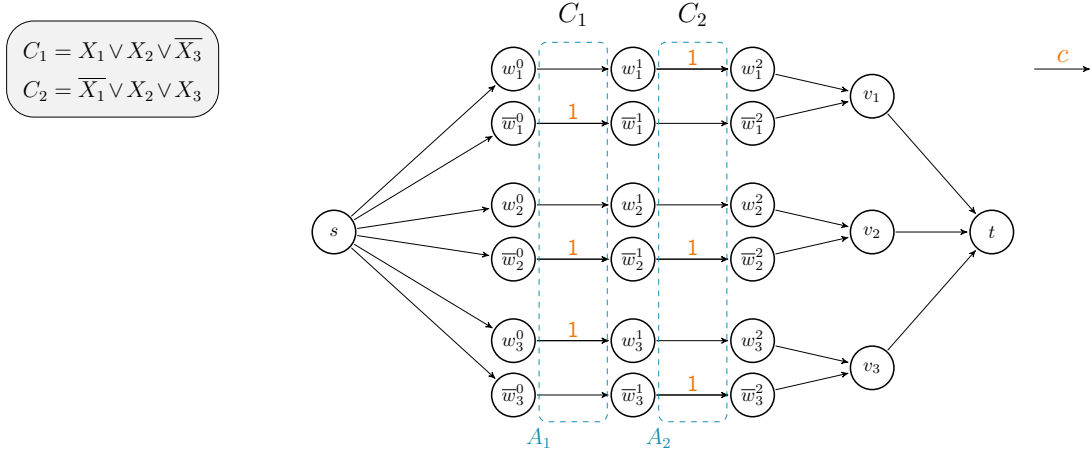

    We denote by \[
    A_j \coloneqq \big\{(w_i^{j-1}, w_i^j), (\overline{w}_i^{j-1}, \overline{w}_i^j) \mid i\in \oneto{n}\big\}
    \]
    the set of arcs \emph{corresponding} to the clause $C_j$, $j\in \oneto{m}$.
    For ease of notation, we analogously define arc sets 
    \[
    A_0 \coloneqq \big\{(s, w_i^0), (s, \overline{w}_i^0) \mid i\in \oneto{n}\big\} \quad\text{and}\quad
    A_{m+1}  \coloneqq \big\{(w_i^{m}, v_i), (\overline{w}_i^{m}, v_i) \mid i\in \oneto{n}\big\}.
    \]
    
    All $s$-$t$ paths in the graph have transit time $m+3 = \T-1$.
    To meet the time horizon, the flow is sent from the source only in the period $[0,1)$.
    Any maximum flow  has value $n$, which is dictated by the capacity available on arcs~$\{v_i,t\}$ with~$i\in [n]$.
    Any feasible, i.e.~integral maximum \TR flow $f$ in the constructed network thus has
    the following structure: it uses $n$ internally-disjoint paths $p_i$, $i\in \oneto{n}$, at full capacity, where $p_i$ is one of the paths
    \[
        q_i \coloneqq (s, w_i^0, \ldots, w_i^m, v_i, t) \qquad \text{or} \qquad \overline{q}_i \coloneqq (s, \overline{w}_i^0, \ldots, \overline{w}_i^m, v_i, t).
    \]
    Formally, the underlying path decomposition is $\pathdecomp$ with  $\pflow=1$ if and only if~$ p\in \{p_i\}_{i\in \oneto{n}}$ and~$\pflow=0$ otherwise.
        
        Hence, a feasible flow $f$ satisfies
        \[
            f_a(\tp) = \begin{cases}
                1,\quad&\text{if } a=(v,w)\in p_i \text{ for some }i\in\oneto{n} \text{ and }\tp \in \big[\trtp[p_i\vert_{s,v}],\, \trtp[p_i\vert_{s,v}]+1\big),\\
                0,\quad&\text{otherwise}.
            \end{cases}
        \]
        In particular, for an arc $a\in A_j$ belonging to a flow-carrying path we have $f_a(\tp) = \indic{[j,j+1)}(\tp)$
        \footnote{Function $\indic{S}\colon \R \to \{0,1\}$ for a set $S\subseteq \R$ is the indicator function with $\indic{S}(x) =1$ if and only if $x\in S$.},
        as $ j= \trtp[p_i\vert_{s,v}]$ is the length of each subpath up to an arc $a = (v,w)\in A_j$.
        
        Next, we compute the cost at each time point.
        The cost at each time point $\tp \notin [1,m+2)$ is zero.
        For~\mbox{$\tp \in [1,m+2)$}, the cost is a sum of costs of chain flows $f_i$ over paths $p_i$:
        \[
        c(f,\tp) = \sum_{i=1}^{n} c(f_i,\tp).
        \]
        For the chain flow $f_i$ over a path $p_i$, $i\in \oneto{n}$, departing in the period $[0,1)$, \instantcosts $\tp \in [1,m+2)$ is%
        \newcommand{\flotheta}{\flo{\tp}}%
        \begin{align*}
            c(f_i, \tp) 
            =& \sum_{a\in p_i} \ca \cdot \int_{\tp-\trt}^\tp f_{a}(\auxtp)\,d\auxtp
            = \sum_{a\in p_i} \ca \cdot \int_{\tp-1}^\tp f_{a}(\auxtp)\,d\auxtp
            \\\overset{(1)}{=}& \sum_{\substack{j\in\oneto{m} \cup \{0,m+1\}:
            \\[0.8ex] [j,j+1)\cap[\tp-1,\tp] \neq \emptyset}}
            \ca[]{(a_i^j)} \cdot \int_{\tp-1}^\tp f_{a_i^j}(\auxtp)\,d\auxtp
            \\\overset{(2)}{=}& \,\ca[](a^{\flotheta-1}_i) \cdot \int_{\tp-1}^\tp f_{a_i^{\flotheta-1}}(\auxtp)\,d\auxtp 
            \ +\ \ca[](a^{\flotheta}_i) \cdot \int_{\tp-1}^\tp f_{a_i^{\flotheta}}(\auxtp)\,d\auxtp
            \\\overset{(3)}{=}& \, \ca[](a^{\flotheta-1}_i) \cdot \int_{\tp-1}^\tp \indic{\big[\flotheta-1, \flotheta\big)}(\auxtp)\,d\auxtp 
            \ +\ \ca[](a^{\flotheta}_i) \cdot \int_{\tp-1}^\tp \indic{\big[\flotheta, \flotheta+1\big)}(\auxtp)\,d\auxtp
            \\\overset{}{=}& \,\ca[](a^{\flotheta-1}_i) \cdot (\flotheta - \tp +1) \ +\ \ca[](a^{\flotheta}_i) \cdot (\tp-\flotheta),
        \end{align*}
        where $a_i^j$ is the unique arc in $p_i\cap A_j$ for $j\in \zeroto{m+1}$.
        Equality~(1) 
        preserves  in the sum  only those arcs of path~$p_i$ that have non-zero flow rate in time period~$[\tp-1,\tp]$.
        Equality~(2) is true since $\lfloor\tp\rfloor-1$ and $\lfloor\tp\rfloor$ are exactly the two integers with $[j,j+1)\cap[\tp-1,\tp] \neq \emptyset$.
        It expresses the fact that, as the flow is sent for exactly one time unit, at most two incident arcs of $p_i$ carry flow and have an impact on the cost.
        For equality~(3), we substitute the expression for the flow rate.
        
        Equations above imply that the cost at an integer time point $\tp \in \N$ is
        \[
            c(f_i, \tp) = \ca[]{(a_i^{\tp-1})},
        \]
        while the cost at a fractional time point $\tp \notin \N$ is a convex combination of the costs of the two surrounding integer time points, and thus
        \[
            c(f_i, \tp) \leq \max\big\{c(f_i, \flotheta), c(f_{i}, \left\lceil{\tp}\right\rceil)\big\}.
        \]
        The entire flow thus also attains its peak cost at an integer time point:
        \[
        \cmax =
        \max_{\tp\in \{1,\ \ldots,\ m+2\}} \sum_{i=1}^{n} c(f_i, \tp)
                =        
        \max_{j\in \{0,\ \ldots,\ m+1\}} \sum_{i=1}^{n} \ca[](a_i^{j})
        =        
        \max_{j\in \{1,\ \ldots,\ m\}} \sum_{i=1}^{n} \ca[](a_i^{j}),
        \]
        as $\ca[](a_i^{0}) = \ca[](a_i^{m+1}) = 0$ for all $i\in \oneto{n}$.
        
    The considerations above hold for any integral \TR flow of value $n$ in the constructed instance~\iinst.
    Next, we show that instance \inst is a Yes-instance of 3-SAT if and only if \iinst is a Yes-instance of \genProb, i.e.~if it admits a flow with \peakcosts at most $2$.
    
    Let \inst be a Yes-instance, and let $\phi\colon \{X_i\}_{i=1}^n\to\{\True,\ \False\}$ be a satisfying truth assignment.
    Then each clause contains at least one literal with value \True.
    We construct a corresponding flow over time~$f$ for instance~\iinst as follows:
    for each $i\in \oneto{n}$ with $\phi(X_i) = \True$, send flow at rate one in time period $[0,1)$  over the path 
    \[
    p_i \coloneqq 
    q_i = (s,w_i^0,\ldots, w_i^m,v_i,t),
    \] 
    and for each $i\in \oneto{n}$ with $\phi(X_i) = \False$, send one unit of flow over the path 
    \[
    p_i \coloneqq 
    \overline{q_i} = (s, \overline{w}_i^0,\ldots, \overline{w}_i^m,v_i,t).
    \]
    As discussed above, the constructed flow $f$ is a feasible flow of value $n$ for time horizon $\T$.
    Its cost at any integer time point $\tp \in \{2,\ldots, m+1\}$ and for the corresponding clause number $j\coloneqq \tp-1$ is 
    \begin{align}
        c(f, \tp) 
        = 
        \sum_{i=1}^{n} c(f\vert_{p_i}, \tp) \tag{$*$} \label{eq:flow2clause}
        & \overset{(1)}{=}
        \sum_{i=1}^{n} \ca[](
        p_i\cap A_{j}
        )
        \\\nonumber
        & \overset{(2)}{=}
        \sum_{i\colon\!X_i\in C_{j}} \indic{\overline{q}_i}(p_i) + \sum_{i\colon\!\overline{X_i} \in C_j} \indic{q_i}(p_i)
        \\\nonumber
        & \overset{(3)}{=} \sum_{\substack{i\colon\;X_i\in C_{j}\\\phi(X_i)=\False}} 1 + \sum_{\substack{i\colon\;\overline{X_i} \in C_j\\ \phi(X_i)=\True}} 1\\\nonumber        
        & = \sum_{\substack{\ell\in C_{j}, \\ \phi({\ell}) = \False}} 1 
        \\\nonumber
        &\leq 2.
    \end{align}
    Equality~(1) holds since only arcs corresponding to the~$j$-th clause incur costs at time point~$\tp$. 
    Equality~(2) is true since only the occurrence of the positive literal~$X_i$ in clause~$C_j$ implies cost of one on path~$\overline{q}_i$, and only the occurrence of negative literal~$\overline{X}_i$ in clause~$C_j$ implies cost of one on path~$q_i$; these cost apply only if the said path coincides with $p_i$, which is denoted by the indicator functions.
    Equality~(3) uses the correspondence between the paths of the constructed flow $f$ and the truth assignment.
    The last inequality is true, since clause $C_{j}$ contains, by assumption, at least one literal $\ell$ with value $\phi(\ell) = \True$.
    Overall, since the cost at every integer time point is at most two, we obtain $\cmax \leq 2$.

    Now let the instance \iinst be a Yes-instance, and let $f$ be a flow with $\cmax \leq 2$.
    There is exactly one flow unit traversing either the path $q_i$ or the path $\overline{q}_i$ for each $i\in \oneto{n}$.
    We define a truth assignment $\phi$ of the variables in instance \inst as follows:
        $\phi(X_i) \coloneqq \True$ if and only if $f(q_i) > 0$.
    By equations \eqref{eq:flow2clause}, the number of unsatisfied literals in a clause $C_j$, $j\in \oneto{m}$, is
    \[
        \abs{\{\ell\in C_{j} \mid \phi({\ell}) = \False\}} 
        {=} c(f,\:j+1) \leq \cmax \leq 2. 
    \]
    Hence, $\phi$ is a satisfying assignment for instance \inst, and \inst is a Yes-instance.
\end{proof}

We conclude that optimising the \peakcosts over integer \TR flows is at least \np-hard as well.
\begin{theorem}\label{cor:int-strongNPhard}
    Finding an integral minimum-{peak-cost} \TR flow of a given value and for a given time horizon is at least strongly \np-hard, even for two-terminal series-parallel graphs with unit transit times, unit capacity and two possible arc cost values.
\end{theorem}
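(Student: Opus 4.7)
The plan is a standard transfer of hardness from the decision version of the problem, established in Theorem~\ref{lem:int-strongNPhard}, to its optimisation counterpart. The key observation is that the optimisation problem is at least as hard as the decision problem: any polynomial-time algorithm that, given an instance $(\net, \T, \dem)$, returns an integral \TR flow $f^*$ minimising $\cmax$ can be used to decide whether an integral \TR flow with $\cmax \leq z$ exists, by simply computing $\cmax[f^*]$ and comparing it to~$z$. A polynomial-time optimisation algorithm would therefore yield a polynomial-time decision algorithm, contradicting Theorem~\ref{lem:int-strongNPhard} unless $\p = \np$.

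To upgrade \np-hardness to \emph{strong} \np-hardness, I would inspect the construction in the proof of Theorem~\ref{lem:int-strongNPhard} and verify that all numerical parameters of the instance~$\iinst$ produced from a 3-SAT instance~$\inst$ are polynomially bounded in~$n + m$: the time horizon is $\T = m+4$, the demand is $\dem = n$, capacities and transit times are unit, arc costs lie in $\{0,1\}$, and the threshold is $z = 2$. Since 3-SAT contains no numerical data and is strongly \np-hard in the trivial sense, and since the reduction introduces only polynomially bounded numbers (each expressible in unary), the decision problem is strongly \np-hard under the stated restrictions, namely two-terminal series-parallel graphs with unit transit times, unit capacities, and two possible arc-cost values. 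The decision-to-optimisation reduction described above preserves all these structural properties, so the optimisation problem inherits strong \np-hardness in exactly the same restricted setting.

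I do not anticipate a substantive obstacle: the entire argument amounts to bookkeeping on top of Theorem~\ref{lem:int-strongNPhard}. The only point worth double-checking is that none of the quantities constructed in that reduction grow super-polynomially in $n+m$, which is immediate from the explicit formulas for $\T$, $\dem$, the arc costs, and the threshold~$z$.
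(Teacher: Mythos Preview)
Your proposal is correct and mirrors the paper's own treatment: the paper does not give a separate proof but presents the theorem as an immediate corollary of Theorem~\ref{lem:int-strongNPhard}, relying on the same decision-to-optimisation transfer you describe. Your explicit check that all numerical parameters in the 3-SAT reduction ($\T=m+4$, $\dem=n$, $z=2$, unit capacities and transit times, costs in $\{0,1\}$) are polynomially bounded is precisely the bookkeeping needed to justify the adjective ``strongly'', which the paper leaves implicit.
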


    The key mechanism of the reduction in the proof of \Cref{lem:int-strongNPhard} 
    is that each unit of flow corresponds to a binary variable: the flow unit decides for strictly one of the two alternative paths $q_i$ or $\overline{q}_i$, corresponding to the two literals of the variable. 
    
    Let us call flows in which each flow unit follows exactly one $s$-$t$ path \emph{unsplit}.
    In the reduction above, the unsplit property is ensured by the requirement for the flow to be temporally repeated and integral.

    The statement of \Cref{lem:int-strongNPhard} as well as the reduction construction hold analogously for the discrete time model.
    In the discrete model, the flow is partitioned into singleton units that move through the network as a whole and can depart only at given discrete time points.
    Since, by design of the network, the flow can depart at the source only at time point $0$, any feasible flow for the instance is temporally repeated; hence, we can relax this requirement on the sought flow.
    However, the reduction design still requires the flow to be unsplit.
    In the discrete time model, this property is ensured by the \emph{integrality} constraint alone. 
    We conclude: in the discrete time model, finding a \minpeakcosts integral flow is \np-hard already on series-parallel graphs with unit capacities and transit times. 
    
    In the following sections, we continue the complexity analysis on two special classes of instances that are not covered by the reduction in \Cref{lem:int-strongNPhard}.
	\newcommand{\w}[1][p]{\omega_{#1}}
\newcommand{\upFr}[1][p]{u^{\text{fr}}(#1)}
\newcommand{\algoname}{MSSP\xspace}
\newcommand{\pp}{\ensuremath{p^{*}}\xspace}
\renewcommand{\aa}{\overline{a}}
\newcommand{\atil}{\tilde{a}}
\newcommand{\subp}{\ensuremath{\restr[\pp]{v,w}}}
\newcommand{\za}[1][a]{Z(#1)}
\newcommand{\ytarg}{Y^{*}}
\newcommand{\qhat}{\hat{q}}
\newcommand{\qq}{\hat{q}}

\section{Unit-cost networks}\label{sec:unitCosts}
Having seen that the problem is strongly \np-hard in general, we identify two polynomially solvable cases in this and the next section.
In the proof of \Cref{lem:int-strongNPhard}, the cost function used in the reduction has values in $\{0,1\}$. 
The proof transfers to any cost function with at least two different cost values.

Now we consider the complementary case of unit arc costs.
In this case, the cost of a flow at any time point is proportional to the amount of flow in the network.
This allows us to derive a more specific expression for the \peakcosts value.

\begin{lemma}\label{lem:unitCostNet}
    Let $f$ be a \TR flow with time horizon $\T$ on a network $(G, \ua[], \trt[], \ca[])$ with unit costs, i.e.~$c\equiv 1$. Then flow~$f$ attains \peakcosts at time $\hat\tp \coloneqq \lfloor\frac{\T}{2}\rfloor$, i.e.
    \[
        \cmax = \ca[](f, \left\lfloor\frac{\T}{2}\right\rfloor)
    \]
    and has \peakcosts
    \begin{equation*}\tag{$\star$}\label{eq:costForUnitCase}
    	\cmax = \ca[](f, \left\lfloor\frac{\T}{2}\right\rfloor) 
    	= \sum_{\substack{p\in \paths,\\\trtp\leq \frac{\T}{2}}} \pflow \cdot \trtp + \sum_{\substack{p\in \paths,\\\trtp > \frac{\T}{2}}} \pflow \cdot \big(\T-\trtp\big),
    \end{equation*}
    where \pathdecomp is the path decomposition of flow $f$.
\end{lemma}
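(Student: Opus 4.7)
The plan is to reduce the statement about the entire TR flow to the corresponding statement about a single chain flow, and then to sum up. The first step will be to use that the arc costs are all $1$ to rewrite
\[
c(f,\tp) = \sum_{a\in A}\int_{\tp-\trt}^{\tp} f_a(\auxtp)\,d\auxtp,
\]
i.e.\ the cost at time $\tp$ is simply the total volume of flow currently in transit in the network. By the definition of a TR flow as a sum of chain flows (\Cref{def:chainFlow}) and linearity of integration, this volume splits as $c(f,\tp) = \sum_{p\in\paths,\,\trtp\leq\T} c(f^{\T}_p,\tp)$, so it suffices to establish \eqref{eq:costForUnitCase} one chain flow at a time.

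Fix a path $p$ with $\trtp\leq\T$ and rate $\w\coloneqq\pflow$. A flow particle injected at time $s\in[0,\T-\trtp)$ reaches $t$ at time $s+\trtp$, so the particles currently in the network at time $\tp$ are exactly those with $s\in\bigl[\max(0,\tp-\trtp),\ \min(\tp,\T-\trtp)\bigr)$. Hence
\[
c(f^{\T}_p,\tp) = \w\cdot L(\tp), \qquad L(\tp) \coloneqq \min(\tp,\T-\trtp)-\max(0,\tp-\trtp).
\]
A short case distinction on whether $\trtp\leq\T/2$ or $\trtp>\T/2$ shows that $L$ is a piecewise-linear trapezoid whose maximum equals $\min(\trtp,\T-\trtp)$, and is attained on the closed interval $[\alpha,\beta]$ with $\alpha\coloneqq\min(\trtp,\T-\trtp)$ and $\beta\coloneqq\max(\trtp,\T-\trtp)$.

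The key observation -- and the only step where one has to be slightly careful -- is that this plateau always contains the integer time point $\hat\tp=\lfloor\T/2\rfloor$. Indeed, since $\trtp\in\N$, if $\trtp\leq\T/2$ then $\alpha=\trtp\leq\lfloor\T/2\rfloor$ and $\beta=\T-\trtp\geq\lceil\T/2\rceil\geq\lfloor\T/2\rfloor$; if $\trtp>\T/2$ then $\alpha=\T-\trtp<\T/2$, but $\T-\trtp\in\Z$ forces $\T-\trtp\leq\lfloor\T/2\rfloor$, while $\beta=\trtp\geq\lceil\T/2\rceil\geq\lfloor\T/2\rfloor$. Thus every chain flow simultaneously attains its individual peak at $\hat\tp$, and therefore
\[
c(f,\hat\tp) = \sum_{p\in\paths,\,\trtp\leq\T} \w\cdot\min(\trtp,\,\T-\trtp) \ \geq\ c(f,\tp) \quad\text{for all }\tp\in[0,\T).
\]
Splitting the last sum according to whether $\trtp\leq\T/2$ or $\trtp>\T/2$ yields exactly the formula \eqref{eq:costForUnitCase}, and the supremum is attained, proving that $\cmax=c(f,\hat\tp)$.

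The main obstacle, as noted, is the verification that $\lfloor\T/2\rfloor$ lies in the peak plateau for every path simultaneously when $\T$ is odd; this is handled by the integrality of $\trtp$ and $\T-\trtp$. Everything else is a routine computation of the trapezoid $L(\tp)$ and a summation.
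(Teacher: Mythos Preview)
Your proof is correct and follows essentially the same route as the paper: decompose the TR flow into chain flows, observe that with unit costs the cost at time $\tp$ equals the total flow mass in transit, compute the trapezoidal profile of each chain flow, and verify that $\lfloor\T/2\rfloor$ lies in every plateau so that the individual maxima are attained simultaneously. If anything, your treatment of the odd-$\T$ case via integrality of $\trtp$ is more explicit than the paper's, which simply asserts that the maximum is attained ``in particular, at time $\hat\tp$'' without spelling out why.
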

\begin{proof}
    Let \pathdecomp be the underlying path decomposition of the flow $f$.
    Recall that flow $f$ is a sum of chain flows $f^{\T}_p$ for $p\in \paths$ with $\pflow>0$
    (see \Cref{def:chainFlow}).
    Since the arc costs are all equal to one, the flow's \instantcosts $\tp \in [0,\T]$ is equal to the \emph{amount of flow} present in the network at the considered time point $\tp$, denoted by $\text{val}(f,\tp)$. 
    We calculate the flow value for each time point and for each chain flow $f^{\T}_p$ separately.

    The chain flow $f^\T_p$ for $p\in \paths$ departs at node $s$ in time period $\big[0, \T-\trtp\big)$ and reaches the sink $t$ in period~\mbox{$[\trtp, \T)$}. 
    Flow that reaches the sink node 
    disappears from the network.
    Hence, the amount of flow in the network grows in the period $\big[0, \T-\trtp\big)$ and diminishes in the period $[\trtp, \T)$.
    If the transit time of a path is~\mbox{$\trtp \leq \frac{\T}{2}$}, and thus $\trtp \leq \T - \trtp$ holds, then 
    \[
    \text{val}(f_p^\T, \tp) = \pflow \cdot \begin{cases}
        \tp, &\text{if } \tp < \trtp,\\
        \trtp, &\text{if } \trtp \leq \tp \leq \T - \trtp,\\
        \T - \tp, &\text{if } \tp > \T - \trtp.
    \end{cases}
    \]
    Hence, the maximum amount of flow is contained in the network in period $\big[\trtp,\ \T-\trtp\big)$, and, in particular, at time $\hat{\tp}$.
    
    If the transit time of the path is $\trtp > \frac{\T}{2}$ and $\trtp > \T - \trtp$, then 
    the last unit of the flow departs form the source before the first unit arrives at the sink, and the amount of flow on the path is thus 

    \[
    \text{val}(f_p^\T, \tp) = \pflow \cdot \begin{cases}
        \tp, &\text{if } \tp < \T -\trtp,\\
        \T - \trtp, &\text{if } \T - \trtp \leq \tp \leq \trtp,\\
        \T - \tp, &\text{if } \tp > \trtp,
    \end{cases}
    \]
    which attains its maximum at $\tp = \hat\tp$.

    In total, the value of the \peakcosts is
	\[
        \cmax = \ca[](f, \left\lfloor\frac{\T}{2}\right\rfloor) 
        = \sum_{p\in \paths} \text{val}(f^\T_p, \left\lfloor\frac{\T}{2}\right\rfloor)
        = \sum_{\substack{p\in \paths,\\\trtp\leq \frac{\T}{2}}} \pflow \cdot \trtp + \sum_{\substack{p\in \paths,\\\trtp > \frac{\T}{2}}} \pflow \cdot \big(\T-\trtp\big).
	\]%
\end{proof}

The property shown in \Cref{lem:unitCostNet} will be central in this section.
Therefore, we introduce the some auxiliary notation.

For a path~$p \in \paths$ and time horizon~$\T$ we define the \emph{weight}~$\w$ as 
\[	\w = \begin{cases}
	\trtp,&\quad \text{if } \trtp \leq \frac{\T}{2},\\
	\T-\trtp, &\quad \text{otherwise}. 
\end{cases}
\]
The weight of a path is the factor with which its chain flow contributes to the \peakcosts in networks with unit costs; that is, we restate expression~\eqref{eq:costForUnitCase} as
\[ \cmax = \sum_{p\in \paths} \pflow\w.
\]
Note further that $\w  = \min\{\trtp,\ \T - \trtp\}\leq \T - \trtp$ for any path $p$.

Equation~\eqref{eq:costForUnitCase} suggests that there is a structural difference between paths of length below and above the half of the time horizon $\frac{\T}{2}$.
In general, networks contain $s$-$t$ paths of both types.
The following special cases are therefore of particular interest.

\begin{definition}[Short and long time horizon]
	For an instance of \genProb with network $\net = \netwDef$ and time horizon~$\T$, we say that 
	the time horizon is \emph{short} if for any $s$-$t$~path~$p$ in~$\net$ we have $\trtp > \frac{\T}{2}$.
	Analogously, we say that \T is \emph{long} if we have $\trtp \leq \frac{\T}{2}$ for all $s$-$t$~paths~$p$.
\end{definition}

\begin{remark}\label{rem:unit-short-dem-val}
	Consider an instance of \genProb with network $\net = \netwDef$ with unit cost and a short time horizon~$\T$.
	Then, any chain flow $f^\T_p$ along path $p$ with flow rate $\pflow$ has flow value $\val[f^{\T}_{p}] = \pflow \cdot (\T-\trtp)$.
	For a \TR flow $f$ in $\net$ with path decomposition $\pathdecomp$ we have 
	\[	\val = \sum_{p\in \paths} \pflow (\T-\trtp) \qquad\text{ and }\qquad \cmax = \sum_{p\in \paths} \pflow \w =  \sum_{p\in \paths} \pflow (\T-\trtp) = \val
	\]
	by~\Cref{lem:unitCostNet}.
	That is, any \TR flow satisfying the given demand \dem has the same cost, and finding one feasible solution with demand of exactly \dem suffices to solve the optimisation problem.
\end{remark}
A feasible fractional solution of value exactly \dem is obtained from a maximum \TR flow by iteratively decreasing the flow rates along the paths of the path decomposition.
Since the Ford-Fulkerson algorithm for maximum \TR flows yields a path decomposition with polynomially many flow-carrying paths, reducing the flow until its value equals \dem is also possible in polynomial time.

Hence, we note the following result.
\begin{corollary}\label{lem:CGforShort}
	Fractional	\genProb is polynomial-time solvable on unit-cost networks if the time horizon is short.
\end{corollary}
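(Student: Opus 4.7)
The plan is to reduce the problem to two easy subproblems using \Cref{rem:unit-short-dem-val}. That remark establishes that, in a unit-cost network with short time horizon, every \TR flow~$f$ satisfies $\cmax = \val$. Consequently, minimising the peak cost subject to $\val \geq \dem$ is equivalent to minimising the flow value subject to $\val \geq \dem$, so the optimum is attained by any feasible \TR flow whose value equals exactly~$\dem$.

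First, I would invoke \Cref{thm:FoFu} to compute a maximum \TR flow $f^{\max}$ in polynomial time, together with its underlying $\T$-bounded path decomposition $\pflowvar\colon\paths\cup\cycles\to\Q_{\geq 0}$. By the standing assumption $\dem \leq \dem^{\text{max}} = \val[f^{\max}]$, the instance is feasible. If equality holds, $f^{\max}$ itself already realises $\val = \dem$ and we are done.

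Otherwise, I would construct a \TR flow of value exactly $\dem$ by iteratively scaling down the flow rates on flow-carrying paths of the decomposition. Concretely, as long as the current value exceeds $\dem$, pick any path $p$ with $\pflow > 0$ and decrease $\pflow$ by $\min\{\pflow,\, (\val - \dem)/(\T - \trtp)\}$; by \Cref{lem:TRflowValue} (specialised to the short-horizon case in \Cref{rem:unit-short-dem-val}) this reduces $\val$ by the chosen amount, which is either a complete removal of that path or drives $\val$ down to exactly $\dem$. Since the Ford--Fulkerson procedure yields a decomposition with polynomially many flow-carrying paths, at most a polynomial number of such updates is required, and each can be carried out in constant arithmetic operations on rationals of polynomial encoding length.

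There is no real obstacle: the whole argument is a short chain of reductions from \Cref{rem:unit-short-dem-val} and \Cref{thm:FoFu}. The only point that merits a sentence of care is justifying that scaling down rates on existing flow-carrying paths preserves $\T$-boundedness and feasibility (both are immediate, since we only decrease rates on paths that were already $\T$-bounded and feasible in $f^{\max}$), after which the polynomial running time is apparent.
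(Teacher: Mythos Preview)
Your proposal is correct and follows essentially the same approach as the paper: use \Cref{rem:unit-short-dem-val} to conclude that $\cmax = \val$, so it suffices to exhibit a feasible \TR flow of value exactly~$\dem$, and obtain one by computing a maximum \TR flow via \Cref{thm:FoFu} and iteratively decreasing the flow rates along the (polynomially many) paths of its decomposition. The only addition you provide over the paper is the explicit update rule and the remark on preserving $\T$-boundedness and feasibility, both of which are welcome but not required.
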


In the integral case, greedily decreasing the flow rates of a path decomposition does not necessarily lead to a flow of value exactly \dem, if such a flow exists at all. 
In fact, even if such a flow exists, it is \np-hard to find a corresponding integral path decomposition, as \Cref{thm:SSP-reduct} demonstrates.

\begin{theorem}\label{thm:SSP-reduct}
	Integral \genProb with unit cost is at least weakly \np-hard, even for a short time horizon and on series-parallel graphs.
\end{theorem}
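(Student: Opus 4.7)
\bigskip\noindent\textbf{Proof plan.}
The plan is to reduce from \SubSum, which is classically weakly \np-hard. Given an instance $(a_1, \ldots, a_n; T)$ with positive integers $a_i$ and target $T$, I would construct an instance of \genProb on a two-terminal series-parallel graph consisting of $n$ internally disjoint $s$-$t$ paths. Concretely, let $A := \sum_{i=1}^n a_i$, choose the time horizon $\T := 2A + 1$, and for each $i \in \oneto{n}$ build a simple $s$-$t$ path $p_i$ as a series of $\T - a_i$ unit-transit-time arcs with unit capacity and unit cost, so that $\trtp[p_i] = \T - a_i$. The entire graph is then the parallel composition of the $p_i$.

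Next I would verify the structural requirements of the theorem. The graph is a parallel composition of series compositions of arcs, hence two-terminal series-parallel. Since $a_i \leq A$, each path satisfies $\trtp[p_i] \geq A+1 > \T/2$, so the time horizon is short. The instance size is polynomial in $n$ and $A$, so this is a pseudo-polynomial, hence weakly-\np-hardness-preserving, reduction. Finally I would set the demand to $\dem := T$ and threshold the peak cost by $z := T$.

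The equivalence then follows from the earlier results for unit-cost networks. Because capacities are unit and the paths are internally disjoint, any integral \TR flow $f$ in the constructed network corresponds to a subset $S \subseteq \oneto{n}$ of flow-carrying paths, each used at rate $1$. By \Cref{rem:unit-short-dem-val}, such a flow has value and peak cost
\[
\val[f] \;=\; \sum_{i \in S} \pflow[p_i]\,(\T - \trtp[p_i]) \;=\; \sum_{i \in S} a_i \;=\; \cmax[f].
\]
Hence $f$ is feasible (value $\geq \dem = T$) with $\cmax[f] \leq z = T$ if and only if $\sum_{i \in S} a_i = T$, which is exactly an affirmative answer to the given \SubSum instance.

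I do not expect any serious obstacle here: the problem essentially hands us a subset-sum gadget because in the short unit-cost regime each path contributes an \emph{independent} integer amount $a_i = \T - \trtp[p_i]$ to both the value and the peak cost of a \TR flow. The only care needed is bookkeeping, namely (i) choosing $\T$ large enough that every $p_i$ is short (done via $\T = 2A+1$), and (ii) confirming that integrality plus unit capacities genuinely forces a zero/one choice per path (so the flow space coincides with the subset lattice). Both points are immediate from the construction and \Cref{rem:unit-short-dem-val}, so the reduction goes through cleanly and establishes weak $\np$-hardness on unit-cost, short-horizon, series-parallel instances.
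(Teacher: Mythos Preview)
Your overall strategy is the same as the paper's: reduce from \SubSum by building a bundle of parallel $s$-$t$ paths whose individual contributions to both value and peak cost equal the given integers, and then invoke \Cref{rem:unit-short-dem-val}. The correspondence you set up between subsets $S$ and integral \TR flows is exactly right.

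There is, however, a genuine gap. You realise each path $p_i$ as a chain of $\T - a_i = 2A + 1 - a_i$ unit-transit arcs, so the graph you produce has $\Theta(nA)$ arcs. That makes the reduction run in time polynomial in the \emph{magnitudes} of the input numbers, not in their encoding length. Your claim that a ``pseudo-polynomial'' reduction is ``weakly-\np-hardness-preserving'' is incorrect: pseudo-polynomial reductions preserve \emph{strong} \np-hardness, but \SubSum is only weakly \np-hard (it admits a pseudo-polynomial algorithm), so composing your reduction with the standard SAT-to-\SubSum reduction yields an exponential-time map and proves nothing. To establish (weak) \np-hardness you still need a genuinely polynomial-time reduction.

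The fix is immediate and is precisely what the paper does: do not subdivide. The theorem only asks for unit \emph{costs}, not unit transit times, so take each $p_i$ to be a single arc from $s$ to $t$ with transit time $\T - a_i$ (or, if you want a simple graph, two arcs whose transit times sum to $\T - a_i$). The paper uses $\T = 2\max_i x_i$ and parallel arcs with $\trt[a_i] = \T - x_i$, giving an instance of size $O(n\log\max_i x_i)$; your argument via \Cref{rem:unit-short-dem-val} then goes through verbatim.
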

\begin{proof}
	We show the statement by reducing positive \SubSum to \genProb.
	An instance of positive \SubSum is given by a (multi-)set $\{x_1,\ldots, x_n\}$ of $n$  positive integers and a target sum $L\in \N$.
	The problem asks for a subset $S\subseteq \oneto{n}$ of indices such that the corresponding sum of integers is $\sum_{i\in S}x_i = L$.
	
	Given an instance $\inst$ of \SubSum, we construct a corresponding instance of \genProb as follows.
	The graph $G= (V,A)$ consists of nodes $s$ and $t$ and of $n$ distinct arcs $a_i$, $i\in \oneto{n}$, connecting $s$ and $t$.
	We set the cost and the capacity of each arc to one.
	Before we define transit times, we first set the time horizon to~$\T = 2\cdot \max_{i\in \oneto{n}}{x_i}$, and the demand to $\dem\coloneqq L$.
	The transit times now are $\trt[a_i] \coloneqq \T - x_i$ for each $i\in \oneto{n}$.
	
	We show that there exists a feasible integral solution for the \genProb instance with \peakcosts at most $L$ if and only if the instance \inst of \SubSum is a Yes-instance.
	First, let $\inst$ be a Yes-instance, and $S\subseteq\oneto{n}$ be a feasible solution.
	We construct a temporally repeated flow that uses exactly those arcs that correspond to indices in the subset $S$.
	Formally, we identify the set of paths $\paths$ with the arc set $A$ and consider the path decomposition~\mbox{$y\colon A \to \{0,1\}$} with $y(a) = 1$ if and only if $a = a_i$ for an $i\in S$.
	The resulting \TR flow $f$ has value 
	\[\val = \sum_{i\in \oneto{n}} (\T-\trtp[a_i])\cdot y(a_i) = \sum_{i\in S} (\T - (\T - x_i)) = \sum_{i\in S} x_i = L = D.
	\]
	The time horizon was chosen to be short, since~$\trtp[a_i] \geq \T - \max_{i\in \oneto{n}} x_i = \frac{\T}{2}$.
	As the arc costs are unit, by \Cref{rem:unit-short-dem-val} the \peakcosts of flow $f$ is
	\[\cmax = \sum_{i\in \oneto{n}} (\T-\trtp[a_i])y(a_i) = \val[f] = D.
	\]
	Hence, flow $f$ is the requested flow for the \genProb instance.
	
	For the other direction, let $f$ be a feasible integral \TR flow with value $\val \geq D$ and \peakcosts $\cmax \leq D$.
	Let $y\colon A \to \Z_{+}$ be its path decomposition.
	By \Cref{rem:unit-short-dem-val} we have $\val = \cmax = D$ and the flow value is 
	\[\val = \sum_{i\in \oneto{n}} \big(\T-\trtp[a_i]\big) y(a_i),
	\]  
	where $y(a_i) \in \{0,1\}$ due to arc capacities and flow's integrality.
	Hence, \[
	\sum_{i\in \oneto{n}\colon y(a_i)=1} \big(\T-\trtp[a_i]\big) = 	\sum_{i\in \oneto{n}\colon y(a_i)=1} x_i = D = L,
	\]
	so the index set $S\coloneqq\{ i\in \oneto{n}\mid y(a_i)=1\}$ is a solution for the \SubSum instance.
	
	Hence, integral \genProb is at least weakly \np-hard on series-parallel networks with unit costs.
\end{proof}

Series-parallel graphs tend to render flow problems easier to solve. 
Nevertheless, for the sake of completeness, note that a reduction similar to the proof of~\Cref{thm:SSP-reduct} exists for non-series-parallel graphs as well.

Since integral \genProb is \np-hard on unit-cost networks with a short time horizon, the same complexity results holds for general time horizons, i.e.~time horizons $\T$ such that the network contains paths with transit time both smaller and greater than $\frac{\T}{2}$. 
	Next, we show that, in contrast,  a \emph{maximum} \TR flow with minimum \peakcosts, i.e.~if a flow satisfying a demand equal to the maximum possible flow value, can be computed in polynomial time for unit-cost series-parallel networks.
To this end, we establish a link between \minpeakcosts maximum flows and earliest arrival flows.

\newcommand{\arr}[2][f]{\text{arr}_{#1}(#2)}
For a flow over time $f$ and a time point $\tp\geq 0$, let $\arr{\tp}$ denote the amount of flow that has reached the sink by time $\tp$.
An \emph{earliest arrival flow} $f$ is a feasible flow over time with the following property: the amount of flow $\arr{\tp}$ arrived at the sink by time $\tp$ is maximal for all $\tp \in [0,\T]$ simultaneously.
Clearly, earliest arrival flows are maximum flows.

For a \TR flow $f$ with a path decomposition $\pathdecomp$, the flow amount that reached the sink by time $\tp$ is 
\[
\arr{\tp} = \sum_{p\in \paths} \pflow \cdot \max\{\tp - \trtp,\ 0\} = \sum_{\substack{p\in \paths\\\trtp \leq \tp}} \pflow\cdot \big(\tp-\trtp\big).
\]

Next, we consider expression \eqref{eq:costForUnitCase} for the \peakcosts of a \TR flow in the case of unit costs and transform it as follows:
\begin{align*}
	\cmax =& \sum_{\substack{p\in \paths\\\trtp\leq \frac{\T}{2}}} \pflow \cdot \trtp + \sum_{\substack{p\in \paths\\\trtp > \frac{\T}{2}}} \pflow \cdot \big(\T-\trtp\big)\\
	=& \sum_{\substack{p\in \paths\\\trtp\leq \frac{\T}{2}}} \pflow \cdot \big(2\trtp - \T + \T - \trtp\big) +
	\sum_{\substack{p\in \paths\\\trtp > \frac{\T}{2}}} \pflow \cdot \big(\T-\trtp\big)\\
	=& \sum_{\substack{p\in \paths\\\trtp\leq \frac{\T}{2}}} \pflow \cdot (2\trtp - \T) + 
	\sum_{{p\in \paths}} \pflow \cdot \big(\T-\trtp\big)\\
	=& \sum_{\substack{p\in \paths\\\trtp\leq \frac{\T}{2}}} \pflow \cdot (2\trtp - \T) + \val\\
	=& -2 \sum_{\substack{p\in \paths\\\trtp\leq \frac{\T}{2}}} \pflow \cdot \left( \frac{\T}{2} - \trtp\right) + \val\\
	=&\, \val - 2\cdot\arr{\frac{\T}{2}}.
\end{align*}
Hence, for maximum \TR flows on unit-cost networks, minimising the \peakcosts is equivalent to maximising the amount of flow reaching the sink by time $\frac{\T}{2}$. 
Consequently, earliest arrival \TR flows have the smallest \peakcosts among maximum \TR flows.

Earliest arrival flows always exist in a network with a single source and a single sink \cite{Gale1959, Philpott90}; 
however, in general, \TR flows do not have the earliest arrival property.
Series-parallel graphs present an exception: 
Ruzika et al.~\cite{RuzikaSS11} show existence of a \TR flow which is an earliest arrival flow; moreover, this flow is found by a greedy polynomial-time algorithm. 
The algorithm is a variant of the successive shortest path algorithm by Bein et al.~\cite{BeinBT85} and builds the solution iteratively starting with an empty flow. 
In each step, it finds an $s$-$t$ path $p$ with the shortest transit time. 
As long as the path's transit time is smaller than the time horizon~$\T$, the algorithm adds a chain flow along path $p$ with a flow rate equal to its bottleneck capacity~$\ua[](p) \coloneqq \min_{a\in A} \ua$ to the solution.
Capacities of all arcs of path $p$ are then reduced by $\ua[](p)$. 
The algorithm stops once there are no~$s$-$t$ paths shorter than $\T$.

If all capacities are integral, then so are the resulting flow values. This final observation leads to the following result.

\begin{theorem}\label{thm:unit-cost-maxflow}
	An integral \minpeakcosts maximum \TR flow in a unit-cost series-parallel network can be found in strongly polynomial time. 
\end{theorem}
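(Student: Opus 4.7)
The plan is to combine the reformulation of the peak cost derived just before the theorem with the known earliest-arrival-flow algorithm for series-parallel networks. The main observation to exploit is the identity
\[
\cmax = \val - 2\cdot \arr{\tfrac{\T}{2}},
\]
valid for any \TR flow in a unit-cost network. Among all \emph{maximum} \TR flows the value $\val$ is fixed, so minimising \peakcosts is equivalent to maximising $\arr{\T/2}$, the amount of flow that has reached the sink by the midpoint of the time horizon.

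First I would invoke the result of Ruzika et al.~\cite{RuzikaSS11}: on a series-parallel network there exists a \TR flow $f^{\text{EA}}$ which is earliest arriving, and it is produced in strongly polynomial time by the successive shortest path variant described in the paragraph preceding the theorem. By definition, an earliest arrival flow simultaneously maximises $\arr{\tp}$ for \emph{every} time point $\tp \in [0,\T]$; in particular it maximises $\arr{T/2}$ and it is a maximum flow over time (take $\tp = \T$). Combined with the identity above, $f^{\text{EA}}$ therefore attains $\cmax[f^{\text{EA}}] = \val[f^{\text{EA}}] - 2\arr[f^{\text{EA}}]{\T/2}$, and any other maximum \TR flow $f$ satisfies $\arr[f]{\T/2} \leq \arr[f^{\text{EA}}]{\T/2}$ and $\val[f] = \val[f^{\text{EA}}]$, hence $\cmax[f] \geq \cmax[f^{\text{EA}}]$. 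This proves optimality of $f^{\text{EA}}$ among maximum \TR flows.

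For integrality I would verify that the successive shortest path procedure produces integral flow rates whenever all arc capacities are integral. Each iteration augments along a single shortest $s$-$t$ path by its bottleneck capacity $\ua[](p) = \min_{a \in p} \ua$, which is an integer under our assumption, and then reduces residual capacities by this integer amount. Inductively, all intermediate residual capacities stay integral, so every chain flow added has an integer rate. The resulting \TR flow is therefore integral. Since the algorithm terminates in strongly polynomial time (only polynomially many iterations, each dominated by a shortest path computation on a series-parallel graph), the entire procedure runs in strongly polynomial time.

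I do not expect a serious obstacle: the heavy lifting has already been done by the peak-cost reformulation and by the Ruzika et al.~algorithm. The only thing to be careful about is to argue that the optimisation is restricted to maximum \TR flows (which is what the theorem is about), so that the equivalence ``small peak cost $\Leftrightarrow$ large $\arr{\T/2}$'' applies uniformly across the feasible set; and to make explicit that earliest arrival implies both the value and the midpoint-arrival maximisation simultaneously, so no trade-off between the two quantities in the identity can arise.
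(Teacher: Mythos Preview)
Your proposal is correct and follows essentially the same argument as the paper: use the identity $\cmax = \val - 2\,\arr{\T/2}$ to reduce minimising peak cost among maximum \TR flows to maximising $\arr{\T/2}$, invoke Ruzika et al.\ for a strongly-polynomial \TR earliest-arrival flow on series-parallel networks, and observe that integral capacities yield integral bottleneck augmentations. There is nothing to add.
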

The described relation between \genProb and earliest arrival flows aligns well with the result of Fleischer and Skutella on minimum-cost flows: they show that in networks with unit costs, the universally quickest flow, i.e.~a flow with both the earliest-arrival and latest-departure property, has minimum cost \cite{FS2003}.

The greedy algorithm described above constructs an integral solution automatically, as long as the arc capacities are integral. 
In other words, on unit-cost series-parallel networks, \genProb with maximum demand has optimal integral solutions.

\begin{remark}
	On non-series-parallel graphs, the greedy algorithm does not even find a maximum flow.
	Consider the minimal not series-parallel graph $G$ show in \Cref{fig:counterex-greedy}.
	We set the costs and capacity of each arc to one; transit times are displayed in the figure.
	We set the time horizon to $\T=2k+2$ and demand to the maximum flow value~$\dem=2k+2$.
	
\begin{figure}
	\centering
	\resizebox{0.25\textwidth}{!}{
	
	\begin{tikzpicture}[>=stealth',shorten >=1pt, shorten <=0.5pt, auto, node distance = 1.5cm]
		\tikzstyle{every state}=[thick, inner sep=0mm, minimum size=5mm]
		
		\node[state] (s)   {$s$};
		\node[state] (v) [above right= 0.8 and 1.5 of s] {$v$};
		\node[state] (w) [below right = 0.8 and 1.5 of s] {$w$};
		\node[state] (t) [below right = 0.8 and 1.5 of v] {$t$};
		
		\path[->] (s) \tauuc{$1$}{}{}{}{} (v)
			\tauuc{$k$}{}{}{}{} (w)
		(v) \tauuc{$1$}{}{}{}{} (w)
		\tauuc{$k$}{}{}{}{} (t)
		(w) \tauuc{$1$}{}{}{}{} (t);
		
	\end{tikzpicture}
	}
	\caption{A network on which the Successive Shortest Path algorithm cannot find a flow of maximum value $2k+2$ for $k\geq 3$. Capacities and costs are unit, transit times are depicted.}\label{fig:counterex-greedy}
\end{figure}
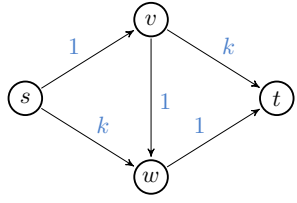
	
	The greedy algorithm chooses the unique shortest path $p = (s,v,w,t)$, which can transport at most \mbox{$1\cdot(\T - \trtp) = 2k-1$} units of flow.
	After path $p$ is used at full capacity, the algorithm can select no further path. 
	The total value of the flow produced by the greedy algorithm is $\val[f^{\textsc{alg}}] = 2k-1$.
	However, sending flow along the two disjoint paths $p_1 = (s,v,t)$ and $p_2 = (s,w,t)$ at rate $\pflow[p_1] = \pflow[p_2] = 1$ 
	yields a flow with total value~\mbox{$2\cdot 1\cdot (\T - \trtp[p_1]) = 2\cdot(k+1) = 2k+2 > \val[f^{\textsc{alg}}]$}.
\end{remark}
	\renewcommand{\xa}[1][a]{Y(#1)}

In the remainder of the section, we consider the fractional version of the problem with unit-cost networks.
We show that a variant of the Successive Shortest Path algorithm yields a strongly polynomial algorithm for series-parallel networks for any demand values.
Moreover, we show that the fractional problem is polynomial-time solvable on any acyclic unit-cost networks.

\subsection{Successive Shortest Path Algorithm for Series-Parallel Graphs}
We show that the greedy Successive Shortest Path Algorithm as in \Cref{thm:unit-cost-maxflow} solves \genProb on series-parallel graphs for any demand value.

For a path $p$, we denote with 
$\ua[p]$ the bottleneck capacity  
\[ \ua[p] \coloneqq \min_{a\in p} \ua,
\]
of the path and, for given a path decomposition $\pathdecomp$ and a subset of arcs $S\subseteq p$, we denote with 
$\upFr[S]$ the remaining free capacity on the arcs of the path, defined as 
$$\upFr[S] \coloneqq \min_{a\in S}\big\{\ua - \sum_{p'\in \paths,\, a\in p'} \pflow \big\},$$
with the standard convention $\upFr[\emptyset] \coloneqq \min_{a\in \emptyset} = \infty$.

The modified greedy algorithm \algoname works as follows: 
we start with an empty flow~$f$.
In each iteration, let~$f$ be the current solution. 
Consider a shortest $s$-$t$~path with free capacity~$\ua[p] > 0$.
If there is no such path, stop: the instance is infeasible.
Otherwise, add a chain flow along~$p$ with rate $\pflow \coloneqq \min\{ \upFr,\, \frac{\dem - \val}{\T-\trtp}\}$.
Iterate until~$\val = \dem$.

We prove that the \algoname algorithm is correct similarly to the reasoning by Bein et al.\cite{BeinBT85}~for static minimum-cost flows. 
We first show that for a shortest path $p$ in the network, there always exists an optimal solution that uses this path at full capacity. 
Afterwards, we show by induction that the algorithm iteratively constructs an optimal solution.

In the following, we write~$p\circ q$ to denote a \emph{concatenation} of simple $s$-$v$ path~$p$ and a simple $v$-$t$ path~$q$. that are disjoint, to an $s$-$t$ path. 
For two paths $p$ and $q$, we denote with $p\setminus q$ the set of arcs $A(p)\setminus A(q)$.
Furthermore, we denote static flows by capital letters (e.g.~$Y\colon A\to \Q$) in contrast to path decompositions of \TR flows, denoted by small letters.
For a path decomposition \pathdecomp, we denote with $\paths_y^{+}$ the set of \emph{flow-carrying paths}, i.e.~paths $p\in \paths$ with $\pflow >0$. 
Finally, we will use notation $(f,y)$ to denote a flow $f$ with its path decomposition $\pathdecomp$.

\begin{lemma}\label{thm:shortestPathFull}
	Let \net be a series-parallel network and \pp a shortest $s$-$t$ path with respect to transit time.
	Then, for demand $\dem \leq D^\text{max}$, there always exists a \minpeakcosts \TR flow of value \dem that \emph{uses path \pp at full capacity}, i.e.~the path decomposition $\pathdecomp$ and the corresponding static flow $Y \colon A\to\Q$ satisfy
	\[ \begin{array}{ll}
		&\xa \geq \ua[\pp], \quad\text{ if } \dem \geq \ua[\pp](\T-\trtp[\pp]), \text{ or}\\[1ex]
		&\xa = \frac{\dem}{\T-\trtp[\pp]}, \text{ otherwise},
	\end{array} 
	\] 
	for each arc $a\in \pp$.
\end{lemma}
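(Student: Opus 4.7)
My plan is to split the argument by whether the demand is small enough for $\pp$ alone to suffice, and to rely on a single ``cost per unit of flow value'' monotonicity inequality comparing $\pp$ to any other $s$-$t$ path. Concretely, I would first establish that for every $s$-$t$ path $p$ with $\trtp \geq \trtp[\pp]$,
\[
\frac{\w}{\T-\trtp} \;\geq\; \frac{\w[\pp]}{\T-\trtp[\pp]}.
\]
A short case analysis on whether $\trtp$ and $\trtp[\pp]$ lie above or below $\T/2$ proves this, using only that $t\mapsto \min\{t,\T-t\}/(\T-t)$ is nondecreasing on $[0,\T)$.

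In the first case $\dem \leq \ua[\pp](\T-\trtp[\pp])$, sending all of the demand along $\pp$ at rate $\dem/(\T-\trtp[\pp])$ is feasible (the rate is at most $\ua[\pp]$), has value exactly $\dem$, and by \Cref{lem:unitCostNet} incurs peak cost $\dem\cdot\w[\pp]/(\T-\trtp[\pp])$. The monotonicity inequality combined with \eqref{eq:costForUnitCase} lower-bounds the peak cost of any feasible flow of value $\dem$ by the same quantity, so this flow is optimal and satisfies $Y(a)=\dem/(\T-\trtp[\pp])$ on every arc of $\pp$, matching the second alternative of the claim.

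In the second case $\dem > \ua[\pp](\T-\trtp[\pp])$, I would argue by an exchange iterated to a fixed point. Starting from any optimal TR flow $(f,y)$ of value $\dem$, suppose some arc $a^* \in \pp$ carries less than $\ua[\pp]$ units; then some flow-carrying path $q$ avoids $a^*$. Using the series-parallel structure of $G$, I would locate two vertices $v,w$ lying on both $\pp$ and $q$ that bracket $a^*$ and for which the $v$-$w$ subpaths $\pp_{vw}$ of $\pp$ and $q_{vw}$ of $q$ are internally arc-disjoint. Since $\pp$ is a shortest $s$-$t$ path, $\trtp[\pp_{vw}] \leq \trtp[q_{vw}]$ (else $q_{vw}$ would shortcut $\pp$). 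The spliced path $q' \coloneqq q\vert_{s,v} \circ \pp_{vw} \circ q\vert_{w,t}$ then satisfies $\trtp[q']\leq \trtp[q]$; rerouting an amount $\Delta$ from $y(q)$ to $y(q')$, scaled by $(\T-\trtp[q])/(\T-\trtp[q'])$ to preserve the flow value, is feasible for sufficiently small $\Delta$. The monotonicity inequality applied to $q'$ and $q$ then shows that the peak cost does not increase. After finitely many such swaps the arc $a^*$ saturates, and iterating over the arcs of $\pp$ brings $Y$ to rate at least $\ua[\pp]$ on all of $\pp$.

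The main obstacle is guaranteeing the existence of the bracketing vertices $v,w$ with internally arc-disjoint subpaths; this is where series-parallelness is essential. Descending the SP decomposition of $G$ to the smallest subcomponent containing $a^*$ and taking $v,w$ as its two terminals does the job: both $\pp$ and $q$ must traverse $v$ and $w$ since these vertices $s$-$t$-separate the enclosing component, and inside the subcomponent $\pp$ and $q$ either coincide (allowing further descent) or lie in distinct parallel branches, making their $v$-$w$ subpaths arc-disjoint. Once this structural step is in place, the monotonicity inequality, the feasibility of $\Delta$, and the termination of the iteration are all routine.
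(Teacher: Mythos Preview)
Your small-demand case is clean and correct; the monotonicity inequality $\w/(\T-\trtp)\ge \w[p^*]/(\T-\trtp[p^*])$ is a nice unifying observation that the paper does not isolate explicitly, and it immediately yields optimality of the pure-$p^*$ flow when $\dem\le \ua[p^*](\T-\trtp[p^*])$.

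The large-demand case, however, has a genuine gap in the feasibility of your swap. After descending the SP decomposition until $p^*$ and $q$ lie in different parallel branches with terminals $v,w$, the subpath $\restr[p^*]{v,w}$ will in general contain several arcs besides $a^*$. You only know that $Y(a^*)<\ua[p^*]\le u_{a^*}$; another arc $b\in \restr[p^*]{v,w}$ can perfectly well satisfy $Y(b)=u_b$ (for instance because a third flow-carrying path $r$ uses $b$ but bypasses $a^*$ through a parallel arc inside the same branch). In that situation no positive $\Delta$ can be pushed onto $q'=\restr[q]{s,v}\circ\restr[p^*]{v,w}\circ\restr[q]{w,t}$, so the swap is infeasible and the iteration stalls. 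This is precisely the obstruction the paper's proof is built around: it separates the arcs of $p^*$ with $Y(a)<\ytarg$ from those with $Y(a)\ge \ytarg$, isolates a \emph{maximal} unsaturated subpath, and then uses the SP structure (\Cref{lem:commonNode}) to argue that every flow-carrying path through an adjacent saturated arc must also traverse the first unsaturated arc, giving a contradiction rather than a direct reroute. Your scheme would need a comparable refinement---restricting the spliced segment to a maximal slack subpath and showing that a suitable $q$ through the neighbouring tight arcs always exists---before the exchange is guaranteed to be feasible. Without this, the phrase ``feasible for sufficiently small $\Delta$'' is unjustified, and so is the claimed finite termination (each swap may also \emph{decrease} $Y$ on arcs of $p^*\cap q$ outside the $v$--$w$ segment, so a naive potential argument does not work either).
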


\begin{proof}
	Let 
	\[
	\ytarg \coloneqq \begin{cases}
		\ua[\pp], \text{ if } \dem \geq \ua[\pp](\T-\trtp[\pp]),\\
		\frac{\dem}{\T-\trtp[\pp]}, \text{ otherwise},
	\end{cases}
	\]
	be the desired value of the static flow on path $\pp$.
	
	We prove \Cref{thm:shortestPathFull} by contraposition.
	Let $f$ be a \minpeakcosts \TR flow of value \dem, and let $\pathdecomp$ be its path decomposition; 
	among all optimal flows, let $f$ be the one with the maximum total value of the static flow on path \pp,
	i.e.~flow~$f$ maximises the value
	\[\sum_{a\in \pp} \xa = \sum_{a\in \pp} \sum_{p\in \paths\colon a\in p} \pflow.
	\]
	
	We assume that none of the optimal flows uses path \pp at full capacity; 
	in particular, for flow~$f$ there are arcs~$a\in \pp$ with~$\xa < \ytarg$.
	Denote the set of such \emph{unsaturated} arcs by $U\coloneqq \{a\in \pp \mid \xa<\ytarg\}$, and the set of \emph{saturated} arcs by 
	$\overline{U} \coloneqq \{a\in \pp \mid \xa \geq \ytarg\}$.
	Note that our notion of saturation is coupled to the bottleneck capacity of the path \pp, that is, saturated arcs can still have slack capacity.
	
	Next, we define an auxiliary \emph{rerouting} operation and prove that if a rerouting to path \pp exists, then flow~$f$ could not have been optimal.
	\begin{claim}\label{cl:reroute}
		Let $q\in \pospaths$ be a flow-carrying path different from $\pp$, and suppose there exists a $\delta>0$ such that~$\delta \leq \pflow[q]$ and $\delta \leq \upFr[\pp\setminus q]$.
		Then a flow $f'$ defined by a path decomposition 
		\[
		z\colon \paths \to\Q,\qquad z(p) = \begin{cases}
			\pflow[q]-\delta, &\quad p=q,\\
			\pflow[\pp] + \delta\cdot\frac{\T-\trtp[q]}{\T-\trtp[\pp]}, &\quad p=\pp\\
			\pflow, &\quad \text{otherwise}.
		\end{cases}
		\] 
		is a feasible flow of value $D$ with either smaller \peakcosts or same \peakcosts and greater amount of total flow on path $\pp$. 
		
		We say that flow $f'$ is obtained by \emph{rerouting the flow} from path $q$ to path $\pp$.
	\end{claim} 
	\begin{claimproof}
		First, we show that the constructed flow $f'$ is feasible.
		Since $\delta \leq \pflow[q]$, all path flow rates remain non-negative. 
		Flow $f'$ also respects capacities of all arcs:
		for any arc $a\in A$, the total static flow $Z$ corresponding to the flow $f'$ has value
		\[	\za = \xa + \indic{\pp}(a)\footnote{Function $\indic{S}\colon \R \to \{0,1\}$ for a set $S\subseteq \R$ is the indicator function with $\indic{S}(x) =1$ if and only if $x\in S$.} 
		\cdot\delta\cdot\frac{\T-\trtp[q]}{\T-\trtp[\pp]} - \indic{q}(a) \cdot\delta =
		\begin{cases}
			\xa - \delta, &\text{ if } a \in q\setminus \pp,\\
			\xa + \delta\cdot\frac{\T-\trtp[q]}{\T-\trtp[\pp]}, &\text{ if } a\in \pp\setminus q,\\
			\xa -\delta(1 - \frac{\T-\trtp[q]}{\T-\trtp[\pp]}), &\text{ if } a\in \pp\cap q,\\
			\xa, &\text{ if } a \notin q\cup\pp.
		\end{cases}
		\]
		Hence, the greatest increase in the total flow value occurs for arcs $a\in \pp\setminus q$, for which we have
		$$\za \leq \xa + \delta\cdot\frac{\T-\trtp[q]}{\T-\trtp[\pp]} \overset{(1)}{\leq} \xa + \upFr[\pp\setminus q] \leq \ua,$$
		where inequality (1) holds since $\trtp[q] \geq \trtp[\pp]$ and since $\delta \leq \upFr[\pp\setminus q]$.

		For the value of the constructed flow over time we have 
		\begin{align*}
			\val[f'] &= \val + \big(z(\pp) - \pflow[\pp]\big)\big(\T - \trtp[\pp]\big) +\big(z(q) - \pflow[q]\big)\big(\T-\trtp[q]\big) 
			\\&=\val + \delta\cdot\frac{\T-\trtp[q]}{\T-\trtp[\pp]} \cdot \big(\T - \trtp[\pp]\big) - \delta \big(\T-\trtp[q]\big)
			\\&=\val = \dem,
		\end{align*}
		hence $f'$ is also a \TR flow of value \dem.
		
		Finally, we calculate the \peakcosts of the flow $f'$.
		If path \pp has transit time $\trtp[\pp] \geq \frac{\T}{2}$, then, by \Cref{rem:unit-short-dem-val}, any \TR flow of value \dem has the same peak cost, i.e.~$\cmax[f'] = \cmax$.
		Similarly, if paths \pp and $q$ have equal transit times, i.e.~$\trtp[\pp] = \trtp[q]$ and $\w[\pp] = \w[q]$,
		then  
		\begin{align*}
			\cmax[f'] = &\cmax + \big(z(\pp) - \pflow[\pp]\big)\cdot\w[\pp] +\big(z(q) - \pflow[q]\big)\cdot\w[q] 
			\\
			= &\cmax + \w[\pp] \cdot \delta \cdot \left(\frac{\T-\trtp[p']}{\T-\trtp[\pp]} - 1\right)
			\\ = & \cmax,
		\end{align*}
		so $f'$ is also a minimum-peak-cost flow. 
		In both cases, the amount of the static flow on path $\pp$ is
		\[\sum_{a\in\pp} \za  = \sum_{a\in \pp} \xa + \sum_{a\in \pp\setminus q} \delta \overset{\delta>0}{>} \sum_{a\in \pp}\xa; 
		\]
		that is, flow $f$ does not have the maximum total value of the static flow on path \pp.
		
		It remains to consider the case where the chosen path $q$ is longer than \pp, i.e.~$\trtp[q] > \trtp[\pp]$, and path \pp has length $\trtp[\pp] < \frac{\T}{2}$, and thus $\w[\pp] = \trtp[\pp]$.
		Then for the \peakcosts we have
		\begin{align*}
			\cmax[f'] = & \cmax + \big(z(\pp) - \pflow[\pp]\big)\cdot\w[\pp] +\big(z(q) - \pflow[q]\big)\cdot\w[q] 
			\\= & \cmax + \delta\cdot\frac{\T-\trtp[q]}{\T-\trtp[\pp]} \cdot \trtp[\pp] - \delta \cdot \w[q]
			\\<&\cmax + \delta \big(\trtp[\pp] - \w[q]\big) 
			\\\overset{\w[q] \leq \trtp[q] }{\leq} &\cmax + \delta \big(\trtp[\pp] - \trtp[q]\big)
			\\< &\cmax.
		\end{align*}
		Hence, rerouted flow $f'$ has smaller \peakcosts than flow $f$.
	\end{claimproof}

	Note that the existence of such a rerouting contradicts in both cases the choice of path \pp.
	
	Next, we consider the saturated arcs on path \pp. 
	We differentiate three cases by the number of such arcs and find a rerouting as in Claim~1 for each case, thus showing a contradiction to the existence of flow $f$.
	
	Recall that we assume $U\neq \emptyset$. 
	
	\noindent\underline{Case 1}:  $\overline{U} = \emptyset$. 
	Then the amount of flow transported over path \pp is strictly less than demand \dem, as otherwise we would have $\pflow[\pp] = Y^*$ and all arcs of \pp would be saturated.
	Hence, there exists another path $q\in \paths$ with~$\pflow[q] > 0$.
	
	As all arcs of path \pp have slack capacity, we can {reroute} the flow from path $q$ to path \pp: 
	set $$\delta \coloneqq \min\big\{\pflow[q],\ \upFr[\pp\setminus q] \big\},$$ 
	where $\pp \setminus q$ is considered as a set of arcs. 
	Note that  $\upFr[\pp] > 0$ implies $\delta > 0$.
	Since this choice of $\delta$ satisfies the condition of~\Cref{cl:reroute}, we obtain by rerouting  a contradiction to the choice of flow $f$.
	
	\vspace{1.5ex}\noindent\underline{Case~2}:  $\abs{\overline{U}} = 1$. 
	Let $a' \in \pp$ be the unique saturated arc.
	Since, by assumption, path \pp contains at least one unsaturated arc $\hat{a}$, there exists a flow-carrying path $q\in \paths$ with $a'\in q$ which is different from $\pp$:
	otherwise, if the only flow-carrying path containing arc $a'$ were \pp, then the static flow on arc ${a}$ would be
	\[
	\xa[{a}] = \pflow[\pp] + \sum_{p\in \paths,\, p\neq \pp} \pflow[p] \geq \pflow[\pp] = \xa[a'],
	\]
	which is a contradiction to ${a}$ being unsaturated.
	
	Analogously to Case~1, we can reroute a part of the flow with flow rate
	\[ \delta \coloneqq \min\big\{\pflow[q],\ \upFr[\pp\setminus q]\big\}
	\]
	from path $q$ to path \pp.
	Since the arc set $\pp\setminus q$ does not contain the unique saturated arc $a'$, alls arcs in this set are unsaturated. 
	So the slack capacity $\upFr[\pp\setminus q]$ is positive, and thus $\delta > 0$.
	
	The flow resulting from rerouting is feasible by~\Cref{cl:reroute}.
	Hence, we obtain again a contradiction to the choice of path \pp.
	
	\vspace{1.5ex}\noindent\underline{Case 3}:  $\abs{\overline{U}} \geq 2$.
	Let $v$ nd $w$ be nodes on path \pp such that the subpath $\subp$ is a maximum-length subpath consisting of unsaturated arcs.
	By our assumption, this subpath is not empty, i.e.~$v\neq w$. 
	
	Suppose there exists a path $q\in \pospaths$ with $q\neq \pp$ that contains both nodes~$v$ and $w$. 
	Then the path is directed from~$v$ to~$w$, as otherwise the subpath $\restr[q]{w,v}$ together with path \pp would close a cycle, which is forbidden in series-parallel graphs. 

	We show that path $q$ contains the entire subpath between $v$ and $w$ by contradiction.
	So let $\restr[q]{v,w} \neq \subp$.
	Since \pp is a shortest path, we have $\trtp[{\restr[q]{v,w}}] \geq \trtp[{\subp}]$.	
	Now consider a path $q' \coloneqq \restr[q]{s,v}\circ\subp\circ \restr[q]{w,t}$.
	Note that path~$q'$ is simple: if not, i.e.~if, without loss of generality, the subpath $\restr[q]{s,v}$ contains a node $\xi \in \subp$ different from $v$, then the subpaths $\restr[q]{\xi,v}$ and $\restr[\pp]{v,\xi}$ form a cycle, which contradicts the graph being series-parallel.
	We show that we can then reroute flow from $q$ to $q'$.
	Choose 
	$$\delta = \min\{\pflow[q],\, \upFr[{\subp}]\}.$$
	Since the subpath $\subp$ consists only of unsaturated arcs, the chosen $\delta$ is positive.
	{Now consider a flow $f'$ that results form re-routing flow with rate $\delta$ from $q$ to $q'$:
		its path decomposition is $z\colon \paths \to \Q_{+}$ with 
		\[
		z(p) = \begin{cases}
			\pflow[q] - \delta,&\quad p= q,\\
			\pflow[q']+\delta\frac{T-\trtp[q']}{\T-\trtp[q]},&\quad p=q',\\
			\pflow, &\quad\text{otherwise}.
		\end{cases}
		\]
		We verify that $z$ is a feasible flow decomposition:
		the value $\za$ of the corresponding static flow on an arbitrary arc $a$ is
		\[
		\za = \begin{cases}
			\xa - \delta, &\text{ if } a \in q\setminus q',\\
			\xa + \delta\cdot\frac{\T-\trtp[q]}{\T-\trtp[q']}, &\text{ if } a\in q'\setminus q,\\
			\xa -\delta(1 - \frac{\T-\trtp[q]}{\T-\trtp[q']}), &\text{ if } a\in q'\cap q,\\
			\xa, &\text{ if } a \notin q\cup q',
		\end{cases}
		\]
		Hence, for all arcs $a\notin q'\setminus q$ we have $\za \leq \xa$, and for $a \in q'\setminus q$ we obtain 
		\[ \za \leq \xa + \delta \leq  \xa + \upFr[\subp] \leq \ua
		\]
		as $a\in \subp$.
		
		By~\Cref{cl:reroute}, we obtain that  flow $f'$ has either lower \peakcosts than flow $f$, if $q'$ is strictly shorter than $q$, or a higher aggregated amount of flow on the path \pp, if $\trtp[q'] = \trtp[q]$. 
		This is again a contradiction to the choice of flow $f$.
	}
		Hence, our assumption was false and $\restr[q]{v,w} = \subp$ is true.
	This implies that no flow-carrying path takes a detour between nodes $v$ and $w$; that is, any flow-carrying path $p \in \paths$ containing both $v$ and $w$ also contains the subpath $\subp$.
	
	Next, we show that there does exist some saturated arc~$a \in \subp$ with $\xa \geq \ua[\pp]$, which would lead Case~3 to contradiction.
	The maximality of the subpath~$\subp$ implies that either~$v=s$ or there exists a saturated arc~$\aa = (v', v)$ on path $\pp$, and, equivalently, either $w=t$ or there is a saturated arc $\atil = (w,w')$ on \pp.
	Since we assume that path \pp has at least one saturated arc, at least one of the arcs $\aa$ or $\atil$ exists, see~\Cref{fig:unsat-subpath}. 
	
	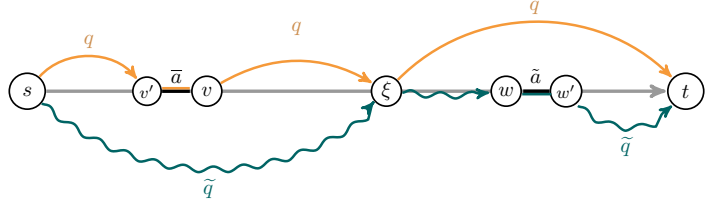
\begin{figure}[tb]
		\centering
		\resizebox{0.6\textwidth}{!}{
			\pgfdeclarelayer{bg}
\pgfsetlayers{bg, main}

\colorlet{aux-green}{Orange!50!combi-orange}
\colorlet{aux-blue}{combi-darkcyan}

\begin{tikzpicture}[>=stealth',shorten >=1pt, shorten <=0.5pt, auto, node distance = 1.5cm]
	\tikzstyle{every state}=[thick, inner sep=0.5mm, minimum size=6mm, outer sep=0mm]
	\tikzstyle{every edge} = [draw, ->]
	\tikzset{internode/.style={circle,fill, black, inner sep=0.mm, label[below]}}
	\tikzset{greenstyle/.style={draw=aux-green, very thick}}
	\tikzset{bluestyle/.style={draw=aux-blue, very thick, decorate, decoration={snake, segment length=5mm, amplitude=0.5mm}}}
	
	\node[state] (s) at (0, 0) {$s$};
	\node[state] (t) at (11, 0) {$t$};
	\begin{pgfonlayer}{bg}
		\path[draw=black!40, ultra thick, ->] (s) -- (t);
	\end{pgfonlayer}
	
	\node[state,  minimum size=4mm, fill=white] (vpr) at (2, 0) {\footnotesize${v'}$};
	
	\node[state,  minimum size=5mm, fill=white] (v) at (3, 0) {${v}$};
	\path[draw, ultra thick] (vpr) --node[above]{$\overline{a}$} (v.west);
	\node[state,  minimum size=5mm, fill=white] (vbar) at (6, 0) {$\xi$};
	
	\node[state,  minimum size=4mm, fill=white] (wpr) at (9, 0) {\footnotesize$w'$};
	\node[state,  minimum size=5mm, fill=white] (w) at (8, 0) {$w$};
	\path[draw, ultra thick] (wpr) --node[above]{$\tilde{a}$} (w.east);
	
	\draw[->, greenstyle] (s) to[bend left=50] node[above, text=aux-green!80!black]{$q$} (vpr);
	\draw[greenstyle] ($(vpr.east)+(0,0.05)$)--($(v.west)+(0,0.05)$);
	\draw[greenstyle,->] (v) to[bend left=30] node[above=3mm, text=aux-green!80!black]{${q}$} (vbar);
	\draw[greenstyle,->] (vbar) to[bend left=45] node[above, text=aux-green!80!black]{${q}$} (t);
	
	\draw[bluestyle, ->] (s) to[bend right=40]node[below=1mm, text=aux-blue]{$\widetilde q$}  (vbar);
	\draw[bluestyle, ->] (vbar) to[bend right = 5] (w);
	\draw[draw=aux-blue, very thick, ] ($(w.east)+(0,-0.05)$) -- ($(wpr.west) + (0.05,-0.05)$);
	\draw[bluestyle, ->] ($(wpr.south east) + (0,-0.05)$) to[bend right=40]node[below=1mm, text=aux-blue]{$\widetilde q$} (t);

\end{tikzpicture}
		}
		\caption{A shortest $s$-$t$ path \pp (in grey) with an unsaturated subpath between the arcs $\overline{a}$ and~$\atil$; paths~$q$ (in orange) and~$\widetilde{q}$ (in petrol wavy) contain the saturated arc $\aa$ (resp.~$\atil$) but not the node $w$ (resp.~$v$).}
		\label{fig:unsat-subpath}
	\end{figure}
	
	Suppose first that every path $p\in \pospaths$ containing arc $\atil$ also contains node $v$; this holds, in particular, also if~$v=s$. 
	Then every such path contains the entire subpath $\subp$,
	so for any arc $a\in \subp$ we have 
	\[\xa = \sum_{p\in \paths\colon a\in p} \pflow \geq \sum_{p\in \paths\colon \atil\in p} \pflow = \xa[\atil] \geq Y^*,\]
	which implies $\xa \geq Y^*$ and contradicts arc $a$ being unsaturated.
	The symmetric case, i.e.~if all paths containing arc $\aa$ also contain node $w$, is identical.
	Therefore, both arcs $\aa$ and $\atil$ exist on path~\pp, and there exists a path~$\widetilde{q}\in \pospaths$ containing arc~$\atil$ but not node~$v$.
	
	Now consider the unsaturated arc $a_1 = (v, v_1) \in \subp$ incident to arc~$\aa$ in path~\pp. 
		We show that any flow-carrying path traversing arc $\aa$ also traverses $a_1$. 
		So let $q\in\pospaths$ be a path containing $\aa$.
		If $w\in q$, then path~$q$ coincides with \pp on the segment $\subp$; in particular, we have $a_1 \in \subp \subset q$.
		Otherwise, if $w\notin q$, then the following holds for paths~$q$ and $\widetilde{q}$ containing arc $\atil$ but not node $v$.
		\begin{claim}\label{cl:commonNode}
			There exists a node $\xi \in \subp $ such that $\xi\in {q} \cap \widetilde{q}$.
		\end{claim}
		The claim follows immediately from \Cref{lem:commonNode}, which is proven at the end of this subsection.

		So let $\xi\in \subp$ be an intersection node of ${q}$ and $\widetilde{q}$.
		Then the path $\restr[{q}]{s,\xi} \circ \restr[\widetilde{q}]{\xi,t}$ is a flow-carrying $s$-$t$ path containing both nodes $v$ and $w$. 
		Consequently, we have $\restr[{q}]{v,\xi} \circ \restr[\widetilde{q}]{\xi,w} = \subp$, and, in particular, $a_1 \in q$.
		
		Hence, any path ${q} \in \pospaths$ containing arc $\aa$ also contains arc~$a_1$, and for the static flow values we obtain $\xa[a_1] \geq \xa[\aa] \geq Y^*$,
		so arc $a_1$ is saturated.	
		This contradiction completes Case~3.
		
		\vspace{2ex}
		Since each of the considered cases leads to a contradiction, we obtain that $U = \emptyset$ and that the shortest path~$\pp$ is saturated along all its length.
	\end{proof}
	
	\Cref{thm:shortestPathFull} shows that there exists a minimum \peakcosts \TR flow that uses a shortest $s$-$t$~path at full capacity. 
	Next, we strengthen this result by showing in  \Cref{thm:pathDecompFull} that there exists an optimal \TR flow with a path decomposition that assigns a chosen shortest path $\pp$ the maximum possible flow rate~$\ua[\pp]$.
	
	Beforehand, we prove the following auxiliary result.
	Observe that if two flow-carrying paths intersect, we can construct two new paths by swapping the parts of the paths between the intersection and node $t$. 
	Clearly, we can also swap several segments simultaneously, if paths intersect at several points.
	We formalise the described swapping the segments and show afterwards that redistributing the flow in the flow decomposition from the old to the new recombined paths yields a feasible \TR flow. 
	Moreover, if the obtained paths have a shorter and a longer transit time than the original paths, then the new flow covers the same flow demand at lower or equal \peakcosts. 
	
	We begin with formalising 
	
	\begin{definition}[Recombined paths]\label{def:segmentSwap}
			Let an acyclic network over time $\net=\netwDef$ be given.
		Let $p_1$ and $p_2$ be two $s$-$t$~paths intersecting at nodes~$I = \{v_1, \ldots, v_k\}\subseteq V$.
		Let the nodes in $I$ be enumerated in the order in which they occur in path $p_1$.
		Then paths 
		\begin{align*} q_1 &\coloneqq \restr[p_1]{s, v_1}\circ \restr[p_2]{v_1, v_2}\circ \ldots \circ \restr[p_{(1+ k\text{\,mod\,} 2)}]{v_k, t} \qquad
		\text{and}\\
		q_2 &\coloneqq \restr[p_2]{s, v_1}\circ \restr[p_1]{v_1, v_2}\circ \ldots \circ \restr[p_{(2-k\text{\,mod\,} 2)}]{v_k, t}
		\end{align*}
		are well-defined simple $s$-$t$~paths, called \emph{recombined paths}.
	\end{definition}
	\begin{lemma}\label{lem:recombPathsfeas}
		Recombined paths as in \Cref{def:segmentSwap} are well-defined and simple.
	\end{lemma}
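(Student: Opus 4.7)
The plan is to prove the two claims sequentially: first, that the concatenations in Definition \ref{def:segmentSwap} are well-defined (i.e.\ the nodes in $I$ occur in a consistent order on both $p_1$ and $p_2$, so every subpath $\restr[p_j]{v_i, v_{i+1}}$ exists and consecutive subpaths share endpoints), and second, that the resulting walks $q_1$ and $q_2$ are simple.

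For well-definedness, I would exploit the acyclicity of $\net$ by fixing a topological ordering $\pi$ of $V$. Since both $p_1$ and $p_2$ are directed $s$-$t$ paths, each traverses its nodes in increasing $\pi$-order. In particular, the common nodes in $I$ appear in the same relative order on both paths. As $v_1, \ldots, v_k$ are enumerated in the order of occurrence on $p_1$, they appear in exactly this order along $p_2$ as well. Consequently, every subpath $\restr[p_j]{v_i, v_{i+1}}$ is a well-defined directed subpath of $p_j$ for $j \in \{1,2\}$, and consecutive segments in the construction share the intersection node at which they meet. Hence $q_1$ and $q_2$ are well-defined directed $s$-$t$ walks.

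For simplicity, I focus on $q_1$; the argument for $q_2$ is symmetric. Each intersection node $v_i \in I$ serves as the common endpoint between two consecutive segments of $q_1$ and therefore contributes exactly one visit. Now let $u$ be any non-intersection node appearing in $q_1$. Then $u$ lies on exactly one of $p_1$ or $p_2$, since otherwise $u$ would belong to both paths and hence to $I$. Writing $p_j$ for the unique original path containing $u$, the simplicity of $p_j$ implies that $u$ lies in exactly one of the segments $\restr[p_j]{v_i, v_{i+1}}$, and this segment is used at most once in the definition of $q_1$. Thus $u$ is visited at most once in $q_1$, and $q_1$ is a simple $s$-$t$ path.

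The main obstacle is establishing the consistent ordering of $I$ on both paths; once this is secured by the topological argument, the simplicity reduces to a clean case distinction exploiting the maximality of $I$ as the intersection set of $p_1$ and $p_2$ and the simplicity of the original paths.
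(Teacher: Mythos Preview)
Your argument is correct. For well-definedness both you and the paper use acyclicity to force the common nodes into the same order on $p_1$ and $p_2$; you do this via an explicit topological ordering, the paper by noting that a disagreement would create a directed cycle. For simplicity the routes differ: you argue directly that any $u\notin I$ lies on exactly one of the two original paths (using that $I$ is the \emph{full} intersection $V(p_1)\cap V(p_2)$), hence in exactly one segment, which occurs at most once in $q_1$. The paper instead argues by contradiction: if a node $\xi$ lies in a $p_1$-segment with index $i$ and a $p_2$-segment with index $j\neq i$, then concatenating $\restr[p_1]{\xi,v_{i+1}}$, $\restr[p_1]{v_{i+1},v_j}$ and $\restr[p_2]{v_j,\xi}$ (for $i<j$) yields a directed cycle. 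Your direct argument is cleaner but leans on $I$ being the complete intersection set; the paper's cycle construction works for any subset $I$ of common nodes, which is relevant since later (in the proof of \Cref{thm:pathDecompFull}) recombinations are performed at single chosen nodes rather than at the full intersection.
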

	\begin{proof}
		First note that both paths $p_1$ and $p_2$ traverse the intersection nodes $v_1,\ldots, v_k$ in the same order - otherwise the graph contains a cycle.
		Hence, every subpath $\restr[p_i]{v_j, v_{j+1}}$ for $i\in \{1,2\}$ and $0\leq j \leq k$ is simple (here we use~$v_0 = s$ and $v_{k+1}=t$ for simpler notation).
		
		Finally, we argue that the recombined paths $q_1$ and $q_2$ are simple by contradiction. 
		Suppose that, without loss of generality, path~$q_1$ contains a node $\xi$ in segments~$\restr[p_1]{v_i, v_{i+1}}$ and~$\restr[p_2]{v_j, v_{j+1}}$ with $0\leq i,j \leq k$ and  $i\neq j$.
		If~$i<j$, then the subpaths $\restr[p_1]{\xi, v_{i+1}}$, $\restr[p_1]{v_{i+1}, v_j}$ and $\restr[p_2]{v_j, \xi}$ build a cycle.
		If~$j<i$, then the subpaths $\restr[p_2]{\xi, v_{j+1}}$, $\restr[p_2]{v_{j+1}, v_i}$ and $\restr[p_1]{v_i, \xi}$ build a cycle.
		Since both cases contradict the fact that the graph is acyclic, we conclude that the path~$q_1$ and, by symmetry, also path~$q_2$, is simple.
	\end{proof}

	\begin{lemma}[Segment swap]\label{lem:segmentSwap}
		Let an acyclic network $\net=\netwDef$ and a \TR flow $f$ with time horizon~$\T$, demand~\dem and path decomposition $\pathdecomp$ be given.
		Let $p_1$ and $p_2$ be two flow-carrying $s$-$t$~paths, and let $q$ and~$\qhat$ be a pair of paths recombined from $p_1$ and $p_2$.
		Further, let the transit times of the paths satisfy  $\trtp[q] \leq \trtp[p_i] \leq \trtp[\qhat]$ for $i\in \{1,2\}$.
		Finally, choose a strictly positive $\delta< \min\{\pflow[p_1],\pflow[p_2]\}$.
		Then a \TR flow $f'$ with time horizon $\T$ corresponding to a path decomposition 
		\[z\colon\paths\to\Q_{+},\qquad p\mapsto \begin{cases}
			\pflow[p] - \delta, &p\in\{p_1,p_2\},\\
			\pflow[p] + \delta, &p\in\{q, \qhat\},\\
			\pflow, &\text{otherwise},
		\end{cases}
		\]
		is a feasible \TR flow of value $\val[f']=\dem$ and with \peakcosts $\cmax[f'] \leq \cmax$.
		
		We call the described transformation of the flow $f$ into $f'$ a \emph{segment swap}.
	\end{lemma}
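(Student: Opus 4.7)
The plan is to prove the three claims in order: non-negativity and capacity feasibility, preservation of the flow value, and the peak-cost bound. The proof rests on a single structural observation: since recombination merely swaps segments between consecutive intersection nodes, it preserves both the multiset of arcs used by the two paths and the sum of their transit times.

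The first step is to establish the arc-invariance identity $\indic{p_1}(a) + \indic{p_2}(a) = \indic{q}(a) + \indic{\qhat}(a)$ for every $a \in A$. Enumerating the intersection $I=\{v_0=s, v_1,\dots, v_k, v_{k+1}=t\}$ in path order, every arc of $p_1 \cup p_2$ lies in exactly one segment $\restr[p_i]{v_j, v_{j+1}}$, and the alternating assignment from Definition~\ref{def:segmentSwap} places each segment into exactly one of $q$ and $\qhat$. Non-negativity of $z$ is immediate from $\delta < \min\{\pflow[p_1],\pflow[p_2]\}$, while arc-invariance gives $\za = \xa$ for every $a$, so capacity constraints are preserved and the underlying static flow is unchanged. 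Lemma~\ref{lem:TRflowValue} then yields $\val[f'] = \T \cdot \val[Z] - \suma \trt \cdot \za = \val = \dem$.

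For the peak cost, I would use the closed form $\cmax = \sum_p \pflow \cdot \w$ from Lemma~\ref{lem:unitCostNet}: the swap contributes $\cmax[f'] - \cmax = \delta\big(\w[q] + \w[\qhat] - \w[p_1] - \w[p_2]\big)$, so it suffices to show $\w[q] + \w[\qhat] \leq \w[p_1] + \w[p_2]$. Summing segment lengths on both sides yields $\trtp[q] + \trtp[\qhat] = \trtp[p_1] + \trtp[p_2]$, and combined with the hypothesis $\trtp[q] \leq \trtp[p_i] \leq \trtp[\qhat]$ this means $(\trtp[q], \trtp[\qhat])$ majorises $(\trtp[p_1], \trtp[p_2])$. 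Since $t \mapsto \min\{t, \T-t\}$ is concave (being the minimum of two affine functions), Karamata's inequality delivers the bound. The main obstacle I anticipate is the careful bookkeeping around arcs shared by $p_1$ and $p_2$: both endpoints of such an arc are intersection nodes, and one must verify explicitly that the shared arc is counted twice on both sides of the indicator identity. A secondary subtlety is ensuring $\trtp[\qhat] \leq \T$ so that Lemma~\ref{lem:TRflowValue} applies to $z$; this is not explicit in the hypothesis but is granted in the intended applications, where $p_1$ and $p_2$ lie in a $\T$-bounded decomposition.
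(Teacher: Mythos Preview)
Your argument is correct and tracks the paper's proof closely for feasibility and value preservation: both rely on the arc-multiplicity identity $\indic{p_1}(a)+\indic{p_2}(a)=\indic{q}(a)+\indic{\qhat}(a)$ to conclude that the underlying static flow is unchanged, and both use $\trtp[q]+\trtp[\qhat]=\trtp[p_1]+\trtp[p_2]$ for the value computation. One minor remark: your route through Lemma~\ref{lem:TRflowValue} needs the decomposition $z$ to be $\T$-bounded, which you flag yourself; the paper sidesteps this by computing $\val[f']-\val[f]$ directly as $\sum_{p}(z(p)-\pflow)(\T-\trtp)$, though in fact that expression also tacitly presupposes $\trtp[\qhat]\le\T$, so neither proof is fully explicit on this point.

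Where you genuinely diverge is the peak-cost inequality. The paper proves $\w[q]+\w[\qhat]\le\w[p_1]+\w[p_2]$ by an exhaustive five-way case split on which of the four transit times lie below or above $\tfrac{\T}{2}$. Your majorisation argument is cleaner: the pair $(\trtp[q],\trtp[\qhat])$ majorises $(\trtp[p_1],\trtp[p_2])$ by hypothesis plus the sum identity, and $t\mapsto\min\{t,\T-t\}$ is concave, so Karamata (equivalently, Schur-concavity) gives the inequality in one line. This buys brevity and conceptual clarity; the paper's case analysis buys nothing beyond avoiding an appeal to a named inequality. Both proofs implicitly use the unit-cost peak formula $\cmax=\sum_p\pflow\,\w$ from Lemma~\ref{lem:unitCostNet}, which is fine since the lemma lives in the unit-cost section despite its general-looking statement.
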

	\begin{proof}
		First, we argue that the constructed flow is feasible, i.e.~respects arc capacities. 
		The total static flow over an arc $a\in A$ is
		\[ Z(a) =  \sum_{\substack{p\in \paths,\\a\in p}} z(p) = \sum_{\substack{p\in \paths,\\a\in p}} \pflow + \delta\cdot\abs{\{p\in \{q,\qhat\} \mid a\in p\}} - \delta\cdot\abs{\{p\in \{p_1, p_2\} \mid a\in p\}} 
		\overset{\text{(1)}}{=} \sum_{\substack{p\in \paths,\\a\in p}} \pflow \leq \ua,
		\]
		where equality (1) holds since any arc is contained in paths $p_1$ and $p_2$ with the same multiplicity an in paths~$q$ and~$\qhat$, and thus $\abs{\{p\in \{q,\qhat\} \mid a\in p\}} =\abs{\{p\in \{p_1, p_2\} \mid a\in p\}}$.
		
		We thus observe that the corresponding static flow does not change after a segment swap.
		For the flow values, we have
		\begin{align*}
			\val[f'] - \val[f] &= \sum_{{p\in \paths}} \left(z(p)-\pflow\right)\left(\T-\trtp\right) \\
			&=\sum_{p\in \{p_1,p_2\}}-\delta\left(\T-\trtp[p]\right) 
			+ \sum_{p\in \{q,\qhat\}}\delta(\T-\trtp[p])\\
			&= \delta\big(2\T - \trtp[q] - \trtp[\qhat] - 2\T + \trtp[p_1] + \trtp[p_2]\big)\\
			&= 0;
		\end{align*}
		hence, the new flow $f'$ also has value \dem.
		
		Finally, we compare the \peakcosts of the flows.
		The change is the \peakcosts is 
		\[ \Delta c^{\text{max}} \coloneqq \cmax[f'] - \cmax = \delta\big(\w[q] + \w[\qq] - \w[p_1] - \w[p_2] \big).
		\] 
		We assume without loss of generality that $\trtp[p_1]\leq \trtp[p_2]$ and differentiate cases depending on the paths' lengths.
		\begin{description}
			\item[1.] $\trtp \leq \frac{\T}{2}$ for all $p\in \{p_1,p_2,q,\qq\}$.
			Then $\w[p] = \trtp$ for all four paths and $\Delta c^{\text{max}} = 0$.
			\item[2.] $\trtp \leq \frac{\T}{2}$ for $p\in\{q,p_1,p_2\}$, and $\trtp[\qq]>\frac{\T}{2}$.
			Then 
			\[
			\frac{\Delta c^{\text{max}}}{\delta} = \big(\trtp[q] +\T -\trtp[\qq] - \trtp[p_1] - \trtp[p_2]\big) \overset{\trtp[p_1] + \trtp[p_2] = \trtp[q] + \trtp[\qq]}{=} \big( \T - 2\trtp[\qq]\big) <0.
			\]
			\item[3.] $\trtp \leq \frac{\T}{2}$ for $p\in\{q,p_1\}$, and $\trtp>\frac{T}{2}$ for $p\in \{p_2, \qq\}$. Then
			\[
			\frac{\Delta c^{\text{max}}}{\delta} =\trtp[q] +\T -\trtp[\qq] - \trtp[p_1] -\T+ \trtp[p_2] 
			= \big( \trtp[q] - \trtp[p_1]\big) + \big(\trtp[p_2] -\trtp[\qq] \big) \leq 0.
			\]
			\item[4.] $\trtp[q]\leq \frac{\T}{2}$, and $\trtp>\frac{T}{2}$ for $p\in \{p_1, p_2, \qq\}$. Then
			\[
			\frac{\Delta c^{\text{max}}}{\delta} = \trtp[q] + \T - \trtp[\qq] - \T + \trtp[p_1] - \T+\trtp[p_2]
			\overset{\trtp[p_1] + \trtp[p_2] = \trtp[q] + \trtp[\qq]}{=}
			-\T + 2\trtp[q] \leq 0.
			\]
			\item[5.]  $\trtp>\frac{T}{2}$ for all $p\in \{q, p_1, p_2, \qq\}$. 
			Then $\w[p] = \T-\trtp$ for all paths and $\Delta c^{\text{max}} = 0$.
		\end{description}
		
		Hence, for any combination of paths lengths, the change in the \peakcosts is non-positive, i.e.~\mbox{$\cmax[f'] \leq \cmax$} as desired.
		
		Note that we choose $\delta$ to be strictly smaller than each flow rate in order for the old paths $p_1$ and $p_2$ to remain flow-carrying.
	\end{proof}
	
	Equipped with the segment swap operation, we now prove the key statement leading to the correctness of the MSSP algorithm.	
	\begin{theorem}\label{thm:pathDecompFull}
		Let a network $\net=\netwDef$, a time horizon~$\T$, demand~$\dem$, and a shortest $s$-$t$~path~$\pp$ in~$G$ be given. 
		There exists a minimum-peak-cost \TR flow $(f, \pflowvar)$ of value \dem such that its path decomposition $\pathdecomp$ satisfies $\pflow[\pp] = \ytarg \coloneqq \min\{\ua[\pp], \frac{\dem}{\T-\trtp[\pp]}\}$.
	\end{theorem}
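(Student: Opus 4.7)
The plan is to strengthen thm:shortestPathFull from a statement about the underlying static flow $Y$ to a statement about its path decomposition, via an exchange argument based on segment swaps. By thm:shortestPathFull, fix a minimum-peak-cost \TR flow $(f,\pflowvar)$ of value $\dem$ whose underlying static flow $Y$ satisfies $Y(a)\geq\ytarg$ for every arc $a\in\pp$. Among all such optimal pairs $(f,\pflowvar)$, choose one that maximises $\pflow[\pp]$. The goal is to show $\pflow[\pp]=\ytarg$, and I argue by contradiction: assume $\pflow[\pp]<\ytarg$. Then for every arc $a\in\pp$ we have $\sum_{q\in\pospaths,\,q\neq\pp,\,a\in q}\pflow[q]\geq \ytarg-\pflow[\pp]>0$, so every arc of $\pp$ is covered by at least one flow-carrying path distinct from $\pp$.

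The key step is to exhibit two flow-carrying paths $p_1,p_2\in\pospaths\setminus\{\pp\}$ whose recombination (in the sense of def:segmentSwap) produces $\pp$ as one of the two output paths. Because $\pp$ is a shortest $s$-$t$ path, $\trtp[\pp]\leq\trtp[p_i]$ for $i=1,2$, so the complementary recombined path $r$ satisfies $\trtp[r]=\trtp[p_1]+\trtp[p_2]-\trtp[\pp]\geq\max\{\trtp[p_1],\trtp[p_2]\}$. Then lem:segmentSwap applied with $q:=\pp$, $\qhat:=r$ and any positive $\delta<\min\{\pflow[p_1],\pflow[p_2]\}$ yields a new path decomposition $\pflowvar'$ satisfying $\pflowvar'(\pp)=\pflow[\pp]+\delta$; the associated \TR flow still has value $\dem$ and peak cost at most $\cmax$, contradicting the maximal choice of $\pflow[\pp]$.

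The main obstacle is establishing the existence of the pair $(p_1,p_2)$, and here the series-parallel hypothesis is essential. The argument adapts the structural analysis used in thm:shortestPathFull: by walking along $\pp$ and tracking, at each arc of $\pp$ not covered by $\pp$ itself in the decomposition, which flow-carrying paths cross $\pp$, one uses the acyclicity of series-parallel graphs together with an intersection argument in the spirit of cl:commonNode to locate two flow-carrying paths whose interleaved segments along $\pp$ allow their recombination to reproduce $\pp$. In the boundary situation where no such pair can be assembled, decomposition constraints force some arc of $\pp$ to carry flow only from $\pp$ itself, so $\pflow[\pp]\geq Y(a)\geq\ytarg$, again eliminating the assumption $\pflow[\pp]<\ytarg$ and completing the contradiction.
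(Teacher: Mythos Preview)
Your overall strategy---start from \Cref{thm:shortestPathFull}, maximise $\pflow[\pp]$ among optimal decompositions, then apply a segment swap to push more flow onto $\pp$---matches the paper's. The gap is in your ``key step'': you assert the existence of two flow-carrying paths $p_1,p_2\in\pospaths\setminus\{\pp\}$ whose recombination (at all common nodes, as in \Cref{def:segmentSwap}) yields $\pp$ itself. This is not generally achievable in a single swap. If $\pp$ is covered by three or more flow-carrying paths, each coinciding with $\pp$ only on a proper internal segment, then for any pair $p_1,p_2$ the recombined path that starts along $\pp$ must eventually leave $\pp$ where both $p_1$ and $p_2$ do; it will not be $\pp$. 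Your ``boundary situation'' clause does not rescue this: you already observed that every arc of $\pp$ is covered by some $q\neq\pp$, so the claim that ``some arc of $\pp$ carries flow only from $\pp$'' directly contradicts your own setup rather than following from the failure to find a recombining pair.

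The paper handles exactly this obstruction by introducing a \emph{cover} $(Q,V^Q)$ of $\pp$---a minimal sequence of flow-carrying paths $p_1,\ldots,p_k$ covering consecutive segments of $\pp$---and arguing by minimality of $k$. Each swap is designed not to produce $\pp$ outright, but to merge two adjacent segments of the cover into one, reducing $k$; only in the base case $k=2$ does a swap potentially produce $\pp$ and increase $\pflow[\pp]$. Establishing that such a cover-shrinking swap always exists, with the correct transit-time ordering $\trtp[q]\leq\trtp[p_i]\leq\trtp[\qhat]$ required by \Cref{lem:segmentSwap}, is the substantial part of the proof and requires a multi-case analysis together with the series-parallel intersection property (\Cref{lem:commonNode}). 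Your sketch gestures at this machinery but does not supply the inductive structure on cover size that makes the argument go through.
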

	\begin{proof}
		By \Cref{thm:shortestPathFull}, there exists an optimal \TR flow $f$ with path decomposition $\pathdecomp$ that uses path $\pp$ at full capacity, i.e.~such the corresponding static flow is 
		$			\xa \geq \ytarg$
		on every arc $a\in \pp$.
		
		So suppose that, for any such optimal flow, the flow rate over path \pp is $\pflow[\pp] < \ytarg$.
		Then for each arc~$a\in\pp$ we have $\pflow[\pp] <\xa$, so there exists a further flow-carrying path~$p_a\neq \pp$ with~$a\in p_a$.
		The collection of all such paths is a \emph{cover} of path \pp.
		
		Formally, we define a \emph{cover of a path \pp under flow $f$ with path decomposition $\pflowvar$} as 
		a tuple $(Q, V^Q)$, where~$Q$ is a tuple $(p_1,\ldots,p_k)$ of $s$-$t$~paths with $p_i\in \pospaths$ and $V^Q = (v_1,\ldots, v_{k-1})$ is a sequence of \emph{intersection} nodes on path~\pp, for~$k\in \N$, such that
		\begin{itemize}
			\item $v_{i-1}$ lies before $v_i$ on \pp,
			\item path $p_i$ covers the segment $\restr[\pp]{v_{i-1},v_{i}}$ of $\pp$, i.e.~$\restr[\pp]{v_{i-1},v_{i}} \subseteq p_i$,
		\end{itemize}
		for each $i\in \oneto{k}$ with the convention $v_0 = s$ and $v_k=t$.
		Note that the paths in a cover are not necessarily pairwise-disjoint.
		
		Let $\mathcal{F}$ be the set of all \TR flows with the following properties:
		\begin{enumerate}
			\item $\val = \dem$; \label{prop:dem}
			\item the \peakcosts $\cmax = c^*$ is minimum; \label{prop:cmax}
			\item path \pp is used at full capacity, i.e.~$\xa \geq \ytarg$ for each $a\in \pp$;\label{prop:fullCap}
			\item $\pflow[\pp] < \ytarg$.\label{prop:flowrate}
		\end{enumerate}
		As already stated, every flow in $\mathcal{F}$ admits a cover. 
		Let $f \in \mathcal{F}$ be a flow admitting a cover $(Q, V^Q)$ with a minimum number $\abs{Q} = k$ of segments among flows in $\mathcal{F}$; if there are several such flows, let $f$ with path decomposition $y$ be additionally the one maximising the flow rate $\pflow[\pp]$ over the path \pp.
		To prove the theorem by contradiction, we show that there exists another flow $f'$ with path decomposition $z$ satisfying all properties of the set $\mathcal{F}$ and either having lower \peakcosts or admitting a cover with less segments than~$f$.
		
		Observe that for a cover $(Q, V^Q)$ of path \pp, for a path $p_i \in Q$ covering the segment of \pp between nodes~$v_{i-1} \in V^Q$ and~$v_i\in V^Q$ and for any path $p\in \paths$ passing through nodes~$v_{i-1}$ and~$v_i$ we have \[
		\trtp[{\restr[p]{v_{i-1}, v_i}}] \geq	\trtp[{\restr[\pp]{v_{i-1}, v_i}}],
		\] 
		since otherwise the path $\trtp[{\restr[\pp]{s,v_{i-1}}}]\circ\trtp[{\restr[p]{v_{i-1}, v_{i}}}]\circ\trtp[{\restr[\pp]{v_{i},t}}]$ would be shorter than \pp.
		We will be using this observation to swap subpaths of some covering paths in order to obtain a new path decomposition as in~\Cref{lem:segmentSwap}. 
		In the following, we consider different configurations of the paths in the cover and indicate which path segments induce an improving segment swap in each case. 
		To simplify the notation, we write $$\restr[p]{v,w} \leq \restr[q]{v,w}$$ to denote that $\trtp[{\restr[p]{v,w}}] \leq \trtp[{\restr[q]{v,w}}]$.
		
		Consider the first two paths $p_1$ and $p_2$ of the cover, covering subpaths $\restr[\pp]{s,v_1}$ and $\restr[\pp]{v_1,t}$ respectively.
		We know that $\restr[p_1]{s,v_1} \leq \restr[p_2]{s, v_1}$.  
		First, suppose that the opposite relation holds for the complementary subpaths, i.e.~$\restr[p_1]{v_1,t} \geq \restr[p_2]{v_1,t}$.
		Then consider paths 
		\[
		q\coloneqq \restr[p_1]{s,v_1}\circ \restr[p_2]{v_1,t}\qquad \text{and} \qquad \qq\coloneqq \restr[p_2]{s,v_1}\circ \restr[p_1]{v_1,t}.
		\]
		For the transit times we have $\trtp[q] \leq \trtp[p_i]$ and $\trtp[\qq] \geq \trtp[p_i]$ for $i\in \{1,2\}$.
		Hence, by applying a segment swap as in~\Cref{lem:segmentSwap}, we obtain a new \TR flow $(f',z)$ with $\val[f'] = \dem$ and $\cmax[f'] \leq \cmax$.
		If~$\cmax[f']<\cmax$, then $f$ was not an optimal flow, which is a contradiction. 
		Otherwise, if $\cmax[f']=\cmax$, flows $(f,y)$ and $(f',z)$ have also the same corresponding static flow and the same value, so flow $f'$ satisfies properties \ref{prop:dem}-\ref{prop:fullCap} of set~$\mathcal{F}$.
		
		If $q\neq \pp$, then the new flow rate over \pp is $z(\pp) \leq \pflow[\pp] < \ytarg$, so $(f',z)\in \mathcal{F}$.
		Observe that path $q$ now covers both subpaths $\restr[\pp]{s, v_1}$ and $\restr[\pp]{v_1, v_2}$.
		Furthermore, segment swap, by construction, does not remove flow-carrying paths, i.e.~$\pospaths \subset \pospaths[z]$.
		Therefore, the cover $(Q', V^{Q'})$ obtained from $(Q, V^Q)$ by replacing the first two paths by $q$, i.e.~with  
		$Q'\coloneqq (q, p_3,\ldots,p_k)$,
		and $ V^{Q'} \coloneqq (v_2, \ldots, v_{k-1})$
		is a feasible cover of path~\pp under flow~$(f',z)$ and has size~$k-1 < k$,
		which contradicts the choice of flow $f$.
		
		Suppose now that $q=\pp$, which is possible if $k=2$.
		If the path decomposition $z$ of $f'$ satisfies $z(\pp) = \ytarg$, then the theorem statement holds;
		otherwise, we have $z(\pp) < \ytarg$ and thus $f'\in \mathcal{F}$. 
		Then the cover $(Q, V^Q)$ is still a cover of path \pp also under flow $f'$, so the minimum cover under $f'$ not greater than that under $f$. 
		In addition, the flow rate over path \pp is $z(\pp) > \pflow[\pp]$, which again contradicts the choice of flow $f$. 
		
		So far we have considered the case  $\restr[p_1]{v_1,t} \geq \restr[p_2]{v_1,t}$. 
		Now, suppose that  $\restr[p_1]{v_1,t} < \restr[p_2]{v_1,t}$.
		Then the size of the cover is $\abs{Q}\geq 3$, as otherwise we would have $\restr[p_2]{v_1,t} = \restr[\pp]{v_1,t}$ and  $\trtp[p_1] < \trtp[\pp]$.
		
		Let $j\in \oneto{k-1}$ be the smallest index with $\restr[p_j]{v_j,t} \geq \restr[p_{j+1}]{v_j,t}$. Note that the node $v_j$ is contained in both paths).
		Such an index exists, since the inequality is satisfied for $j=k-1$.
		By our assumption we have $j\geq 2$.
		We consider the following cases.
		
		\vspace*{1ex}\noindent\underline{Case 1}: $\restr[p_j]{s,v_j} < \restr[p_{j+1}]{s,v_j}$.\\
		Then consider the paths 
		\[
		q\coloneqq \restr[p_j]{s,v_j} \circ \restr[p_{j+1}]{v_j,t} \qquad \text{and} \qquad \qq\coloneqq \restr[p_{j+1}]{s,v_j} \circ \restr[p_{j}]{v_j,t}.
		\]
		Their transit times satisfy the relation $\trtp[q] \leq \trtp[p_{i}] \leq \trtp[\qq]$ for $i\in \{j,\, j+1\}$. 
		Hence, by applying a segment swap according to~\Cref{lem:segmentSwap} we obtain a new flow $f'$.
		Again, either $f'$ has strictly lower \peakcosts than flow $f$, which contradicts the optimality of $f$, or $f' \in \mathcal{F}$.
		In the latter case, path $q$ covers two subpaths $\restr[\pp]{v_{j-1},v_{j}} = \restr[p_j]{v_{j-1}, v_j}$ and $\restr[\pp]{v_j,v_{j+1}} = \restr[p_{j+1}]{v_{j}, v_{j+1}}$ of path \pp.
		Therefore, the cover $(Q', V^{Q'})$ with 
		\[ 	Q'\coloneqq (p_1,\ldots, p_{j-1}, q, p_{j+2}, \ldots,p_{k}) \quad\text{and}\quad V^{Q'} \coloneqq (v_1,\ldots,v_{j-1}, v_{j+1}, \ldots, t)
		\]
		is a feasible cover for \pp under flow $f'$. 
		Cover $(Q', V^{Q'})$ has size $k-1<\abs{Q}$, which contradicts the choice of flow $f$ from the set $\mathcal{F}$.
		
		\vspace*{1ex}\noindent\underline{Case 2}: $\restr[p_{j-1}]{s,v_{j-1}} \geq \restr[p_{j}]{s,v_{j-1}}$.\\
		By definition of $j$, we additionally know that $\restr[p_{j-1}]{v_{j-1},t} < \restr[p_{j}]{v_{j-1}, t}$.
		Then we can apply a segment swap to recombined paths \[
		q\coloneqq  \restr[p_{j}]{s,v_{j-1}} \circ \restr[p_{j-1}]{v_{j-1},t}  \qquad \text{and} 
		\qquad \qq\coloneqq \restr[p_{j-1}]{s,v_{j-1}} \circ \restr[p_{j}]{v_{j-1},t}
		\]
		with  $\trtp[q] \leq \trtp[p_{i}] \leq \trtp[\qq]$ for $i\in \{j-1, j\}$.
		Hence, either the resulting flow $f'$ has \peakcosts $\cmax[f'] < \cmax$, or $f'\in \mathcal{F}$ and admits a cover $(Q', V^{Q'})$ with
		\[
		Q'\coloneqq (p_1,\ldots, p_{j-2}, \qq, p_{j+1}, \ldots,p_{k}) \quad\text{and}\quad V^{Q'} \coloneqq (v_1,\ldots,v_{j-2}, v_{j}, \ldots, t),
		\]
		which is feasible under flow $f'$ since path $\qq$ covers the subpath $\restr[\pp]{v_{j-2}, v_{j-1}} \circ \restr[\pp]{v_{j-1}, v_{j}}$.
		In both cases we obtain a contradiction to the choice of flow $f$.
		
		It remains to consider the remaining complementary case.	
		\vspace*{1ex}\paragraph*{\underline{\textnormal{Case 3:}}}\label{mosterproof:case3} $\restr[p_{j-1}]{s,v_{j-1}} < \restr[p_{j}]{s,v_{j-1}}$ and $\restr[p_j]{s,v_j} \geq \restr[p_{j+1}]{s,v_j}$.\\
		Simultaneously, by the definition of $j$ we have 
		\begin{equation}\label{eq:def_j}
			\restr[p_{j-1}]{v_{j-1},t} < \restr[p_{j}]{v_{j-1}, t}\quad\text{ and }\quad\restr[p_j]{v_j,t} \geq \restr[p_{j+1}]{v_j,t}.
		\end{equation}
		
		In this case, none of the previous path recombinations would lead to an improvement of \peakcosts. 
		However, we show that the paths $p_{j-1}$, $p_j$ and $p_{j+1}$ have a further common intersection node, which allows further swaps.
		
		\begin{claim}\label{cl:commonNodeOnPj}		
			There exists a node $\xi\in \restr[\pp]{v_{j-1}, v_j}$ such that $\xi \in p_{j-1} \cap p_{j+1}$ (see \Cref{fig:relevantPaths}).
		\end{claim}
		Since paths $p_{j-1}$ and $p_{j+1}$ intersect~\pp at nodes~$v_{j-1}$ and~$v_{j}$, respectively, the claim holds due to \Cref{lem:commonNode}, which is proved at the end of this section.
		Observe that $\xi \in \restr[p_j]{v_{j-1}, v_j}$; that is, all four paths intersect at node~$\xi$.
		
		\begin{figure}
			\centering
			\resizebox{\textwidth}{!}{
				\pgfdeclarelayer{bg}
\pgfsetlayers{bg, main}

\colorlet{aux-green}{LimeGreen}
\colorlet{aux-blue}{Cerulean}
\colorlet{aux-red}{Orange}

\begin{tikzpicture}[>=stealth',shorten >=0.5pt, shorten <=0.5pt, auto, node distance = 1.5cm, xscale=1.8]
	\tikzstyle{every state}=[thick, inner sep=0.5mm, minimum size=6mm, outer sep=0mm]
	\tikzstyle{every edge} = [draw, ->]
	\tikzstyle{hili} = [draw, line width=5pt]
	\tikzset{internode/.style={draw, fill=white,circle, minimum size=2mm, inner sep=0.mm}}
	\tikzset{greenstyle/.style={draw=aux-green!70, line width=5pt}}
	\tikzset{bluestyle/.style={draw=aux-blue!30, line width=5pt}}
	\tikzset{redstyle/.style = {draw=aux-red!50, line width=5pt}}
	\tikzset{dashstyle/.style = {draw, thick, decorate, decoration={triangles, segment length=2mm}}}
	\tikzset{wavestyle/.style = {draw, very thick, decorate, decoration={snake, segment length=5mm, amplitude=0.5mm}}}
	\tikzset{dotstyle/.style = {draw, ultra thick, densely dotted}}
	
	\node[state] (s) at (0, 0) {$s$};
	\node[state] (t) at (6, 0) {$t$};
		\path[draw=black!40, line width=3pt] (s) -- ($(t) + (-0.2, 0)$);
		\path[draw=black!40, ultra thick, ->] ($(t) + (-0.2, 0)$) -- (t);
	
	\node[internode, label=below:{$v_{j-2}$}] (v-2) at (1,0) {};
	\node[internode, label={[label distance=2mm]below:{$v_{j-1}$}}] (v-1) at (2.2,0) {};
	\node[internode, fill=black!50, label={[label distance=2.5mm]below:{$\xi$}}] (xi) at (3,0) {};
	\node[internode, label={[label distance=2mm]below:{$v_{j}$}}] (v) at (3.8,0) {};
	\node[internode, label=below:{$v_{j+1}$}] (vv) at (5,0) {};
	
	\draw[dotstyle, ->] (s)  to[bend left=80] (xi);
	\draw[dotstyle, ->] (xi)  to[bend right=70] (v);
	\draw[dotstyle] (v) -- (vv);
	\draw[dotstyle, ->](vv) to[bend left=50]  (t);
	
	\draw[dashstyle, ->] (s)  to[bend left=45] (v-2);
	\draw[dashstyle] (v-2) -- (v-1);
	\draw[dashstyle, ->] (v-1) to[bend right=85] (xi);
	\draw[dashstyle, ->] (xi)  to[bend left=75] (t);
	
	\draw[wavestyle, ->] (s) to[bend right=55] (v-1);
	\draw[wavestyle] (v-1) -- (xi);
	\draw[wavestyle] (xi) -- (v);
	\draw[wavestyle, ->] (v)  to[bend right=55] (t);
	
	\node[] (p1) at (7.9,1.){};
	\draw[dashstyle]  ($(p1) + (-0.6,0)$) -- ++(0.45, 0) node[right]{$p_{j-1}$};
	\node[] (p2) at ($(p1) + (0,-0.5)$){};
	\draw[wavestyle] ($(p2) + (-0.6,0)$) -- ++(0.45, 0) node[right]{$p_j$};
	\node[] (p3) at ($(p2) + (0,-0.5)$) {};
	\draw[dotstyle] ($(p3) + (-0.6,0)$) -- ++(0.45, 0) node[right] {$p_{j+1}$};
	\node[] (p4) at ($(p3) + (0,-0.5)$) {};
	\draw[draw=black!40, line width=3pt] ($(p4) + (-0.6,0)$) -- ++(0.45, 0) node[right] {$\pp$};
	
\end{tikzpicture}
			}
			\caption{Paths $p_{j-1}$, $p_j$ and $p_{j+1}$ all intersect at node $\xi\in \pp$.}
			\label{fig:relevantPaths}
		\end{figure}
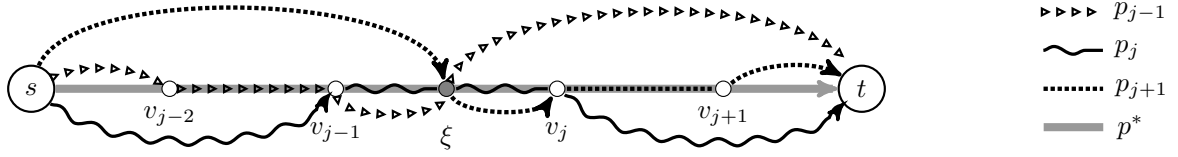
		
		We first consider the two edge cases of node $\xi$ being one of the cover intersection nodes $v_{j-1}$ or $v_j$.
		
		\noindent\underline{Case 3.a}: $\xi = v_{j}$.
		Then paths $p_{j-1}$ and $p_j$ intersect at nodes $v_{j-1}$ and $v_j$.
		Since graph $G$ is acyclic, both paths the nodes in the same order as path \pp, i.e.~subpaths $\restr[p_{j-1}]{v_{j-1}, v_j}$ and $\restr[p_{j}]{v_{j-1}, v_j}$ are well-defined.
		Hence, we define the following segment swap:
		\[
		q\coloneqq  \restr[p_{j-1}]{s,v_{j-1}} \circ \restr[p_{j}]{v_{j-1},v_j}\circ\restr[p_{j-1}]{v_j, t}  \qquad \text{and} 
		\qquad \qq\coloneqq \restr[p_{j}]{s,v_{j-1}} \circ \restr[p_{j-1}]{v_{j-1},v_j}\circ\restr[p_{j}]{v_j, t}.
		\]
		
		By assumptions of Case 3, the segment-wise comparison of transit times yields that $\trtp[q] \leq \trtp[p_i] \leq \trtp[\qq]$	for~$i\in \{j-1, j\}$.
		Hence, the resulting flow has lower \peakcosts or, since path~$q$ covers the subpath~$\restr[\pp]{v_{j-2}, v_j}$, the resulting flow admits a cover $(Q', V^{Q'})$ with $\abs{Q}-1$ paths, e.g.~the cover
		\[
		Q'\coloneqq (p_1,\ldots, p_{j-2}, q, p_{j+1}, \ldots,p_{k}) \quad\text{and}\quad V^{Q'} \coloneqq (v_1,\ldots,v_{j-2}, v_{j}, \ldots, t).
		\]
		
		\noindent\underline{Case 3.b}: $\xi = v_{j-1}$.
		In this case, paths $p_j$ and $p_{j+1}$ intersect at two nodes $v_{j-1}$ and $v_j$, which allows us to make a segment swap with paths
		\[
		q\coloneqq  \restr[p_{j+1}]{s,v_{j-1}} \circ \restr[p_{j}]{v_{j-1},v_j}\circ\restr[p_{j+1}]{v_j, t}  \qquad \text{and} 
		\qquad \qq\coloneqq \restr[p_{j}]{s,v_{j-1}} \circ \restr[p_{j+1}]{v_{j-1},v_j}\circ\restr[p_{j}]{v_j, t}
		\]
		with  $\trtp[q] \leq \trtp[p_i] \leq \trtp[\qq]$	for $i\in \{j, j+1\}$.
		Since path $q$ covers the subpath $\restr[\pp]{v_{j-1}, v_{j+1}}$, the resulting flow admits a cover $(Q', V^{Q'})$ for path $\pp$ with \[
		Q'\coloneqq (p_1,\ldots, p_{j-1}, q, p_{j+2}, \ldots,p_{k}) \quad\text{and}\quad V^{Q'} \coloneqq (v_1,\ldots,v_{j-1}, v_{j+1}, \ldots, t)
		\]
		and thus with $\abs{Q'}<\abs{Q}$, which is again a contradiction to the choice of flow $f$.
		
		Finally, we consider the remaining case of $\xi$ being an inner point of the subpath $\restr[\pp]{v_{j-1}, v_j}$.
		
		\noindent\underline{Case 3.c}: $\xi \notin \{v_{j-1}, v_j\}$.
		Here, we perform a combination of two segment swaps (cf.~\Cref{fig:recombPaths}).
		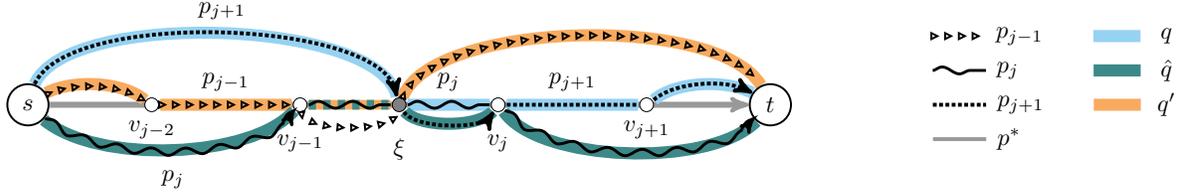
\begin{figure}
			\centering
			\resizebox{\textwidth}{!}{
				\pgfdeclarelayer{bg}
\pgfsetlayers{bg, main}

\colorlet{aux-red}{Orange!50!combi-orange}
\colorlet{aux-blue}{Cerulean}
\colorlet{aux-green}{combi-darkcyan!80}

\begin{tikzpicture}[>=stealth',shorten >=0.5pt, shorten <=0.5pt, auto, node distance = 1.5cm, xscale=1.8]
	\tikzstyle{every state}=[thick, inner sep=0.5mm, minimum size=6mm, outer sep=0mm]
	\tikzstyle{every edge} = [draw, ->]
	\tikzstyle{hili} = [draw, line width=5pt]
	\tikzset{internode/.style={draw, fill=white,circle, minimum size=2mm, inner sep=0.mm}}
	\tikzset{greenstyle/.style={draw=aux-green!95, line width=5pt}}
	\tikzset{bluestyle/.style={draw=aux-blue!40, line width=5pt}}
	\tikzset{redstyle/.style = {draw=aux-red!80, line width=5pt}}
\tikzset{dashstyle/.style = {draw, thick, decorate, decoration={triangles,  segment length=2mm}}}
\tikzset{wavestyle/.style = {draw, very thick, decorate, decoration={snake, segment length=5mm, amplitude=0.5mm}}}
\tikzset{dotstyle/.style = {draw, ultra thick, densely dotted}}

\node[state] (s) at (0, 0) {$s$};
\node[state] (t) at (6, 0) {$t$};
	\begin{pgfonlayer}{bg}
\path[draw=black!40, line width=2.5pt] (s) -- ($(t) + (-0.2, 0)$);
\path[draw=black!40, ultra thick, ->] ($(t) + (-0.2, 0)$) -- (t);
	\end{pgfonlayer}

\node[internode, label=below:{$v_{j-2}$}] (v-2) at (1,0) {};
\node[internode, label={[label distance=2mm]below:{$v_{j-1}$}}] (v-1) at (2.2,0) {};
\node[internode, fill=black!50, label={[label distance=2.5mm]below:{$\xi$}}] (xi) at (3,0) {};
\node[internode, label={[label distance=2mm]below:{$v_{j}$}}] (v) at (3.8,0) {};
\node[internode, label=below:{$v_{j+1}$}] (vv) at (5,0) {};

\begin{pgfonlayer}{bg}
\draw[bluestyle] (s) to[bend left=80] node[above]{$p_{j+1}$} (xi);
\draw[bluestyle] (xi) -- node[above]{$p_{j}$} (v);
\draw[bluestyle] (v) -- node[above]{$p_{j+1}$} (vv);
\draw[bluestyle] (vv) to[bend left=50] (t);

\draw[redstyle] (s)  to[bend left=50] (v-2);
\draw[redstyle] (v-2)  --node[above]{$p_{j-1}$}  (v-1);
\draw[redstyle,line width=4pt] (v-1)  -- (xi);
\draw[redstyle] (xi)  to[bend left=75] (t);
\draw[greenstyle] (s)  to[bend right=60] node[below=1mm]{$p_{j}$} (v-1);
\draw[greenstyle, line width=4pt, dashed] (v-1) -- (xi);
\draw[greenstyle] (xi)  to[bend right=65] (v);
\draw[greenstyle] (v)  to[bend right=65] (t);	
\end{pgfonlayer}

\draw[dotstyle, ->] (s)  to[bend left=80] (xi);
\draw[dotstyle, ->] (xi)  to[bend right=70] (v);
\draw[dotstyle] (v) -- (vv);
\draw[dotstyle, ->](vv) to[bend left=50]  (t);

\draw[dashstyle, ->] (s)  to[bend left=45] (v-2);
\draw[dashstyle] (v-2) -- (v-1);
\draw[dashstyle, ->] (v-1) to[bend right=85] (xi);
\draw[dashstyle, ->] (xi)  to[bend left=75] (t);

\draw[wavestyle, ->] (s) to[bend right=55] (v-1);
\draw[wavestyle] (v-1) -- (xi);
\draw[wavestyle] (xi) -- (v);
\draw[wavestyle, ->] (v)  to[bend right=55] (t);

\node[] (p1) at (7.9,1.){};
\draw[dashstyle]  ($(p1) + (-0.6,0)$) -- ++(0.45, 0) node[right]{$p_{j-1}$};
\node[] (p2) at ($(p1) + (0,-0.5)$){};
\draw[wavestyle] ($(p2) + (-0.6,0)$) -- ++(0.45, 0) node[right]{$p_j$};
\node[] (p3) at ($(p2) + (0,-0.5)$) {};
\draw[dotstyle] ($(p3) + (-0.6,0)$) -- ++(0.45, 0) node[right] {$p_{j+1}$};
\node[] (p4) at ($(p3) + (0,-0.5)$) {};
\draw[draw=black!40, ultra thick] ($(p4) + (-0.6,0)$) -- ++(0.45, 0) node[right] {$\pp$};

\node (q1) at (9.2,1.) {$q$};
\draw[bluestyle]  ($(q1) + (-0.6,0)$) -- ++(0.4, 0);
\node (q2) at ($(q1) + (0,-0.5)$) {$\hat{q}$};
\draw[greenstyle] ($(q2) + (-0.6,0)$) -- ++(0.4, 0);
\node (q3) at ($(q2) + (0,-0.5)$) {$q'$};
\draw[redstyle] ($(q3) + (-0.6,0)$) -- ++(0.4, 0);
\end{tikzpicture}
			}
			\caption{Recombined paths: recombining~$p_{j}$ and~$p_{j+1}$ yields paths~$\hat{q}$ (bright blue) and~$q'$ (dark petrol); in the second step, a segment swap between $\hat{q}$ and $p_{j-1}$ yields the path~$q'$ (orange). Paths $q$ and $q'$ yield a smaller cover of the segment $\restr[\pp]{v_{j-2}, v_{j+1}}$.}
			\label{fig:recombPaths}
		\end{figure}
		
		For the first swap, consider the recombination to paths
		\[
		q\coloneqq  \restr[p_{j+1}]{s,\xi} \circ \restr[p_{j}]{\xi,v_j}\circ\restr[p_{j+1}]{v_j, t}  \qquad \text{and} 
		\qquad \qq\coloneqq \restr[p_{j}]{s,\xi} \circ \restr[p_{j+1}]{\xi,v_j}\circ\restr[p_{j}]{v_j, t}.
		\]
		First, we ensure that the transit times satisfy the condition of \Cref{lem:segmentSwap}. We have
		\begin{align*}
			\restr[q]{s, v_j} = \restr[p_{j+1}]{s, \xi}\circ \restr[p_j]{\xi, v_j} 
			\overset{\restr[p_j]{\xi, v_j} = \restr[\pp]{\xi, v_j}}{\leq} &\restr[p_{j+1}]{s, v_j} \\
			\overset{\text{Case 3}}{\leq} \quad\quad&\restr[p_{j}]{s, v_j}\\
			\overset{\restr[p_j]{\xi, v_j} = \restr[\pp]{\xi, v_j}}{\leq }&\restr[p_{j}]{s, \xi}\circ \restr[p_{j+1}]{\xi, v_j}
			= \restr[\qq]{s, v_j}
		\end{align*}
		and
		\begin{align*}
			\restr[q]{v_j,t} = \restr[p_{j+1}]{v_j, t} \overset{\eqref{eq:def_j}}{\leq}\restr[p_{j}]{v_j, t} = \restr[\qq]{v_j, t}.
		\end{align*}
		Hence, the transit times satisfy the relation $\trtp[q] \leq \trtp[p_i] \leq \trtp[\qq]$	for $i\in \{j, j+1\}$.
		Moreover, path $q$ covers the subpath $\restr[\pp]{\xi, v_{j+1}}$.
		The flow resulting from this segment swap thus does not lead to a smaller cover. 
		We nevertheless switch to flow $f'$ and cover  $(Q', V^{Q'})$ with 
		\[
		Q'\coloneqq (p_1,\ldots, p_{j}, q, p_{j+2}, \ldots,p_{k}) \quad\text{and}\quad V^{Q'} \coloneqq (v_1,\ldots,v_{j-1}, \xi, v_{j+1}, 	\ldots, t).
		\]
		Next, we consider a second segment swap between paths $p_{j-1}$ and $\qq$: let
		\[
		q'\coloneqq \restr[p_{j-1}]{s, v_{j-1}}\circ\restr[\qq]{v_{j-1, \xi}} \circ\restr[p_{j-1}]{\xi, t}   \qquad \text{and} \qquad
		q''\coloneqq \restr[\qq]{s, v_{j-1}}\circ\restr[p_{j-1}]{v_{j-1, \xi}} \circ\restr[\qq]{\xi, t}.
		\]
		We again verify the condition of \Cref{lem:segmentSwap}:
		\begin{align*}
			\restr[q']{s, v_{j-1}} = \restr[p_{j-1}]{s, v_{j-1}} \overset{\text{Case 3}}{<} \restr[p_j]{s, v_{j-1}} =\restr[\qq]{s, v_{j-1}}= \restr[q'']{s, v_{j-1}}
		\end{align*}
		and
		\begin{align*}
			\restr[q']{v_{j-1}, t} = \restr[\qq]{v_{j-1}, \xi}\circ \restr[p_{j-1}]{\xi, t} \quad 
			= \quad &\restr[p_j]{v_{j-1}, \xi}\circ \restr[p_{j-1}]{\xi, t}\\
			\overset{\restr[p_j]{v_{j-1},\xi} = \restr[\pp]{v_{j-1},\xi}}{\leq} &\restr[p_{j-1}]{v_{j-1}, t} \\
			\overset{\eqref{eq:def_j}}{<}\qquad \quad &\restr[p_{j}]{v_{j-1}, t}\\
			\overset{\restr[p_j]{v_{j-1},\xi} = \restr[\pp]{v_{j-1},\xi}}{\leq} &\restr[p_{j-1}]{v_{j-1}, \xi}\circ\restr[p_j]{\xi,v_j}\circ \restr[p_{j}]{v_{j}, t} \\
			\overset{\restr[p_j]{\xi, v_j} = \restr[\pp]{\xi, v_j}}{\leq}\quad &\restr[p_{j-1}]{v_{j-1}, \xi}\circ\restr[p_{j+1}]{\xi,v_j}\circ \restr[p_{j}]{v_{j}, t}
			= \restr[q'']{v_{j-1},t}.
		\end{align*}
		
		Hence, we again have  $\trtp[q'] \leq \trtp[p_{j-1}] \leq \trtp[q'']$;
		Finally, since 
		$
		\restr[p_{j-1}]{v_{j-1},t} < \restr[p_{j}]{v_{j-1},t}
		$ by definition of $j$ and~$\restr[p_{j-1}]{v_{j-1},\xi} \geq \restr[p_{j}]{v_{j-1},\xi}$ due to the cover property, we have 
		$\restr[p_{j-1}]{\xi, t} < \restr[p_{j}]{\xi, t}$, and thus 
		\[
		\restr[q']{\xi, t} = \restr[p_{j-1}]{\xi, t} \overset{}{\leq} \restr[p_{j}]{\xi, t} \leq \restr[p_{j+1}]{\xi, v_j}\circ \restr[p_{j}]{v_j, t} = \restr[\qq]{\xi,t}.
		\]
		Therefore, we also have  $\trtp[q'] \leq \trtp[\qq] \leq \trtp[q'']$, so the segment swap satisfies the condition of \Cref{lem:segmentSwap}.
		Let flow~$f''$ be the flow resulting form the second segment swap.
		Path~$q'$ covers the subpath $\restr[\pp]{v_{j-2}, \xi}$. 
		Hence, a cover $(Q'', V^{Q''})$ with 
		\[
		Q''\coloneqq (p_1,\ldots, p_{j-2}, q', q, p_{j+2}, \ldots,p_{k}) \quad\text{and}\quad V^{Q'} \coloneqq (v_1,\ldots,v_{j-2}, \xi, v_{j+1}, 	\ldots, t)
		\]
		is a feasible cover under flow $f''$ of size $\abs{Q''} = k-1 < \abs{Q}$.
		Therefore, flow~$f''$ has lower \peakcosts or a smaller cover for path \pp, so we again obtain a contradiction to the choice of flow~$f$.

		In each considered cases we were able to construct a new flow that contradicts the minimality of flow~$f$. 
		Therefore, the assumption about existence of a flow in set $\mathcal{F}$ is false, so $\mathcal{F} = \emptyset$.
		Consequently, every minimum \peakcosts \TR flow of requested value~\dem that uses path~\pp at full capacity also has a path decomposition~$\pflowvar$ with $\pflow[\pp] = \ua[\pp]$, and by \Cref{thm:shortestPathFull} there always exists such a flow. 
	\end{proof}
	
	Finally, we use the result of~\Cref{thm:pathDecompFull} to prove the correctness of \algoname.
	\begin{theorem}\label{thm:greedyCorrect}
		The \algoname algorithm finds a \TR flow covering required demand~\dem at minimum \peakcosts on unit-cost two-terminal series-parallel graphs.
	\end{theorem}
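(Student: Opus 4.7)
My plan is to induct on the iteration count of \algoname, with the inductive hypothesis that after $k$ iterations the partial \TR flow $(f_k, y_k)$ constructed so far extends to some \minpeakcosts \TR flow of value $\dem$ in the original network. The base case $k=0$ is immediate, since the empty flow extends to any optimal flow. For the inductive step, I would first invoke~\Cref{lem:unitCostNet} to observe that peak cost is additive across chain flows, $\cmax = \sum_{p}\pflow\cdot\w$, so once the first $k$ contributions are fixed, minimising the total peak cost is equivalent to minimising the peak cost of a residual flow of value $\dem_k \coloneqq \dem - \val[f_k]$ in the residual network $\net_k$ with capacities $\upFr$.

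At iteration $k+1$, \algoname selects a shortest $s$-$t$ path $\pp$ in $\net_k$ (shortest among paths with positive residual capacity on every arc) and assigns it the rate $\ytarg = \min\{\upFr[\pp],\, \dem_k/(\T-\trtp[\pp])\}$. I would then apply~\Cref{thm:pathDecompFull} to the residual problem with $\pp$ playing the role of the shortest path, obtaining a \minpeakcosts \TR flow of value $\dem_k$ in $\net_k$ whose path decomposition assigns $\pp$ exactly the rate $\ytarg$. Combining this residual flow with the fixed $(f_k, y_k)$ yields a \minpeakcosts \TR flow of value $\dem$ that extends the algorithm's first $k+1$ choices, completing the inductive step.

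The main obstacle will be this appeal to~\Cref{thm:pathDecompFull}: that theorem is stated for $\pp$ being a shortest $s$-$t$ path in $G$ overall, while \algoname only guarantees that $\pp$ is shortest among paths with strictly positive residual capacity on every arc. To bridge the gap I would revisit the proofs of~\Cref{thm:shortestPathFull,thm:pathDecompFull} and verify that every shortness comparison involving $\pp$ is made against paths that are either flow-carrying in the residual flow or recombinations of such, both of which use only arcs of positive residual capacity. Consequently the weaker shortness property delivered by \algoname already suffices, and the segment-swap constructions, which rely on acyclicity and the series-parallel intersection structure inherited unchanged from $G$, carry over verbatim.

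Termination and feasibility are routine: at each iteration either $\val[f_{k+1}] = \dem$ and the algorithm stops, or the minimum defining $\ytarg$ is attained by $\upFr[\pp]$ and at least one arc of $\pp$ becomes saturated, bounding the number of iterations by $|A|$. Whenever $\dem \leq D^{\text{max}}$, the inductive hypothesis supplies an optimal extension of the current partial flow, which in turn guarantees the existence of a further augmenting shortest path with positive free capacity, so the algorithm eventually reaches $\val = \dem$.
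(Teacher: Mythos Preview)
Your proposal is correct and follows essentially the same inductive reduction to the residual instance as the paper, which inducts on the demand value rather than the iteration count but otherwise argues identically via~\Cref{thm:pathDecompFull}. You are in fact more explicit than the paper about the one subtle point---that~\Cref{thm:pathDecompFull} must be invoked in the residual network where $\pp$ is only shortest among positive-capacity paths---and your justification (the proofs of~\Cref{thm:shortestPathFull} and~\Cref{thm:pathDecompFull} compare $\pp$ only against flow-carrying paths and their recombinations, all of which live in the positive-capacity subnetwork) is correct.
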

	\begin{proof}
		We prove the statement by an induction over the demand value~\dem.
		For $\dem=0$ and $\dem=1$, the proof follows from \Cref{thm:shortestPathFull}.
		
		Assume that \algoname computes an optimal flow for demand values not greater than~$\dem-1$.
		Let an instance~$(\net, \T, \dem)$ of \genProb be given, and let~$\pp$ be the shortest path selected by the algorithm in the first iteration.
		If~$\dem \leq \ua[\pp] \cdot (\T-\trtp[\pp])$, then the algorithm stops after the first iteration, and its output is optimal by~\Cref{thm:shortestPathFull}. 
		
		So suppose that $\dem > \ua[\pp] \cdot (\T-\trtp[\pp])$.
		Now consider the \emph{reduced instance} $(\net', \T, \dem')$ of \genProb with capacity along path \pp and demand both reduced by a chain flow along \pp. 
		The instance is formally defined as follows:
		\[	\net' \coloneqq (G, \trt[], \ua[]', \ca[]) \quad \text{with} \quad \ua' = \begin{cases}
			\ua-\ua[\pp],&\text{if }a \in \pp,\\
			\ua, &\text{otherwise};
		\end{cases}   \quad\qquad \dem' \coloneqq \dem-\ua[\pp]\big(\T-\trtp[\pp]\big).
		\]
		Observe that it is exactly the instance on which the algorithm works starting at the second iteration.
		As~\mbox{$\dem'<\dem$}, \algoname finds an optimal flow~$f'$ for the reduced instance by the induction hypothesis.
		Observe that the flow~\mbox{$f \coloneqq f' + f^{\T}_{\pp}$} constructed by adding a chain flow with flow rate~$\ua[\pp]$ along path~$\pp$ is then exactly the output of \algoname on the original instance.
		It remains to show that flow $f$ is optimal.
		
		Let $f^*$ be an optimal flow for the original instance $(\net, \T, \dem)$ that uses path $\pp$ at full capacity, i.e.~for its path decomposition $\pflowvar^*$ holds $\pflowvar^*(\pp) = \ua[\pp]$.
		Now consider a reduced flow~$f^*_{\net'}$ corresponding to the path decomposition 
		\[
		\pflowvar^*_{\net'} \colon \paths \to \R_{+},\quad p\mapsto \begin{cases}
			\pflowvar^*(p), &\text{if }p\neq \pp,\\
			0, &\text{if }p=\pp,
		\end{cases}
		\]
		and time horizon \T.
		The reduced flow is a feasible \TR flow for the instance $(\net', \T, \dem')$ and has \peakcosts
		\[\cmax[f^*_{\net'}] = \sum_{p\in \paths,\,p\neq \pp}  \pflowvar^*_{\net'}(p)\cdot \w = \cmax[f^*] - \ua[\pp]\w[\pp].
		\]
		
		At the same time, since flow~$f'$ is an optimal flow for the reduced instance, we have
		\[ \cmax[f] =  \cmax[f'] + \ua[\pp]\w[\pp] \leq \cmax[f^*_{\net'}]  + \ua[\pp]\w[\pp] \leq \cmax[f^*],
		\]
		so flow~$f$ has minimum \peakcosts.
	\end{proof}
	
	We finish this section with a proof of existence of the intersection nodes as in \Cref{cl:commonNode}.
	\newcommand{\paux}{p}
	\newcommand{\ppaux}{q}
	\newcommand{\vaux}{\overline{v}}
	\newcommand{\waux}{\overline{w}}
	\begin{lemma}\label{lem:commonNode}
		Let $G$ be a two-terminal series-parallel graph, and let $\pp$, $\paux$ and $\ppaux$ be $s$-$t$ paths.
		Let $\paux$ and $\ppaux$ intersect path $\pp$ at nodes $v$ and $w$, respectively, 
		such that $v$ precedes $w$ on path \pp, i.e.~$w\in \restr[\pp]{v,t}$.
		Then there exists a node $\xi \in \subp$ such that $\xi\in \paux \cap \ppaux$.
	\end{lemma}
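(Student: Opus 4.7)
My plan is to prove the lemma by structural induction on the recursive construction of the two-terminal series-parallel graph~$G$, using two standard properties: in a series composition $G_1;G_2$ joined at a cut vertex $u$, every $s$-$t$ path passes through $u$; and in a parallel composition $G_1 \parallel G_2$, every $s$-$t$ path lies entirely in $G_1$ or entirely in $G_2$, since $G_1$ and $G_2$ share only the terminals $s$ and $t$ and the graph is acyclic.

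The base case, a single edge, is immediate, as all three $s$-$t$ paths coincide, so any node of $\subp$ works. For a series composition, I split cases by the position of $v$ and $w$ relative to $u$. If $v$ lies on the $s$-$u$ portion of $\pp$ and $w$ on the $u$-$t$ portion, then $u \in \subp$ lies on every $s$-$t$ path, so $\xi = u$ suffices. Otherwise $v$ and $w$ both sit in the same component, and restricting $\pp$, $\paux$, $\ppaux$ to that component reduces the claim to a strictly smaller SP instance to which the induction hypothesis applies. For a parallel composition, if $v = s$ or $w = t$ then that terminal is a common node of all three paths and lies in $\subp$. Otherwise $v$ is an internal vertex of a unique component $G_i$, so $\pp \subseteq G_i$ (since $\pp$ passes through $v$), and the same reasoning forces $\paux \subseteq G_i$ via $v$ and $\ppaux \subseteq G_i$ via $w$; the induction hypothesis applied inside $G_i$ then yields the desired node.

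The main obstacle I anticipate is not conceptual but bookkeeping: carefully handling the degenerate cases $v = s$ and $w = t$ in the parallel step, and verifying that the restricted paths really satisfy the hypothesis of the recursive call (in particular that $v$ still precedes $w$ and that both lie on the restricted $\pp$). Both are resolved by the observation that any $s$-$t$ path in a parallel composition is confined to a single component as soon as it visits a node outside $\{s,t\}$, together with the fact that $s$ and $t$ trivially lie on every $s$-$t$ path. With these ingredients, the induction carries through uniformly.
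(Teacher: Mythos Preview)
Your structural-induction argument is correct and covers all cases cleanly; the bookkeeping you flag (the degenerate cases $v=s$, $w=t$, and the well-formedness of the restricted instance) is indeed the only thing to watch, and you handle it.

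It is, however, a genuinely different route from the paper's proof. The paper argues by contradiction via the forbidden-subgraph characterisation of two-terminal series-parallel graphs: assuming $p$ and $q$ share no node on $\subp$, it picks the \emph{last} node $\bar v$ of $p$ on that subpath and the \emph{first} node $\bar w$ of $q$ on $\restr[\pp]{\bar v,w}$, and then observes that the union of $\pp$, $\restr[p]{\bar v,t}$ and $\restr[q]{s,\bar w}$ contains a subdivision of the Wheatstone-bridge graph on $\{s,\bar v,\bar w,t\}$, which is the unique obstruction to being series-parallel. Your approach is more self-contained, needing only the recursive definition rather than the (separately proved) forbidden-minor theorem, and it gives a constructive procedure to locate $\xi$. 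The paper's argument is shorter and leverages a classical tool directly, but implicitly asks the reader to verify that the exhibited configuration really yields the forbidden subdivision. Either proof is perfectly acceptable here.
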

	\begin{proof}
		Suppose that $v\notin \ppaux$ and $w\notin \paux$, as otherwise the statement is immediately proven.
		
		Let $\vaux\in \paux$ be the last node on the path $\paux$ with $\vaux \in \subp$; that is, the subpath $\restr[\pp]{\vaux,w}$ contains no nodes from path $\paux$ except for $\vaux$. 
		Note that $\vaux \neq w$.
		Further, let $\waux \in \ppaux$ be the first node on the subpath $\restr[\pp]{\vaux, w}$ belonging to path $\ppaux$, so that the subpath $\restr[\pp]{\vaux, \waux}$ contains no nodes from path $\ppaux$ except for $\waux$; see~\Cref{fig:non-sp-paths}.
		
		\begin{figure}[bt]
			\centering
			\resizebox{0.6\textwidth}{!}{
					\pgfdeclarelayer{bg}
	\pgfsetlayers{bg, main}
	
	\colorlet{aux-green}{Orange!50!combi-orange}
	\colorlet{aux-blue}{combi-darkcyan}
	
	\begin{tikzpicture}[>=stealth',shorten >=1pt, shorten <=0.5pt, auto, node distance = 1.5cm]
		\tikzstyle{every state}=[thick, inner sep=0.5mm, minimum size=6mm, outer sep=0mm]
		\tikzstyle{every edge} = [draw, ->]
		\tikzset{internode/.style={circle,fill, black, inner sep=0.mm, label[below]}}
		\tikzset{greenstyle/.style={draw=aux-green, very thick}}
		\tikzset{bluestyle/.style={draw=aux-blue, very thick, decorate, decoration={snake, segment length=5mm, amplitude=0.5mm}}}
		
		\node[state] (s) at (0, 0) {$s$};
		\node[state] (t) at (11, 0) {$t$};
		\begin{pgfonlayer}{bg}
			\path[draw=black!40, very thick, ->] (s) -- (t);
		\end{pgfonlayer}
		
		\coordinate (as) at (2, 0);
		\node at ($(s) + (1, -0.2)$) {$\pp$};
		\node[state,  minimum size=4mm, fill=white] (v) at (3, 0) {${v}$};
		\node[state,  minimum size=4mm, fill=white] (vbar) at (5, 0) {$\bar{v}$};
		
		\coordinate (atilt) at (9,0);
		\node[state,  minimum size=4mm, fill=white] (w) at (8, 0) {$w$};
		\node[state,  minimum size=4mm, fill=white] (vtil) at (7, 0) {$\overline{w}$};
		
		\draw[->, greenstyle] (s) to[bend left=50] node[above, text=aux-green!80!black]{$p$} (v); 

		\draw[greenstyle,->] (v) to[bend left=50] (4,0.05) node[above=3mm, text=aux-green!80!black]{${p}$} to[bend left=50] (vbar);
		\draw[greenstyle,->] (vbar) to[bend left=45] node[above, text=aux-green!80!black]{${p}$} (t);
		
		\draw[bluestyle, ->] (s) to[bend right=35]node[below=1mm, text=aux-blue]{$q$}  (vtil);
		\draw[bluestyle, ->] (vtil) -- (w);
		\draw[bluestyle, ->] (w) to[bend right=40]node[below=1mm, text=aux-blue]{$q$} (t);

	\end{tikzpicture}
			}
		\caption{Path \pp is intersected by paths $p$ and $q$. Paths~$p$ (in orange) and~$q$ (in petrol wavy) are disjoint with the subpath of \pp between $\overline{v}$ and $\overline{w}$.}
			\label{fig:non-sp-paths}
		\end{figure}
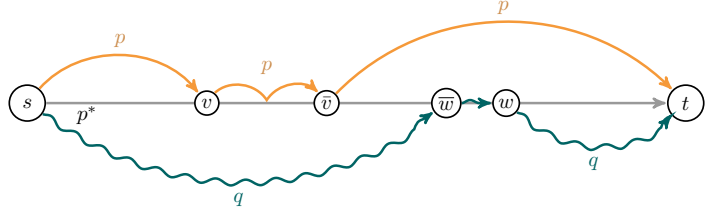
	
		Suppose that the statement is false, i.e.~paths $\paux$ and $\ppaux$ do not intersect on $\subp$.
		In particular, this implies~$\overline{v} \neq \overline{w}$.
		Then consider a subgraph $G'$ of $G$ induced by the following three paths: 
		\[\pp = (\restr[\pp]{s,\vaux} \circ \restr[\pp]{\vaux, \waux} \circ \restr[\pp]{\waux, t}), \quad \restr[\paux]{\vaux, t}\quad\text{ and }\quad\restr[\ppaux]{s, \waux}.\]
		Since 
		$\restr[\pp]{\vaux, \waux}\cap \ppaux = \{\waux\}$ as well as $\restr[\pp]{\vaux, \waux}\cap \paux = \{\vaux\}$,
		the subgraph $G'$ is homeomorphic to the graph $H = (V', A')$ with $V' =\{s', \widetilde{v}, \widetilde{w}, t'\}$ and arcs $A' = \big\{
		(s', \widetilde{v}), (\widetilde{v},\widetilde{w}), (\widetilde{w}, t'), (s', \widetilde{w}), (\widetilde{v}, t')\big\}$, which is a forbidden subgraph for series-parallel graphs \cite{valdesTL-SPgraph}.
		This contradiction to graph $G$ being series-parallel proves the lemma.
	\end{proof}

\subsection{General acyclic networks}
For non-series-parallel but acyclic networks, fractional \genProb on unit-cost networks is still solvable in polynomial time via column generation.

\begin{theorem}
	Fractional \genProb on unit-cost acyclic networks is solvable in polynomial time.
\end{theorem}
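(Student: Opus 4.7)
The plan is to formulate fractional \genProb on a unit-cost acyclic network as a path-indexed linear program and to handle its exponentially many variables via column generation. By \Cref{lem:unitCostNet}, the peak cost of any \TR flow with path decomposition $\pathdecomp$ equals $\sum_{p}\pflow\cdot\w[p]$, where $\w[p]=\min\{\trtp,\T-\trtp\}$, so fractional \genProb is equivalent to the LP
\[
\min\ \sum_{\substack{p\in\paths\\\trtp\leq\T}}\pflow\cdot\w[p]
\quad\text{s.t.}\quad
\sum_{\substack{p\in\paths\\\trtp\leq\T}}\pflow\cdot(\T-\trtp)\geq\dem,\
\sum_{\substack{p\in\paths\\a\in p}}\pflow\leq\ua \text{ for all } a\in A,\
\pflow\geq 0.
\]
Because this LP has only $\abs{A}+1$ nontrivial constraints, some optimal basic solution is supported on polynomially many paths, and it suffices to discover them via a polynomial-time pricing oracle.

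Given dual multipliers $\lambda\geq 0$ for the demand constraint and $\mu_a\geq 0$ for each capacity constraint, the reduced cost of a $\T$-bounded $s$-$t$ path $p$ is $\w[p]-\lambda(\T-\trtp)+\sum_{a\in p}\mu_a$. Since $\w[p]$ is piecewise linear in $\trtp$, the pricing splits into two subproblems: over paths with $\trtp\leq\frac{\T}{2}$, minimise $\sum_{a\in p}\bigl[(1+\lambda)\trt+\mu_a\bigr]-\lambda\T$; over paths with $\frac{\T}{2}<\trtp\leq\T$, minimise $\sum_{a\in p}\mu_a-(1-\lambda)\trtp+(1-\lambda)\T$; the oracle then returns the path with the smaller reduced cost of the two. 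Each subproblem is a shortest $s$-$t$ path problem with an upper (and in the second case, a strict lower) bound on the total transit time. On an acyclic digraph this resource-constrained shortest path admits a standard dynamic program: processing nodes in topological order, tabulate for every $(v,\tau)\in V\times\zeroto{\T}$ the minimum weight of an $s$-$v$ path of transit time exactly $\tau$, then read off the best entry with $\tau$ in the admissible range. This DP runs in $\O(m\T)$ time.

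Plugging this polynomial pricing oracle into the ellipsoid method on the dual LP, or equivalently into a standard column-generation scheme, yields a polynomial-time algorithm for fractional \genProb on unit-cost acyclic networks. The main obstacle is the nonlinear weight $\w[p]$, which would normally destroy the shortest-path structure of pricing; the short-path/long-path case split restores linearity within each subproblem at the cost of an additional transit-time bound, and acyclicity of the graph then makes the constrained shortest-path DP go through without issue.
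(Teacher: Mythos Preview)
Your overall strategy---write the problem as a path-indexed LP via \Cref{lem:unitCostNet} and solve it by column generation / the ellipsoid method---is exactly what the paper does. The gap is in the pricing oracle.

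You impose the range constraints $\trtp\leq\frac{\T}{2}$ and $\frac{\T}{2}<\trtp\leq\T$ on the two subproblems and then handle them with a dynamic program over states $(v,\tau)\in V\times\zeroto{\T}$, claiming $\O(m\T)$ running time. But $\T$ is part of the input and encoded in binary, so $\T$ can be exponential in the input length; your DP is therefore only \emph{pseudo}-polynomial. Resource-constrained shortest path is NP-hard already on DAGs, so there is no polynomial drop-in replacement that keeps the transit-time bound, and as written your algorithm does not run in polynomial time.

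The paper's fix is to observe that the range constraints are unnecessary. For \emph{every} path $p$ one has $\w=\min\{\trtp,\,\T-\trtp\}$, so the pointwise minimum of your two linear expressions equals the true reduced cost of $p$, while the other one overestimates it. Commuting the two minima therefore gives
\[
\min_{p\in\paths}\Bigl[\w-\lambda(\T-\trtp)+\sum_{a\in p}\mu_a\Bigr]
=\min\Bigl\{\,\min_{p}\sum_{a\in p}\bigl((1{+}\lambda)\trt+\mu_a\bigr),\ \min_{p}\sum_{a\in p}\bigl(\mu_a-(1{-}\lambda)\trt\bigr)\,\Bigr\}
\]
up to additive constants, with both inner minimisations taken over \emph{all} $s$-$t$ paths, no length restriction. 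These are two ordinary shortest-path computations; the second may have negative arc weights, but on an acyclic graph a single topological-order sweep solves it in $\O(m)$. This yields a genuinely polynomial separation oracle, and hence a polynomial algorithm.
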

\begin{proof}
Recall that the \peakcosts of a \TR flow can be expressed as 
$
\cmax = \sum_{{p\in \paths}} \w \cdot \pflow
$,
where $\w = \min\{\trtp,\ \T-\trtp\}$ for any path $p\in \paths$.
Hence, we express \genProb as a linear programme (LP) with variables~$y_p$ denoting the flow rates along paths:
\begin{align*}\tag{P}\label{lp2:prim}
	\text{min. } & \sum_{p\in \paths} \w y_p && \\
	\text{s.t. } &  \sum_{p\in\paths,\, a\in p} y_p \leq \ua && \forall a\in A && \mid \pi_a \\
	& \sum_{p\in \paths} \big(\T-\trtp\big) y_p \geq \dem && && \mid z \\
	& y_p \geq 0 && \forall p\in \paths. &&
\end{align*}
We introduce dual variables $\pi_a$ for capacity constraints and a variable $z$ for the demand constraint, and obtain the following dual LP:
\begin{align*}\tag{D}\label{lp2:dual}
	\text{max. } & \dem z -\sum_{a\in A} \ua\pi_a \\
	\text{s.t. } &  (\T-\trtp)\cdot z -\sum_{a\in p} \pi_a \leq \w && \forall p\in \paths\\
	& \pi_a \geq 0 && \forall a \in A\\
	& z\geq 0.
\end{align*}

The pricing problem is now 
\begin{align*}
	\min_{p\in \paths}\ \w -  \big(\T-\trtp\big) z + \suma \pi_a &=
	\min \begin{cases}
		\min_{p\in \paths}\,  \trtp -  \big(\T-\trtp\big) z + \suma \pi_a,\\
		\min_{p\in \paths}\, \T - \trtp -  \big(\T-\trtp\big) z + \suma \pi_a\\
	\end{cases}\\
	&= \min \begin{cases}
		\min_{p\in \paths}\, \suma \big(\pi_a + (z+1) \trt\big) - \T z,\\
		\min_{p\in \paths}\, \suma \big(\pi_a + (z-1) \trt\big) - \T (z-1).\\
	\end{cases}
\end{align*}
Both inner minimisation subproblems can be interpreted as a Shortest Path problem on the instance's network. 
Although the auxiliary arc costs may be negative in the second subproblem, both subproblems are solvable in polynomial time on acyclic networks. 
\end{proof}
	\section{Long time horizons}\label{sec:longT}
\renewcommand{\xa}{x(a)}
In the instance constructed in the proof of \Cref{lem:int-strongNPhard}, transit times of all $s$-$t$ paths in the network are almost as long as the time horizon.
 In this section, we consider the opposite case, where the time horizon is significantly longer than all simple $s$-$t$ paths in the network.  
 We say that the time horizon is \emph{long} if \T is such that $\T \geq 2\trtp$ for any~$s$-$t$ path~$p$. 
 {Note that, given an instance of \genProb, it is in general \np-hard to determine whether this condition holds. 
 	In practical scenarios, however, the lengths of the paths are often known by design or can be sufficiently bounded from above.}
 
 Instances with long time horizons have a useful structural property: any \TR flow also attains its peak cost in the middle of the time horizon, and we again obtain a closed formula for the peak cost.
 
    \begin{lemma}\label{lem:costForHalfT}
 	Let $\net = \netwDef$ be a network and $\T\in \N$ a time horizon such that all paths in~$\net$ have 
 	transit time not greater than $ \frac{\T}{2}$.
 	Then a maximum \TR flow $f$ associated with a static flow $x$ 
 	has \peakcosts
 	\[
 	\cmax = \sum_{a\in A} \ca \cdot \trt \cdot \xa,
 	\]
 	which is attained at time $\extp=\left\lfloor\frac{T}{2}\right\rfloor$.
 \end{lemma}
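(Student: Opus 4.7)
The plan is to establish the lemma in two steps: first bound the instantaneous cost from above by $\sum_a c_a \trt x(a)$ for every time point, and then show the bound is attained at $\hat\theta = \lfloor T/2 \rfloor$ by using the chain-flow decomposition of $f$ together with the long-horizon hypothesis.

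For the upper bound, observe that since $f$ is a \TR flow whose underlying static flow is $x$, for every arc $a$ and every time $\xi$ we have $f_a(\xi) \leq x(a)$. Integrating over a window of length $\trt$,
\[
\int_{\theta-\trt}^{\theta} f_a(\xi)\,d\xi \;\leq\; \trt \cdot x(a) \qquad \text{for every } \theta \in [0,T),
\]
and summing over arcs weighted by $c_a$ yields $c(f,\theta) \leq \sum_{a\in A} c_a \trt x(a)$ for all $\theta$.

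For the matching lower bound at $\hat\theta = \lfloor T/2 \rfloor$, I would take a $\T$-bounded path decomposition $\pflowvar$ of $x$ and write $f = \sum_{p} f^{\T}_p$ as a sum of chain flows (as in \Cref{def:chainFlow}). For a fixed path $p$ and an arc $a=(v,w) \in p$, the chain flow sends flow at rate $\pflow$ into $a$ exactly during $[\trt[](p|_{s,v}), \trt[](p|_{s,v}) + T - \trtp)$. The integral $\int_{\theta-\trt}^{\theta} f^T_{p,a}(\xi)\,d\xi$ therefore equals $\pflow$ times the length of the intersection of $[\theta-\trt,\theta]$ with this interval, and this length is exactly $\trt$ whenever $\trt[](p|_{s,w}) \leq \theta \leq T - \trt[](p|_{v,t})$.

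The key observation is that the long-horizon hypothesis forces $\hat\theta$ to lie in every such ``full'' interval simultaneously. Indeed, since transit times are positive integers and $\T$ is an integer, $2\trtp \leq \T$ implies $\trtp \leq \lfloor \T/2 \rfloor = \hat\theta$. Hence $\trt[](p|_{s,w}) \leq \trtp \leq \hat\theta$ and $\trt[](p|_{v,t}) \leq \trtp \leq \T/2$, giving $\T - \trt[](p|_{v,t}) \geq \T/2 \geq \hat\theta$. Summing the per-chain contributions,
\[
\int_{\hat\theta - \trt}^{\hat\theta} f_a(\xi)\,d\xi = \sum_{p \ni a} \pflow \cdot \trt = \trt \cdot x(a),
\]
so $c(f,\hat\theta) = \sum_{a\in A} c_a \trt x(a)$. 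Combined with the upper bound, this shows that $\hat\theta$ attains the peak and the peak value is the claimed sum.

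No step is genuinely difficult; the only mild subtlety is making sure the integer bound $\trtp \leq \lfloor T/2 \rfloor$ is used in place of $\trtp \leq T/2$ so that $\hat\theta$ (rather than only $\T/2$) is covered, and that the inequalities $\trt[](p|_{s,w}) \leq \hat\theta$ and $T - \trt[](p|_{v,t}) \geq \hat\theta$ hold for every arc of every flow-carrying path, including the corner cases at the endpoints of $p$ where $v=s$ or $w=t$.
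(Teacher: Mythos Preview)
Your proof is correct and follows essentially the same route as the paper's: both first note the trivial upper bound $c(f,\theta)\le\sum_a c_a\tau_a x(a)$ and then verify that at $\hat\theta=\lfloor T/2\rfloor$ every flow-carrying path $p$ contributes its full rate $y(p)$ on every arc $a\in p$ over the entire window $[\hat\theta-\tau_a,\hat\theta)$, using $\tau(p)\le\lfloor T/2\rfloor$ and $\tau(p)\le T-\lfloor T/2\rfloor$. The only cosmetic difference is that you phrase the argument via chain flows and the interval $[\tau(p|_{s,w}),\,T-\tau(p|_{v,t})]$, whereas the paper checks directly that $p\in\mathcal P_a(\xi)$ for every $\xi\in[\hat\theta-\tau_a,\hat\theta)$; these are equivalent computations.
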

 \begin{proof}
 	First, observe that the term $\suma \ca \cdot \trt \cdot \xa$ is an upper bound on \instantcosts of a temporally repeated flow corresponding to $x$.
 	It is thus enough to show that this cost is indeed attained at time $\extp$.
 	
 	Since transit times are integral, any path $p$ in the network, and thus in the path decomposition, has transit time $\trtp \leq \extp$.
 	Consider an arbitrary arc~$a=(v,w)\in G$ and a path $p \in \paths$ in the path decomposition of flow $f$ such that $\pflow>0$ and~$a\in p$.  
 	Then we have $p \in \actpaths{\auxtp}$ for each $\auxtp \in [\extp-\trt,\extp)$ (see Definition~\ref{def:trf}), as
 	\[
 	\trtp[p\vert_{s,v}] \leq \trtp - \trt \leq \extp - \trt \leq \auxtp
 	\]
 	and 
 	\[
 	\trtp[p\vert_{w,t}] \leq \trtp \leq \extp  = \left\lfloor\frac{\T}{2}\right\rfloor \leq \T - \extp \overset{\auxtp < \extp} < \T - \auxtp.
 	\]
 	Hence, every path of the flow $f$ containing arc $a$ uses this arc in time interval $[\extp - \trt, \extp)$, and the flow rate on arc $a$ for any~$\auxtp \in [\extp - \trt, \extp)$ is
 	\[
 	f_a(\auxtp) = \sum_{p\in \actpaths{\auxtp}} \pflow = \sum_{\substack{p\in \paths,\\ a\in p}} \pflow = \xa.       \]
 	The cost of flow $f$ at time $\extp$ is thus
 	\[
 	c(f, \extp) = \suma \ca\cdot \int_{\extp-\trt}^{\extp} f_a(\auxtp)\, d\auxtp = \suma \ca\cdot \xa \cdot (\extp - \extp + \trt) = \suma \ca\cdot\trt\cdot\xa = \cmax.
 	\]
 \end{proof}
This structural property is again not sufficient to make the integral problem polynomial-time solvable.
\begin{theorem}\label{thm:longT-weakNP}
	Integral \genProb is at least weakly \np-hard if the time horizon is long. 
\end{theorem}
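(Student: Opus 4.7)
The plan is to reduce from exact-$k$ positive \textsc{SubsetSum}---given positive integers $x_1, \ldots, x_n$, a target $L\in\N$ and a cardinality $k\in\N$, decide whether there is $S \subseteq \oneto{n}$ with $|S| = k$ and $\sum_{i \in S} x_i = L$---which is weakly \np-hard since every \textsc{SubsetSum} instance reduces to the $n + 1$ exact-$k$ queries for $k = 0, 1, \ldots, n$.

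Given such an instance, I would construct a network consisting of $n$ parallel two-arc $s$--$t$ paths $P_i$ through internal nodes $v_i$, augmented by a bottleneck arc $(s_0, s)$ of capacity $k$, unit transit time, and zero cost. On path $P_i$, the arc $e_i^{(1)} = (s, v_i)$ has $\tau = T - x_i'' - 2$ and $c = 0$, and the arc $e_i^{(2)} = (v_i, t)$ has $\tau = 1$ and $c = x_i''$, where $x_i'' \coloneqq x_i + M$ and $M$ is a large integer, with $T \coloneqq 2M$, chosen so that $M > \max\{\max_i x_i + 3,\ (k-1)\max_i x_i - L\}$. This choice makes every transit time a positive integer and ensures $x_i'' \geq T/2$, so each $s_0$--$t$ path has transit time $T - x_i'' \leq T/2$ and the long-horizon condition holds. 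All other capacities are unit.

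By \Cref{lem:costForHalfT} and \Cref{lem:TRflowValue}, any integral \TR flow supported on the subset $S \subseteq \oneto{n}$ of paths satisfies
\[
\val \;=\; \cmax \;=\; \sum_{i \in S} x_i'' \;=\; \sum_{i \in S} x_i \,+\, |S|\,M.
\]
Setting $D = z = L + kM$, the joint feasibility conditions $\val \geq D$ and $\cmax \leq z$ force $\sum_{i \in S} x_i = L + (k - |S|)\,M$. The bottleneck gives $|S| \leq k$, and the lower bound on $M$ rules out $|S| < k$: otherwise $\sum_{i \in S} x_i \geq L + M$ would exceed the natural upper bound $(k-1)\max_i x_i$. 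Thus any feasible flow must satisfy $|S| = k$ and $\sum_{i \in S} x_i = L$, yielding the desired exact-$k$ \textsc{SubsetSum} solution; conversely, any such $S$ directly induces a feasible integral \TR flow by sending one unit of flow along each $P_i$ with $i \in S$.

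The main obstacle is choosing $M$ and $T$ so that positivity and integrality of transit times, the long-horizon condition $T \geq 2\tau_p$, and the cardinality-forcing argument all hold simultaneously; the stated lower bound on $M$ achieves this and keeps $M, T$ polynomial in the numerical input. Correctness then reduces to direct substitution using the closed-form expressions of \Cref{lem:costForHalfT} and \Cref{lem:TRflowValue}, so the reduction is polynomial in the bit-size of the \textsc{SubsetSum} input and establishes weak \np-hardness of integral \genProb under the long-horizon condition.
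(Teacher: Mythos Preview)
Your reduction is correct: the computations $\val = \cmax = \sum_{i\in S} x_i''$ follow directly from \Cref{lem:TRflowValue} and \Cref{lem:costForHalfT} (the latter does not actually require the flow to be maximum, only that all paths have length at most $T/2$), and your choice of $M$ simultaneously enforces positive integral transit times, the long-horizon condition, and the cardinality argument ruling out $|S|<k$.

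The paper, however, proceeds via a different and somewhat shorter reduction from \textsc{Constrained Shortest Path}: it attaches a single bottleneck arc $(s',s)$ of capacity~$1$, so that any integral temporally repeated flow uses exactly one $s'$--$t$ path; transit times encode CSP lengths and arc costs are chosen as $\kappa_a/\ell_a$ so that $\sum_a c_a\tau_a x(a)$ recovers the CSP cost of the chosen path, while the demand constraint $\val = T-\tau(p)\geq D$ becomes the CSP length bound. Compared to your construction, the paper's reduction is a direct many-one reduction from a standard weakly \np-hard problem and avoids the auxiliary exact-$k$ step (for which you invoke a Turing reduction from \textsc{SubsetSum}); on the other hand, your approach stays within unit-capacity arcs and $\{0,1\}$-valued costs on most arcs, and exploits the pleasant coincidence $\val=\cmax$ in your network, which makes the two constraints collapse into a single equality. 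Both routes are valid and yield the same conclusion.
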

\begin{proof}
	\newcommand{\cspc}{\kappa}
	We reduce the Constrained Shortest Path problem (CSP) to \genProb.
	An instance $\inst$ of CSP is given by a directed graph $G=(V,A)$ with two designated nodes $s$ and $t$, non-negative arc costs $\cspc_a$ and lengths $\ell_a$ for each arc $a\in A$, and a length bound $L\in\N_0$. 
	The problem asks for an $s$-$t$ path in $G$ with length not greater than~$L$ and with minimum cost.
	
	Given an instance of CSP, we construct an instance of \genProb as follows: we obtain a network~\mbox{$G' = (V', A')$} by adding to $G$ a node $s'$ that is connected to node $s$ by an arc $(s',s)$.
	We set all arc capacities to~$1$.
	The new arc $(s',s)$ has cost~$0$ and transit time~$0$.
	For all other arcs~$a\in A'$, the transit time in the length of their counterpart arc in~$G$, and the cost is $\frac{\cspc_a}{\ell_a}$.
	To obtain integral arc parameters without loss of generality, all costs and transit times can be multiplied by the least common multiplier $\text{lcm}(\ell_a)_{a\in A}$.
	We choose the time horizon $\T\coloneqq 2\sum_{a\in A'}\trt$ to ensure that the network satisfies the condition of the theorem, and set the demand to $\dem\coloneqq \T-L$.
	
	Since every $s'$-$t$ path in~$G'$ contains arc~$(s',s)$, and since its capacity is one, any integral temporally-repeated flow in the constructed network~$\net'$ uses only one path. 
	By \Cref{lem:costForHalfT}, the \peakcosts of a \TR flow using path $p^*$ is 
	\[
		\suma \ca\cdot\trt\cdot\xa  = \sum_{a\in p^*} \ca\cdot\trt = \sum_{a\in p^*, a\neq (s',s)} \frac{\cspc_a}{\ell_a}\cdot\ell_a = \cspc(p^*),
	\]
	which is the cost of path $p^* \setminus\{(s',s)\}$ in the CSP instance.
	Hence, there is a one-to-one correspondence between feasible \TR flows with peak cost not greater than some value~$z$ and $s$-$t$ paths in graph~$G$ with cost at most~$z$ and length not greater than~$L$, so the two instances are equivalent.
\end{proof}

Surprisingly, the problem becomes easy again if the we ask for a flow of maximum possible value.

    \begin{theorem}\label{thm:shortPathsPoly}
		A \emph{maximum} \TR flow with minimum \peakcosts on instances with a long time horizon can be found in strongly polynomial time.
    
	\end{theorem}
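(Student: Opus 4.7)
The plan is to exploit the closed-form expression for peak cost given by Lemma~\ref{lem:costForHalfT} to reduce the problem to a two-stage min-cost circulation. Under the long-horizon assumption every $s$-$t$ path in $\net$ has transit time at most $\T/2$, so the lemma applies and yields, for any \TR flow $f$ with underlying static flow $x$, the identity
\[
    \cmax[f] = \sum_{a\in A} \ca \trt \cdot x(a).
\]
The crucial point is that the peak-cost objective is now a linear function of the underlying static flow.

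Next I would recall, via Lemma~\ref{lem:TRflowValue} and Theorem~\ref{thm:FoFu}, that a \TR flow is maximum if and only if its underlying static flow $x$ is the $s$-$t$ part of a minimum $\tau$-cost circulation in the extended network $\net'$ obtained from $\net$ by adding a backward arc $(t,s)$ of infinite capacity and cost $-\T$. The problem therefore reduces to the following lexicographic bi-objective circulation problem on $\net'$: among all circulations minimising the primary $\tau$-cost, find one minimising the secondary linear objective $\sum_{a\in A} \ca \trt \cdot x(a)$.

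Such a lexicographic min-cost circulation problem is solvable in strongly polynomial time by two successive calls to a strongly polynomial min-cost flow routine such as Orlin's. The first call produces an optimal static flow $x^\circ$, the optimum value $V^\ast$ of the primary objective, and node potentials $\pi$ certifying optimality. The second call solves a min-cost flow problem with costs $\ca\trt$ on the residual network of $x^\circ$ restricted to the subgraph of arcs with zero reduced $\tau$-cost with respect to $\pi$; by complementary slackness this subgraph parametrises exactly the optimal face of the primary problem, so the static flow $x^\ast$ obtained by combining $x^\circ$ with the second circulation is lexicographically optimal. Any path decomposition of $x^\ast$ is automatically $\T$-bounded, since every $s$-$t$ path in $\net$ has length at most $\T/2 < \T$; the induced \TR flow is therefore a maximum \TR flow whose peak cost equals the optimum of the secondary objective.

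I expect the main obstacle to be the second step, namely performing the secondary optimisation strictly on the optimal face of the primary min-cost circulation while staying strongly polynomial. The plan is to rely on the standard complementary-slackness characterisation of the optimal face together with a strongly polynomial min-cost flow subroutine, so that the overall running time is bounded by a constant number of min-cost flow computations on a network of size polynomial in the input.
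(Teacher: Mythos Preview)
Your approach is correct and genuinely different from the paper's. Both proofs start from Lemma~\ref{lem:costForHalfT}, so that minimising peak cost over maximum \TR flows becomes the lexicographic problem: among all static flows inducing maximum \TR flows, minimise the linear functional $\sum_{a}\ca\trt\,x(a)$. The paper scalarises this via a big-$M$ trick (Lemma~\ref{lem:strangeTransf}): it defines $M=\sum_a\ca\trt\ua+1$, sets auxiliary arc costs $(M+\ca)\trt$ and $-M\T$ on the return arc, and solves a \emph{single} min-cost circulation (Lemma~\ref{lem:minCostCirc}); the choice of $M$ guarantees that the primary objective dominates, and strongly polynomial min-cost circulation algorithms are insensitive to cost magnitudes. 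Your route instead solves two successive min-cost circulations, first with costs $\trt$ to pin down the optimal face, then with costs $\ca\trt$ on the zero-reduced-cost residual subgraph. This avoids the big-$M$ analysis entirely and is arguably cleaner, at the price of an extra min-cost flow call and the (standard) complementary-slackness argument that the zero-reduced-cost residual arcs parametrise exactly the primary optimal face.

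One small point you gloss over: you assert that ``any path decomposition of $x^\ast$ is automatically $\T$-bounded'' because every $s$-$t$ path has length at most $\T/2$. This handles long paths, but a flow decomposition may in principle contain cycles, and Lemma~\ref{lem:costForHalfT} silently assumes a cycle-free decomposition (otherwise the identity $f_a(\auxtp)=x(a)$ used in its proof fails). The fix is immediate: since $\trt>0$ for every arc, any cycle in the support of $x^\ast$ within $G$ has strictly positive $\tau$-cost, so removing it would yield a circulation of strictly smaller primary cost, contradicting primary optimality of $x^\ast$. The paper makes this cycle-removal step explicit in the proof of Lemma~\ref{lem:minCostCirc}; you should add the one-line argument.
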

    We prove \Cref{thm:shortPathsPoly} in two steps.
    First, using the expression for the \peakcosts from \Cref{lem:costForHalfT}, we show in \Cref{lem:strangeTransf} that minimising the \peakcosts is equivalent to maximising a function dependent only on the corresponding static flow and on network parameters. 
    Second, in \Cref{lem:minCostCirc},
    we transform the maximisation of the latter function into a minimum-cost circulation problem on an auxiliary network.
    Since the minimum-cost circulation problem is polynomial-time solvable, we obtain a polynomial algorithm for \genProb with long time horizon.

    We start by establishing a relation between the \peakcosts of a flow over time and the underlying static flow.
    
    \begin{lemma}\label{lem:strangeTransf}
        For a network $\netwDef$, define a number $M\coloneqq \sum_{a \in A} \ca \cdot \trt \cdot \ua +1$.
        Let $\T\in \N$ be the time horizon.
        Let $x'$  be a feasible static flow that admits a \shortdec path decomposition and
        that maximises the term  
        \begin{equation}\label{eq:term-to-max}
            \Phi(x) \coloneqq M\cdot \T \cdot \val[x] - \suma \big( M + \ca\big)\cdot \trt \cdot \xa.
        \end{equation}
        Then any associated \TR flow $f'$ is a maximum temporally repeated flow, and the static flow $x'$ minimises the sum $\sum_{a\in A} \ca \cdot \trt \cdot \xa$ among all static flows that induce maximum \TR flows.
    \end{lemma}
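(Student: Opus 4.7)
The plan is to show that $\Phi$ separates into a dominant flow-value term and a small peak-cost correction, and then invoke integrality of the flow polytope to force the maximizer to be a minimum-peak-cost maximum \TR flow. Combining \Cref{lem:TRflowValue} with the peak-cost formula from (the proof of) \Cref{lem:costForHalfT}, which in the long-horizon regime gives $\cmax[f] = \suma\ca\trt\xa$ for the \TR flow $f$ induced by $x$, I would rewrite
\[
    \Phi(x) \;=\; M\bigl(\T\val[x] - \suma \trt \xa\bigr) \;-\; \suma \ca \trt \xa \;=\; M\cdot\val[f] - \cmax[f].
\]
The choice $M = \suma\ca\trt\ua + 1$ ensures $\cmax[f] \leq M-1 < M$ for every feasible flow, so the $M\cdot\val[f]$ summand strictly dominates.

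Next I would argue that any maximizer $x'$ of $\Phi$ achieves $\val[f'] = D^{\text{max}}$. Since every $s$-$t$ path is automatically \shortdec in the long-horizon regime, the feasible set (after discarding any cycle components of a decomposition, which can only decrease $\Phi$) coincides with the standard integer-capacitated flow polytope, whose vertices are integer. The linear function $\Phi$ attains its maximum on a face $F$ whose vertices are likewise integer. If some integer vertex $\hat x \in F$ had $\val[\hat f] < D^{\text{max}}$, then an integer maximum \TR flow $x^{**}$, which exists by \Cref{thm:FoFu}, would satisfy $\val[f(x^{**})] - \val[\hat f] \geq 1$, and hence
\[
    \Phi(x^{**}) - \Phi(\hat x) \;\geq\; M\cdot 1 - (M-1) \;=\; 1 \;>\; 0,
\]
contradicting the optimality of $\hat x$. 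Therefore every integer vertex of $F$ has $\val[f] = D^{\text{max}}$, and by linearity of $\val[f]$ so does every point of $F$, including $x'$.

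Restricted to the set of max \TR flows, $\Phi$ reduces to $M\cdot D^{\text{max}} - \cmax[f]$, so every maximizer of $\Phi$ simultaneously minimizes $\cmax[f] = \suma\ca\trt\xa$, establishing the second assertion.

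The main obstacle is that the $M$-versus-$(M{-}1)$ gap only yields a strict $\Phi$-improvement when the flow values differ by at least one unit, which is not guaranteed for arbitrary fractional flows. Passing through integer vertices of the flow polytope, where such a unit gap is certain, and extending the conclusion to the entire optimal face via linearity of $\val[f]$ is the crucial maneuver that carries the elementary comparison $M > M-1$ across the full feasible region.
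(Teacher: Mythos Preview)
Your approach matches the paper's: rewrite $\Phi(x) = M\cdot\val - \suma \ca\trt\,\xa$ via \Cref{lem:TRflowValue} and use $M > \suma \ca\trt\ua$ so that the flow-value term dominates. The paper then argues by contradiction, asserting that ``all flow values are integers'' to obtain the unit gap $\val[f''] \geq \val[f']+1$; your passage through integer vertices of the optimal face is in fact the more careful way to justify that step, since the lemma allows $x'$ itself to be fractional.

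Two small points. First, your parenthetical has the sign backwards: discarding cycle flow keeps $\val[x]$ fixed while lowering some $\xa$, and each $\xa$ enters $\Phi$ with coefficient $-(M+\ca)\trt \leq 0$, so this can only \emph{increase} $\Phi$. That is precisely the direction you need (a maximizer carries no cycle flow), so the argument survives once the wording is fixed. Second, invoking \Cref{lem:costForHalfT} to name $\suma \ca\trt\,\xa$ as $\cmax$ is unnecessary---the lemma's conclusion is stated directly in terms of that sum, and the paper's proof never uses the long-horizon hypothesis; it only needs that both $x'$ and the comparison flow from \Cref{thm:FoFu} admit \shortdec decompositions.
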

    \begin{proof}
    Let $f$ be a \TR flow associated with a static flow $x$. 
    We transform term \eqref{eq:term-to-max} as follows:
    \begin{align*}
        \Phi(x)= & M\cdot \T \cdot \val[x] - \suma \big( M + \ca\big)\cdot \trt \cdot \xa \\
   = &M\cdot \T \cdot \val[x] - M\cdot \suma \trt \cdot \xa - \suma \ca \cdot\trt\cdot \xa \\
   \overset{\Cref{lem:TRflowValue}}{=} & M \cdot \val[f] - \suma \ca \cdot\trt\cdot \xa.
    \end{align*}

    Suppose a static flow $x'$ maximises expression \eqref{eq:term-to-max}, but the associated \TR flow $f'$ is not maximum. 
    Then there exists a \TR flow $f^{\prime\prime}$ that corresponds to a static flow $x^{\prime\prime}$ and whose value is strictly greater than the value of~$f'$, i.e.~$\val[f''] > \val[f']$.
    
    Since all arc parameters are integers, so are all flow values: for static flows this follows from the main result of Ford and Fulkerson on maximum flows over time \cite{FF1958}, and for associated flows over time from \Cref{lem:TRflowValue}.
    Hence, we have $\val[f''] \geq \val[f'] + 1$, and obtain 
    \begin{align*}
        \Phi(x'') &= M\cdot \val[f''] - \suma \ca \cdot\trt\cdot x''(a)\\
        &\geq M\cdot \val[f'] + M - \suma \ca \cdot\trt\cdot x''(a)\\
        &\geq M\cdot \val[f'] - \suma \ca \cdot \trt \cdot x'(a) + M -\suma \ca \cdot\trt\cdot (x''(a) - x'(a))\\
        &\geq M\cdot \val[f'] - \suma \ca \cdot \trt \cdot x'(a) + M - \suma \ca \cdot\trt\cdot\ua\\
        & > M\cdot \val[f'] - \suma \ca \cdot \trt \cdot x'(a) \\
        &= \Phi(x'),
    \end{align*}
    so flow $x'$ does not maximise term~\eqref{eq:term-to-max}, which is a contradiction.
    Hence, flow $x'$ induces a maximum \TR flow,
    and thus maximises the value $M\cdot \val$. 
    Furthermore, since $x'$ maximises $\Phi(x)$, it 
    has a minimal value of the sum $\suma \ca \cdot\trt \cdot {x}(a)$
    among all static flows ${x}$ inducing maximum \TR flows.
    \end{proof}

A static flow maximising expression \eqref{eq:term-to-max} is found via an auxiliary minimum cost circulation problem, similar to the one used for finding maximum temporally repeated flows \cite{FF1958}. 

\begin{lemma}\label{lem:minCostCirc}
    Let a network $\net = \netwDef$ and a time horizon $\T$ be given.
    Then a static flow $x$ in network $\net$ that maximises expression~\eqref{eq:term-to-max} and has a \shortdec path decomposition can be found in polynomial time.
\end{lemma}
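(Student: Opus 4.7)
The plan is to reduce the problem to a minimum-cost circulation on an auxiliary network, closely mirroring the Ford--Fulkerson construction of \Cref{thm:FoFu}: we add a back-arc $(t,s)$ to the original graph and choose arc costs so that circulations correspond bijectively to pairs (static $s$-$t$ flow, flow value), with total cost equal to $-\Phi(x)$ up to an additive constant.

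Concretely, I would construct an extended network $\auxG = (V,\, A\cup\{(t,s)\})$ with capacities $u_a$ unchanged for $a\in A$ and $u_{(t,s)} := \sum_{a\in A} u_a$ (so no circulation is artificially capped), and with arc costs
\[
\gamma_a := (M+c_a)\tau_a\quad\text{for } a\in A, \qquad \gamma_{(t,s)} := -M\T.
\]
Then I would compute a minimum-cost circulation $y^*$ in $\auxG$ with respect to $\gamma$ in strongly polynomial time (e.g.\ via Orlin's algorithm). Setting $x := y^*|_A$ gives a feasible static $s$-$t$ flow with value $|x| = y^*(t,s)$, and a direct computation shows
\[
\gamma(y^*) \;=\; \sum_{a\in A}(M+c_a)\tau_a\, x(a) \;-\; M\T\,|x| \;=\; -\Phi(x).
\]
So minimizing $\gamma$ over all circulations is equivalent to maximizing $\Phi$ over all feasible static $s$-$t$ flows in $\net$, and $x$ is therefore an optimal static flow for \eqref{eq:term-to-max}.

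The remaining and only nontrivial step is producing a \shortdec path decomposition. For this, I would decompose $y^*$ into simple cycles $C_1,\ldots,C_k$ with positive multiplicities $\lambda_i$. By optimality of $y^*$, no $C_i$ can have $\gamma(C_i)>0$: otherwise $y^* - \lambda_i\mathbf{1}_{C_i}$ would be a feasible circulation of strictly smaller cost. Any cycle not using $(t,s)$ lies entirely in $G$ with nonnegative arc costs $\gamma_a\geq 0$, so it has $\gamma(C_i)=0$ and contributes only a pure cycle component (which may be dropped). A cycle using $(t,s)$ corresponds to an $s$-$t$ path $p$ in $G$ with cycle cost $M(\tau(p)-\T)+\sum_{a\in p}c_a\tau_a \leq 0$. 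Since $M = \sum_{a\in A}c_a\tau_a u_a + 1 > \sum_{a\in p}c_a\tau_a$ and transit times are integral, $\tau(p)>\T$ would force this cost to be $\geq M - \sum_{a\in p}c_a\tau_a \geq 1 > 0$, a contradiction. Hence every path in the induced decomposition of $x$ satisfies $\tau(p)\leq\T$, as required.

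The main obstacle is the calibration of $M$: it has to be large enough that, in the cost of a cycle through $(t,s)$, the term $M(\tau(p)-\T)$ dominates $\sum_{a\in p}c_a\tau_a$ whenever $\tau(p)-\T\geq 1$, so that optimal circulations cannot use long paths. The specific choice $M = \sum_{a\in A}c_a\tau_a u_a + 1$ inherited from \Cref{lem:strangeTransf} is precisely what makes this dominance argument go through, and it is also polynomial in the input size, so the whole procedure runs in (strongly) polynomial time.
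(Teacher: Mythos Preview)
Your proposal is correct and follows essentially the same approach as the paper: both add a back-arc $(t,s)$ with cost $-M\T$, set $\gamma_a=(M+c_a)\tau_a$ on original arcs, observe that $\gamma(\bar x)=-\Phi(x)$, and use optimality (no positive-cost cycles in the decomposition / no negative cycles in the residual graph) together with the size of $M$ to rule out paths longer than $\T$ and zero-cost cycles. The only cosmetic difference is that you argue via a cycle decomposition of the optimal circulation while the paper phrases the same step through the residual-network optimality criterion; also note a small sign slip in your bound (the cycle cost for $\tau(p)>\T$ is $\geq M+\sum_{a\in p}c_a\tau_a$, not $M-\sum_{a\in p}c_a\tau_a$), though your conclusion remains valid.
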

\begin{proof}

\newcommand{\circu}{\overline{x}}

    We transform the graph $G$ into an auxiliary graph $\auxG$ by adding an arc $(t,s)$ with capacity $u_{(t,s)} = \infty$.
    For a number~$M\in \N$ defined as in \Cref{lem:strangeTransf},
    we define arc costs in network $\auxG$ as follows:
    \[
    \auxcost \colon A(\auxG) \to \Z, \quad a\mapsto
    \begin{cases}
        -M\cdot \T,\quad&a=(t,s),\\
        M\cdot \trt + \ca \cdot \trt,\quad&\text{otherwise}.
    \end{cases}
    \tag{$**$}\label{eq:aux-cost}
    \]
    
    Let $\circu$ be a circulation in $\auxG$.
    A static $s$-$t$ flow~$x$ in network $\net$ \emph{corresponding} to $\circu$  is a flow that results from~$\circu$ by removing the flow over the arc $(t,s)$.
    Observe that the value of the circulation and of the corresponding flow 
    is equal to the flow value on the arc $(t,s)$ -- the only ingoing arc of the source $s$.
    
    Now let $\circu$ be a minimum cost circulation in $\auxG$ and $x$ the corresponding flow in $G$.
    We show that $x$ maximises expression \eqref{eq:term-to-max} and that any of its flow decompositions is \shortdec. 
    
    We express the costs of circulation $\circu$ as 
    \begin{align*}
        \auxcost(\circu) &= - M\cdot \T \cdot \circu(t,s)  + \suma \auxcost(a) \cdot \circu(a)\\
        &=  - M\cdot \T \cdot \circu(t,s)  + \suma \auxcost(a) \cdot \xa \\
        &= - M\cdot \T \cdot \val[x] +  \suma \auxcost(a) \cdot \xa\\
         &= - \Phi(x).
    \end{align*}
    Hence, a minimum cost circulation yields a static flow $x$ that maximises term~\eqref{eq:term-to-max}.
    
    It remains to show that flow $x$ admits a $\T$-bounded path decomposition and thus produces a feasible \TR flow.
    Let $\flowdecomp$ be an arbitrary flow decomposition of the flow $x$ of size at most $\abs{A}$.
    Such a decomposition exists and can be computed in polynomial time by the well-known Flow Decomposition Theorem~\cite{AMO1993}.
    We show that we can transform the flow decomposition $y$ into a \shortdec path decomposition $y'$ in linear time, 
    i.e.~a decomposition for which $y'(p) = 0$ for any path~$p\in \paths$ with $\trtp > \T$ and any cycle $p\in \cycles$.

    First, let $p\in \paths$ be an $s$-$t$ path in $G$ with $\trtp>\T$ and a positive flow value $\pflow > 0$.
    Then 
    \[\auxcost(p) = \sum_{a\in p} \auxcost(a) \geq M \cdot \sum_{a\in p}\trt = M\cdot \trtp > M\cdot\T.
    \]
    Since $\pflow >0$, there exists a backward path $\overleftarrow{p}$ in the residual network of the circulation $\circu$ such that the cycle~$\overleftarrow{p} \cup (s,t)$ has costs
    \[\auxcost(\overleftarrow{p}\cup(s,t)) < -M\cdot\T + M\cdot\T = 0.
    \]
    Hence, there is a negative-cost cycle in the residual graph, which contradicts the minimality of $\circu$. 
    So we have~\mbox{$\trtp\leq \T$} for any path $p\in \paths$ with $\pflow > 0$.

    If $p\in \cycles$ is a cycle in $G$ with $\pflow >0$ and $\auxcost(p) \neq 0$, then either $p$ or its reverse $\overleftarrow{p}$ is a negative cycle in the residual network, which contradicts the optimality of circulation $\overline{x}$.
    Hence, the cycle $p$ has cost $\auxcost(p)=0$, and the flow along $p$ can be removed without changing the value or the cost of the circulation.
    Hence, we obtain a \shortdec path decomposition $\pflowvar' \colon \paths \to \R$ 
    by setting $\pflowvar'(p) \coloneqq0$ for all cycles $p\in \cycles$ and $\pflowvar'(p) \coloneqq \pflow$ otherwise.

\end{proof}

Observe that if all arc parameters are integers, then the resulting static flow $x$ is integral, and a path decomposition obtained by, for instance, a greedy Edmonds-Karp heuristic is integral as well.
Hence, the static flow found in \Cref{lem:minCostCirc} yields an integral \TR flow.

Overall, instances of \genProb with sufficiently long time horizons are solved by the following steps:
\begin{enumerate}
    \item Compute $M\coloneqq \sum_{a \in A} \ca \cdot \trt \cdot \ua +1$.
    \item Construct an auxiliary network $\auxG = \big(V(G),\ A(G)\cup (t,s)\big)$ with $\ua[(t,s)]=\infty$ 
    and with an arc cost function as in~\eqref{eq:aux-cost}.
    \item Find a minimum-cost circulation $\overline{x}$ in $\auxG$ and the corresponding static flow $x$ in the original network $G$.
    \item Compute an integral path decomposition of the flow $x$.
    \item The path decomposition yields an integral maximum \TR flow with minimum \peakcosts.
\end{enumerate}

Note that again, as in \Cref{sec:unitCosts}, the described procedure is not enforcing the solution to be integral, but yields integral solutions nonetheless. 
This implies that on instances of \genProb with a long time horizon and maximum demand, we always construct an integral optimal solution.


Relaxing the integrality requirement for flow rates makes the problem easy in this second special case as well.
\begin{theorem}\label{thm:cg}
	Fractional \genProb with long time horizon is solvable in polynomial time.
\end{theorem}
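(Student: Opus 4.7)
The plan is to adapt the column generation approach from \Cref{sec:unitCosts} to the long time horizon setting with arbitrary arc costs. The key enabling observation is that the proof of \Cref{lem:costForHalfT} does not actually use maximality of the \TR flow: it only requires that every $s$-$t$ path in the network has transit time at most $\T/2$, which is exactly the long-horizon assumption. Hence, for every \TR flow~$f$ with underlying static flow~$x$ on such an instance, the \peakcosts is attained at $\lfloor \T/2 \rfloor$ and equals $\sum_{a\in A} c_a \tau_a x(a)$. Substituting $x(a) = \sum_{p\in\paths,\,a\in p} y(p)$ for any path decomposition $\pathdecomp$ of~$x$, the \peakcosts rewrites as $\cmax = \sum_{p\in\paths} \kappa(p)\,y(p)$, where $\kappa(p) \coloneqq \sum_{a\in p} c_a \tau_a$ depends only on~$p$.

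Building on this, I would formulate fractional \genProb under the long-horizon assumption as the linear programme
\begin{align*}
\min\ & \sum_{p\in\paths} \kappa(p)\, y(p) \\
\text{s.t.}\ & \sum_{p\in\paths,\, a\in p} y(p) \leq u_a && \forall a\in A,\\
& \sum_{p\in\paths} (\T - \trtp)\, y(p) \geq \dem,\\
& y(p) \geq 0 && \forall p\in\paths.
\end{align*}
Because $\trtp \leq \T/2 < \T$ for every $p\in\paths$, every feasible $y$ is automatically \shortdec, so \Cref{lem:TRflowValue} guarantees that feasible solutions correspond exactly to path decompositions of \TR flows of value at least~\dem, with objective value equal to the \peakcosts.

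Assigning dual multipliers $\pi_a \geq 0$ to the capacity constraints and $z \geq 0$ to the demand constraint, the dual constraint for a path~$p$ reads $(\T - \trtp)\,z - \sum_{a\in p} \pi_a \leq \kappa(p)$. The pricing (separation) problem is therefore to minimise
\[
\kappa(p) - (\T - \trtp)\,z + \sum_{a\in p} \pi_a \;=\; \sum_{a\in p} \bigl(\pi_a + \tau_a\,(c_a + z)\bigr) \;-\; \T\,z
\]
over all $s$-$t$ paths~$p$. Since $\pi, \tau, c, z \geq 0$, the auxiliary arc weights $\pi_a + \tau_a(c_a + z)$ are non-negative, so this reduces to a shortest $s$-$t$ path problem solvable in strongly polynomial time by Dijkstra's algorithm.

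Combining the separation oracle with the ellipsoid method yields a polynomial-time algorithm for the LP, from which a fractional \TR flow of minimum \peakcosts is recovered. The main conceptual step is the extension of \Cref{lem:costForHalfT}'s closed-form expression to arbitrary, not necessarily maximum, \TR flows; once this is justified, the rest of the argument mirrors the column generation framework already developed for unit-cost acyclic networks in \Cref{sec:unitCosts}, with the added benefit that acyclicity of the pricing graph is not required because the arc weights are non-negative.
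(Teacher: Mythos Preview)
Your proposal is correct and follows essentially the same approach as the paper: formulate the path-based LP with objective $\sum_{p}\big(\sum_{a\in p} c_a\tau_a\big)y_p$, dualize, and observe that the pricing problem is a shortest $s$-$t$ path with non-negative arc weights $\pi_a + \tau_a(c_a+z)$. Your write-up is in fact slightly more careful than the paper's, since you explicitly justify why \Cref{lem:costForHalfT} applies to non-maximum \TR flows and why acyclicity is not needed for the pricing step.
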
 
\begin{proof}
	Let \pathdecomp be the path decomposition corresponding to the sought optimal flow. 
	We express the \genProb problem as a linear programme over non-negative real variables~$y_p$ representing the flow rates~$\pflow$ for~$p\in \paths$.
	The objective is \[
	\cmax = \suma \ca\cdot\trt\cdot\xa = \suma \ca\cdot\trt\cdot \sum_{p\in\paths,\, a\in p} y_p = \sum_{p\in \paths} (\sum_{a\in p} \ca\trt) y_p.
	\]
	The flow rates have to respect the arc capacity. i.e. $\xa\leq \ua$ for all arcs $a\in A$, and the flow value $\val$ must be at least \dem (due to the minimisation objective, any optimal solution will have value of exactly \dem).
	
	Putting things together, we obtain the following linear programme:
	\begin{align*}\tag{P}\label{lp:prim}
		\text{min. } & \sum_{p\in \paths} (\sum_{a\in p} \ca\trt) y_p\\
		\text{s.t. } &  \sum_{p\in\paths,\, a\in p} y_p \leq \ua && \forall a\in A\\
		& \sum_{p\in \paths} (\T-\trtp)\cdot y_p \geq \dem\\
		& y_p \geq 0 && \forall p\in \paths.
	\end{align*}
	This primal LP~\eqref{lp:prim} has an exponential number of variables in worst case, which motivates the column generation approach.
	Using dual variables $\pi_a$ for capacity constraints and a variable $z$ for the demand constraint, we obtain the following dual LP:
	\begin{align*}\tag{D}\label{lp:dual}
		\text{max. } & \dem z -\sum_{a\in A} \ua\pi_a \\
		\text{s.t. } &  (\T-\trtp)z -\sum_{a\in p} \pi_a \leq \sum_{a\in p}\ca\trt && \forall p\in \paths\\
		& \pi_a \geq 0 && \forall a \in A\\
		& z\geq 0.
	\end{align*}
	The pricing problem for given values of the dual variables is thus 
	\[ \min_{p\in \paths} \sum_{a\in p}\ca\trt - (\T-\trtp)z +\sum_{a\in p}\pi_a 
	= - \T z + \min_{p\in \paths} \sum_{a\in p} \ca\trt + z\trt + \pi_a,
	\]
	which is solved by computing a shortest path on the graph~$G$ with non-negative arc costs $ \ca\trt + z\trt + \pi_a$ for each arc~$a$.
	Since the pricing problem, i.e.~the separation problem for the dual \eqref{lp:dual}, is solved in polynomial time, so is also the primal LP \eqref{lp:prim}.
\end{proof}

	\section{Conclusion and outlook}
\label{sec:conclusion}
In this work, we introduced peak cost as a novel objective for flows over time, which is relevant in scenarios with limited resources that execute transportation. 
Since solutions with a simple structure are of particular interest in such settings, 
 we restricted the solution space to temporally repeated flows and formulated the \genProb problem. 
We showed that this restriction comes at a cost of  the objective value: while for some flow over time problems, for example maximum flows or quickest min-cost flow problem, \TR flows yield optimal solutions or constant factor approximation, we showed that the corresponding approximation ratio is unbounded for the minimum-peak-cost objective.

Similarly to the minimum-cost objective, the integral decision version of \genProb is strongly $\np$-hard, even for two-terminal series parallel graphs with unit transit times, capacities, and costs equal to zero or one.
This implies that the integral optimisation version is strongly $\np$-hard, even under the above restrictions. 

However, we indicated two special classes of instances for which we obtain polynomial algorithms constructing optimal solutions. 
For unit cost networks, we showed that an optimal solution on series-parallel graphs can be found by a greedy algorithm proposed by Ruzika et al.~\cite{RuzikaSS11} for earliest arrival flows, if we allow fractional flow rates or if the flow demand is the maximum possible one, i.e.~if we are asking for a maximum flow with minimum peak cost. 
This fact contrasts with the \np-hardness of the problem for already two different arc cost values, even for a maximum demand and on series-parallel graphs.
In addition, fractional solutions in acyclic unit-cost networks can also be found in weakly-polynomial time via column generation. 
For the special case of long time horizons, we again derived polynomial-time algorithms for the fractional version of the problem via column generation and for the integral version with maximum demand via computing a static minimum cost circulation in an auxiliary graph, similarly to the approach of Ford and Fulkerson for maximum flows over time.

There are multiple avenues for future work: for example, the complexity of the fractional version of \genProb in general case is still not known.
In addition, our results consider only the theoretical computational complexity. 
Hence, the tractability of exact solution methods or heuristics for solving \genProb in a real-world setting shall be explored in future research. 
One promising approach are path-based integer programming formulations, especially if the number of paths is bounded or a branch-and-price algorithm is employed.
	\newpage
	\bibliographystyle{plainurl}
	\bibliography{references}

\begin{thebibliography}{10}

\bibitem{AhrensMA}
Emma Ahrens.
\newblock {G}eneralized temporally repeated flows for the quickest
  transshipment and related problems.
\newblock Masterarbeit, RWTH Aachen University, Aachen, 2024.
\newblock \href {https://doi.org/10.18154/RWTH-2024-07726}
  {\path{doi:10.18154/RWTH-2024-07726}}.

\bibitem{AMO1993}
Ravindra~K. Ahuja, Thomas~L. Magnanti, and James~B. Orlin.
\newblock {\em Network flows: theory, algorithms, and applications}.
\newblock Prentice-Hall, Inc., USA, 1993.

\bibitem{Anapolska25}
Mariia Anapolska, Emma Ahrens, Christina Büsing, Felix Engelhardt, Timo
  Gersing, Corinna Mathwieser, Sabrina Schmitz, and Sophia Wrede.
\newblock Minimum-peak-cost flows over time.
\newblock {\em Networks}, n/a(n/a).
\newblock \href {https://doi.org/10.1002/net.70001}
  {\path{doi:10.1002/net.70001}}.

\bibitem{aronson89}
Janine Aronson.
\newblock A survey of dynamic network flows.
\newblock {\em Annals of Operations Research}, 20:1--66, 12 1989.
\newblock \href {https://doi.org/10.1007/BF02216922}
  {\path{doi:10.1007/BF02216922}}.

\bibitem{BeinBT85}
Wolfgang~W. Bein, Peter Brucker, and Arie Tamir.
\newblock Minimum cost flow algorithms for series-parallel networks.
\newblock {\em Discret. Appl. Math.}, 10(2):117--124, 1985.
\newblock \href {https://doi.org/10.1016/0166-218X(85)90006-X}
  {\path{doi:10.1016/0166-218X(85)90006-X}}.

\bibitem{BodlaenderTDL08}
Hans~L. Bodlaender, Richard~B. Tan, Thomas~C. van Dijk, and Jan van Leeuwen.
\newblock {Integer Maximum Flow in Wireless Sensor Networks with Energy
  Constraint}.
\newblock In {\em Scandinavian Workshop on Algorithm Theory, {SWAT}}, pages
  102--113, 2008.
\newblock \href {https://doi.org/10.1007/978-3-540-69903-3\_11}
  {\path{doi:10.1007/978-3-540-69903-3\_11}}.

\bibitem{BurkardDK93}
Rainer~E. Burkard, Karin Dlaska, and Bettina Klinz.
\newblock The quickest flow problem.
\newblock {\em {ZOR} Methods Model. Oper. Res.}, 37(1):31--58, 1993.
\newblock \href {https://doi.org/10.1007/BF01415527}
  {\path{doi:10.1007/BF01415527}}.

\bibitem{FeketeHKK08}
S{\'{a}}ndor~P. Fekete, Alexander Hall, Ekkehard K{\"{o}}hler, and Alexander
  Kr{\"{o}}ller.
\newblock {The Maximum Energy-Constrained Dynamic Flow Problem}.
\newblock In {\em Scandinavian Workshop on Algorithm Theory {SWAT}}, pages
  114--126, 2008.
\newblock \href {https://doi.org/10.1007/978-3-540-69903-3 \_12}
  {\path{doi:10.1007/978-3-540-69903-3 \_12}}.

\bibitem{FS2003}
Lisa Fleischer and Martin Skutella.
\newblock Minimum cost flows over time without intermediate storage.
\newblock In {\em Symposium on Discrete Algorithms, {SODA}}, page 66–75,
  2003.
\newblock URL: \url{https://dl.acm.org/doi/10.5555/644108.644118}.

\bibitem{FleischerS07-quickest}
Lisa Fleischer and Martin Skutella.
\newblock Quickest flows over time.
\newblock {\em {SIAM} J. Comput.}, 36(6):1600--1630, 2007.
\newblock \href {https://doi.org/10.1137/S0097539703427215}
  {\path{doi:10.1137/S0097539703427215}}.

\bibitem{FleischerTardos98}
Lisa Fleischer and {\'{E}}va Tardos.
\newblock Efficient continuous-time dynamic network flow algorithms.
\newblock {\em Oper. Res. Lett.}, 23(3-5):71--80, 1998.
\newblock \href {https://doi.org/10.1016/S0167-6377(98)00037-6}
  {\path{doi:10.1016/S0167-6377(98)00037-6}}.

\bibitem{FF1958}
L.~R. Ford and D.~R. Fulkerson.
\newblock {Constructing Maximal Dynamic Flows from Static Flows}.
\newblock {\em Operations Research}, 6(3):419--433, 1958.
\newblock \href {https://doi.org/10.1287/opre.6.3.419}
  {\path{doi:10.1287/opre.6.3.419}}.

\bibitem{FFbook}
L.~R. Ford and D.~R. Fulkerson.
\newblock {\em {Flows in Networks}}.
\newblock Princeton University Press, 1962.

\bibitem{Gale1959}
David Gale.
\newblock Transient flows in networks.
\newblock {\em Michigan Mathematical Journal}, 6:59--63, 1959.
\newblock \href {https://doi.org/10.1307/mmj/1028998140}
  {\path{doi:10.1307/mmj/1028998140}}.

\bibitem{GottschalkKLPSW18}
Corinna Gottschalk, Arie M. C.~A. Koster, Frauke Liers, Britta Peis, Daniel
  Schmand, and Andreas Wierz.
\newblock Robust flows over time: models and complexity results.
\newblock {\em Math. Program.}, 171(1-2):55--85, 2018.
\newblock \href {https://doi.org/10.1007/S10107-017-1170-3}
  {\path{doi:10.1007/S10107-017-1170-3}}.

\bibitem{HT1987}
Horst~W. Hamacher and Suleyman Tufekci.
\newblock On the use of lexicographic min cost flows in evacuation modeling.
\newblock {\em Naval Research Logistics (NRL)}, 34(4):487--503, 1987.
\newblock \href
  {https://doi.org/10.1002/1520-6750(198708)34:4<487::AID-NAV3220340404>3.0.CO;2-9}
  {\path{doi:10.1002/1520-6750(198708)34:4<487::AID-NAV3220340404>3.0.CO;2-9}}.

\bibitem{HT2000}
Bruce Hoppe and Eva Tardos.
\newblock The quickest transshipment problem.
\newblock {\em Mathematics of Operations Research}, 25(1):36--62, 2000.
\newblock URL: \url{http://www.jstor.org/stable/3690422}.

\bibitem{KW04}
Bettina Klinz and Gerhard~J. Woeginger.
\newblock {Minimum-cost dynamic flows: The series-parallel case}.
\newblock {\em Networks}, 43(3):153--162, 2004.
\newblock \href {https://doi.org/10.1002/net.10112}
  {\path{doi:10.1002/net.10112}}.

\bibitem{KoehlerS05}
Ekkehard K{\"{o}}hler and Martin Skutella.
\newblock Flows over time with load-dependent transit times.
\newblock {\em {SIAM} J. Optim.}, 15(4):1185--1202, 2005.
\newblock \href {https://doi.org/10.1137/S1052623403432645}
  {\path{doi:10.1137/S1052623403432645}}.

\bibitem{Minieka73}
Edward Minieka.
\newblock {Maximal, Lexicographic, and Dynamic Network Flows}.
\newblock {\em Operations Research}, 21(2):517--527, 1973.
\newblock \href {https://doi.org/10.1287/OPRE.21.2.517}
  {\path{doi:10.1287/OPRE.21.2.517}}.

\bibitem{ParpaleaCiurea11}
Mircea Parpalea and E.~Ciurea.
\newblock The quickest maximum dynamic flow of minimum cost.
\newblock {\em International Journal of Applied Mathematics and Informatics},
  3:266--274, 01 2011.

\bibitem{Philpott90}
Andrew~B. Philpott.
\newblock {Continuous-Time Flows in Networks}.
\newblock {\em Math. Oper. Res.}, 15(4):640--661, 1990.
\newblock \href {https://doi.org/10.1287/MOOR.15.4.640}
  {\path{doi:10.1287/MOOR.15.4.640}}.

\bibitem{POWELL1995}
Warren~B. Powell, Patrick Jaillet, and Amedeo Odoni.
\newblock Chapter 3 stochastic and dynamic networks and routing.
\newblock In {\em Network Routing}, volume~8 of {\em Handbooks in Operations
  Research and Management Science}, pages 141--295. Elsevier, 1995.
\newblock \href {https://doi.org/10.1016/S0927-0507(05)80107-0}
  {\path{doi:10.1016/S0927-0507(05)80107-0}}.

\bibitem{RuzikaSS11}
Stefan Ruzika, Heike Sperber, and Mechthild Steiner.
\newblock Earliest arrival flows on series-parallel graphs.
\newblock {\em Networks}, 57(2):169--173, 2011.
\newblock \href {https://doi.org/10.1002/NET.20398}
  {\path{doi:10.1002/NET.20398}}.

\bibitem{SchloeterSk17}
Miriam Schl{\"{o}}ter and Martin Skutella.
\newblock Fast and memory-efficient algorithms for evacuation problems.
\newblock In {\em {ACM-SIAM} Symposium on Discrete Algorithms, {SODA} 2017},
  pages 821--840, 2017.
\newblock \href {https://doi.org/10.1137/1.9781611974782.52}
  {\path{doi:10.1137/1.9781611974782.52}}.

\bibitem{Skutella2009}
Martin Skutella.
\newblock {An Introduction to Network Flows over Time}.
\newblock In {\em {Research Trends in Combinatorial Optimization, Bonn Workshop
  on Combinatorial Optimization, November 3-7, 2008, Bonn, Germany}}, pages
  451--482. Springer, 2008.
\newblock \href {https://doi.org/10.1007/978-3-540-76796-1 \_21}
  {\path{doi:10.1007/978-3-540-76796-1 \_21}}.

\bibitem{SKUTELLA2023_quickMinTP}
Martin Skutella.
\newblock A note on the quickest minimum cost transshipment problem.
\newblock {\em Operations Research Letters}, 51(3):255--258, 2023.
\newblock \href {https://doi.org/10.1016/j.orl.2023.03.005}
  {\path{doi:10.1016/j.orl.2023.03.005}}.

\bibitem{Skutella2024}
Martin Skutella.
\newblock {\em An Introduction to Transshipments Over Time}, page 239–270.
\newblock Cambridge University Press, 2024.
\newblock \href {https://doi.org/10.1017/9781009490559.009}
  {\path{doi:10.1017/9781009490559.009}}.

\bibitem{valdesTL-SPgraph}
Jacobo Valdes, Robert~E. Tarjan, and Eugene~L. Lawler.
\newblock The recognition of series parallel digraphs.
\newblock In {\em Proceedings of the Eleventh Annual ACM Symposium on Theory of
  Computing}, STOC, page 1–12, 1979.
\newblock \href {https://doi.org/10.1145/800135.804393}
  {\path{doi:10.1145/800135.804393}}.

\bibitem{Williamson_2019}
David~P. Williamson.
\newblock {\em Network Flow Algorithms}.
\newblock Cambridge University Press, 2019.

\end{thebibliography}
\end{document}